\newtheorem{lemma}{Lemma}
\newtheorem{proposition}{Proposition}
\newtheorem{theorem}{Theorem}
\theoremstyle{definition}
\newtheorem{assumption}{Assumption}
\newtheorem{example}{Example}
\crefname{definition}{definition}{definitions}
\Crefname{definition}{Definition}{Definitions}
\crefname{corollary}{corollary}{corollaries}
\Crefname{corollary}{Corollary}{Corollaries}
\crefname{assumption}{assumption}{assumptions}
\Crefname{assumption}{Assumption}{Assumptions}
\crefname{example}{example}{examples}
\Crefname{example}{Example}{Examples}
\crefname{remark}{remark}{remarks}
\Crefname{remark}{Remark}{Remarks}
\crefname{appendix}{appendix}{appendices}
\Crefname{appendix}{Appendix}{Appendices}
\newcommand{\pp}{\mathbb{P}}
\newcommand{\E}{\mathbb{E}}
\newcommand{\var}{\text{Var}}
\DeclareMathOperator*{\argmax}{argmax}
\DeclareMathOperator*{\argmin}{argmin}
\newcommand{\veps}{\varepsilon}
\newcommand{\mf}{\mathbf}
\newcommand{\mb}{\mathbb}
\newcommand{\mc}{\mathcal}
\newcommand{\st}{\qquad\text{subject to}\qquad}
\newcommand{\qand}{\quad\text{and}\quad}
\newcommand{\ra}{\quad\Rightarrow\quad}
\newcommand{\bsni}{\bigskip\noindent}
\title{Identification and estimation of \\ dynamic random coefficient models}
\author{Wooyong Lee\thanks{Economics Discipline Group, UTS Business School, University of Technology Sydney. Address: 14-28 Ultimo Road, Ultimo, NSW 2007, Australia. Tel: (02) 9514 3074. Email: \href{mailto:wooyong.lee.econ@gmail.com}{wooyong.lee.econ@gmail.com}}}
\date{April 24, 2026}
\begin{document}

\onehalfspacing
\maketitle

\begin{abstract}

\normalsize

I study linear panel data models with predetermined regressors (such as lagged dependent variables) where coefficients are individual-specific, allowing for heterogeneity in the effects of the regressors on the dependent variable. I show that the model is not point-identified in a short panel context but rather partially identified, and I characterize the identified sets for the mean, variance, and CDF of the coefficient distribution. This characterization is general, accommodating discrete, continuous, and unbounded data, and it leads to computationally tractable estimation and inference procedures. I apply the method to study lifecycle earnings dynamics among U.S. households using the Panel Study of Income Dynamics (PSID) dataset. The results suggest the presence of unobserved heterogeneity in earnings persistence, implying that households face varying levels of earnings risk which, in turn, contribute to heterogeneity in their consumption and savings behaviors.

\bsni
Keywords: panel data regression, lagged dependent variable, heterogeneous coefficients, partial identification.

\end{abstract}

\newpage
\section{Introduction}

Linear panel data models with predetermined regressors (e.g., lagged dependent variables) are widely used in empirical research \citep{arellano1991,blundell1998}. Many of these models incorporate fixed effects, which are individual-specific intercepts that account for unobserved heterogeneity in the levels of the dependent variable. Fixed effects provide a flexible means of controlling for such heterogeneity, facilitating empirical research such as evaluation of a public policy. Fixed effects models are well understood in the context of short panel data (i.e., panel data with a small number of periods).

In addition to heterogeneity in the levels of dependent variables, there is ample evidence that individuals exhibit unobserved heterogeneity in the effects of regressors on dependent variables. For example, firms have varying degrees of labor efficiency in production; individuals experience different returns on education; and households differ in the persistence of earnings with respect to their past earnings. Such heterogeneous effects are crucial mechanisms for generating heterogeneous responses to exogenous shocks and policies, such as employment subsidies, tuition assistance, and income tax reforms. Moreover, these heterogeneous effects play a first-order role in determining outcomes in various economic models. For instance, heterogeneity in earnings persistence drives differences in the earnings risk faced by households, which, in turn, influences their heterogeneous motives for precautionary savings within lifecycle consumption models.

This paper examines a linear panel data model with predetermined regressors that permits unobserved heterogeneity in both the effects of regressors and the levels (i.e., a dynamic random coefficient model) in a short panel context. Consider a stylized example:
\begin{displaymath}
    Y_{it} = \beta_{i0} + \beta_{i1}Y_{i, t-1} + \veps_{it},
\end{displaymath}
where all variables are scalars and $\veps_{it}$ is uncorrelated with the current history of $Y_{it}$ (up to time $t-1$) but may be correlated with its future values. In this model, both the coefficient $\beta_{i1}$ and the intercept $\beta_{i0}$ are individual-specific, capturing heterogeneity in the effects of regressors and the levels. Moreover, the inclusion of the lagged dependent variable $Y_{i,t-1}$ as a regressor makes it a dynamic model.

Analysis of this model is challenging in short panels, as it is impossible to learn about individual values of the $\beta_i$ parameters with a small number of periods. An influential study by \citet{chamberlain1993,chamberlain2022} showed that the mean of the $\beta_i$s in dynamic random coefficient models is not point-identified, implying that it cannot be consistently estimated. Since this negative result in the 1990s, progress in the literature has been limited. \citet{arellano2012} showed that, for binary regressors, the mean of the $\beta_i$s for certain subpopulations is identifiable and thus consistently estimable, but they did not establish a general identification result applicable to non-binary regressors. Most research on random coefficient models in short panels has focused on non-dynamic contexts \citep{chamberlain1992,wooldridge2005,arellano2012,graham2012}, which exclude important dynamic mechanisms, such as the feedback from the current dependent variable to future regressors. For instance, a firm's labor purchase decision in the following period may depend on its current output, as the firm might learn about its own labor efficiency from that output. Moreover, understanding these dynamic mechanisms is of independent interest. For example, a household's earnings persistence with respect to its past earnings is an important parameter, since high persistence increases the duration of earnings shocks, diminishing the household's ability to smooth consumption and, ultimately, impacting welfare.

This paper is, to the best of my knowledge, the first to present a general identification result for dynamic random coefficient models in a short panel context. Identification results are presented for various features of the coefficients, including their mean, variance, and cumulative distribution function (CDF). In addition, this paper proposes a computationally feasible estimation and inference procedure for these features. The procedure is then applied to investigate unobserved heterogeneity in lifecycle earnings dynamics among U.S. households using the Panel Study of Income Dynamics (PSID) dataset. These are presented in three steps.

First, I show that dynamic random coefficient models are partially identified, and I characterize finite lower and upper bounds for a class of parameters including the mean, variance, and CDF of the coefficient distribution. While these characterizations yield bounds that are not necessarily sharp, they are sufficiently general to accommodate discrete, continuous, or unbounded data. Moreover, for the mean of the coefficient distribution, the characterization yields a simple closed-form expression for its bounds, %
which clearly demonstrates that the bounds remain finite even when the data are unbounded, provided that certain moments of the data are finite. These results are obtained by recasting the identification problem as a linear programming problem \citep{honore2006,honore2006b,mogstad2018,torgovitsky2019}, which becomes infinite-dimensional when the data or the coefficients are continuous. I then employ the dual representation of infinite-dimensional linear programming \citep{galichon2009,schennach2014} to derive the bounds for the parameters of interest.

Second, I propose computationally efficient estimation and inference procedures for the bounds. For the mean of the coefficient distribution, the closed-form expressions for its lower and upper bounds yield a simple and easy-to-implement estimation and inference procedure. In particular, I adopt the approach of \citet{stoye2020} and develop a simple procedure for constructing confidence intervals that are not only valid but also robust to overidentification and model misspecification. For other features of the coefficient distribution, such as the variance and the CDF, I use the approach of \citet{andrews2017}, which performs inference on a continuum of moment inequalities and includes countably many moment inequalities as a special case. Although this procedure is computationally more demanding than that for the mean parameters, it remains computationally feasible for inference on various features of the coefficients.

Third, I estimate a reduced-form lifecycle model of earnings dynamics. Lifecycle earnings processes are key inputs in various economic models, including those of lifecycle consumption dynamics \citep{hall1982,blundell2008,blundell2016,arellano2017}. Specifying an earnings process that captures features of real data is important for calibrating and drawing conclusions from these models. I investigate unobserved heterogeneity in the earnings of U.S. households using the Panel Study of Income Dynamics (PSID) dataset. \citet{guvenen2007,guvenen2009} pointed out that, when allowing for unobserved heterogeneity in the time trend of earnings (known as a heterogeneous income profile, HIP), the estimated persistence of the income process is significantly below 1, with the latter being the estimate from the model that assumes no heterogeneity in the time trend (known as a restricted income profile, RIP). I extend this analysis by estimating a more general model that also permits unobserved heterogeneity in earnings persistence itself. 
I find that both the HIP and RIP specifications yield similar estimates of the average earnings persistence, with values significantly below 1. This suggests that misspecifying HIP as RIP (or vice versa) may not lead to serious model misspecification when earnings persistence is allowed to vary across households. Moreover, I find evidence of unobserved heterogeneity in earnings persistence itself, implying that households face different levels of earnings risk, which in turn contributes to heterogeneity in their consumption and savings behavior.

The identification results in this paper can be extended to other structural models to accommodate heterogeneous effects. For example, these results can be applied to models with individual-specific coefficients and intercepts in probit and logit regressions\footnote{In related studies, \citet{bonhomme2023,bonhomme2025} analyzed general nonlinear panel data models with sequentially exogenous regressors and individual-specific intercepts.}. They can also be extended to vector-valued regressions, such as panel data vector autoregressive (VAR) models and systems of panel data regressions.

The remainder of this paper is structured as follows. \Cref{sec.model} introduces the dynamic random coefficient model. \Cref{sec.mean,sec.general} present the identification results for the model, focusing on the mean in \Cref{sec.mean} and on more general features in \Cref{sec.general}. \Cref{sec.estimation} discusses the estimation and inference procedures, and \Cref{sec.application} applies these methods to lifecycle earnings dynamics. \Cref{sec.conclusion} concludes the paper. All proofs are provided in Online Appendix \ref{sec.proof}.

\section{Model and motivating examples}

\label{sec.model}

The dynamic random coefficient model is specified as follows:
\begin{displaymath}
    Y_{it} = Z_{it}'\gamma_i + X_{it}'\beta_{i} + \veps_{it}, \qquad t=1, \ldots, T,
\end{displaymath}
where $i$ is an index of individuals, $T$ is the length of panel data, $(Y_{it}, Z_{it}, X_{it}) \in \mb{R} \times \mb{R}^q \times \mb{R}^p$ are observed real vectors at time $t \in \{1, \ldots, T\}$, and $\veps_{it} \in \mb{R}$ is the idiosyncratic error at time $t$. I assume
\begin{displaymath}
    \E(\veps_{it}|\gamma_i, \beta_i, Z_{i1}, \ldots, Z_{iT}, X_{i1}, \ldots, X_{it}) = 0,
\end{displaymath}
which states that $\veps_{it}$ is mean-independent of the full history of $\{Z_{it}\}$ (i.e., strict exogeneity) and of the current history of $\{X_{it}\}$ (i.e., sequential exogeneity). The inclusion of a sequentially exogenous regressor $\{X_{it}\}$ makes it a dynamic model. For example, the lagged dependent variable $Y_{i,t-1}$ can be included in $X_{it}$.

Let $R_{it} = (Z_{it}', X_{it}')'$ be the vector of regressors at time $t$, and let $B_i = (\gamma_i', \beta_i')'$ be the vector of random coefficients. In addition, let $Y_i = (Y_{i1}, \ldots, Y_{iT})$ be the full history of $\{Y_{it}\}$, and let $Y_i^t = (Y_{i1}, \ldots, Y_{it})$ be the history of $\{Y_{it}\}$ up to time $t$. Define $X_i$, $X_i^t$, $Z_i$, $Z_i^t$ similarly. With these definitions, I concisely write the model as:
\begin{equation}
    Y_{it} = R_{it}'B_i + \veps_{it}, \quad t=1, \ldots, T,
    \label{eq.crc}
\end{equation}
and
\begin{equation}
    \E(\veps_{it}|B_i, Z_i, X_i^t) = 0.
    \label{eq.meanindep}
\end{equation}

The model is studied in a short panel context, which corresponds to the asymptotics that the number of individuals $N \rightarrow \infty$ while the number of time periods $T$ remains fixed. The random coefficients $\gamma_i$ and $\beta_i$ are unobserved random variables that follow nonparametric distributions, and they may be arbitrarily correlated with each other as well as with $(Z_i, X_{i1})$. This is how the random coefficient model extends a fixed effects model.

I summarize the variables of the model as two random vectors: the observable data $W_i = (Y_i', Z_i', X_i')' \in \mc{W}$ and the unobservable random coefficients $B_i \in \mc{B}$. Note that $(W_i, B_i)$ also summarizes $\veps_{it}$ by the relationship $\veps_{it} = Y_{it} - R_{it}'B_i$.

Given this model, I consider a parameter $\theta$ of the form
\begin{displaymath}
    \theta = \E(m(Y_i, Z_i, X_i, \gamma_i, \beta_i)) = \E(m(W_i, B_i))
\end{displaymath}
for some known function $m$. I present identification results for a generic function $m$, but I focus on the case in which $m$ is either a polynomial or an indicator function of $B_i$, which allows for computationally feasible estimation and inference. This choice of $m$ includes many important parameters of interest. For example, $\theta$ can be an element of the mean of the random coefficients $\E(B_i)$ or an element of the second moments $\E(B_iB_i')$. It can also represent the error variance $\E(\veps_{it}^2)$ because $\veps_{it}^2 = (Y_{it} - R_{it}'B_i)^2$ is a quadratic polynomial in $B_i$. Another example is the CDF of $B_i$ evaluated at $b$, in which case one sets $m = \mf{1}(B_i \leq b)$ so that $\theta = \E(\mf{1}(B_i \leq b)) = \pp(B_i \leq b)$.

\begin{example}[Household earnings]

    \label{sec.ex.income}

    One of the simplest examples of (\ref{eq.crc}) is the AR(1) model with heterogeneous coefficients:
    \begin{equation}
        Y_{it} = \gamma_i + \beta_i Y_{i,t-1} + \veps_{it},
        \label{eq.income}
    \end{equation}
    where all variables are scalars. This is a special case of (\ref{eq.crc}), with $Z_{it} = 1$ and $X_{it} = Y_{i,t-1}$.

    The AR(1) model is a popular choice for empirical specification of the lifecycle earnings process, with $Y_{it}$ representing log-earnings, an important input in models of consumption and savings behavior\footnote{In the literature, it is standard to add a transitory shock to (\ref{eq.income}).}. The earnings persistence parameter, $\beta_i$, governs the earnings risk experienced by households, which is a fundamental motive for precautionary savings. Specifying an earnings process that highlights features of real data is important for drawing conclusions from models of consumption and savings behavior. In the literature, the earnings process is often modeled as an AR(1) process with homogeneous coefficients \citep{lillard1979,blundell2013,gu2017}, or as a unit root process, i.e., an AR(1) model with $\gamma_i=0$ and $\beta_i=1$ \citep{hall1982,meghir2004,kaplan2014}.

\end{example}

\begin{example}[Household consumption behavior]

    \label{sec.ex.consumption}

    Consider a model of lifecycle consumption behavior:
    \begin{equation}
        C_{it} = \gamma_{i0} + \gamma_{i1} Y_{it} + \beta_{i} A_{it} + \nu_{it},
        \label{eq.consumption}
    \end{equation}
    where all variables are scalars. In this equation, $C_{it}$ is non-durable consumption, $Y_{it}$ is earnings, and $A_{it}$ is asset holdings at time $t$, all measured in logs. In this specification, $Y_{it}$ can be regarded as strictly exogenous, implying that future earnings are unaffected by the current consumption choice. In contrast, $A_{it}$ must be taken as sequentially exogenous, as assets and consumption are interrelated through the intertemporal budget constraint.

    The model in (\ref{eq.consumption}) can be viewed as an approximation of the consumption rule derived from a structural model \citep{blundell2016}. One parameter of interest is $\gamma_{i1}$, the elasticity of consumption with respect to earnings. This elasticity measures a household's ability to smooth consumption in response to exogenous changes in earnings, such as earnings shocks, thereby mitigating adverse impacts on household welfare. %
    Another parameter of interest is $\beta_i$, the elasticity of consumption with respect to asset holdings, which measures the household's capacity to smooth consumption in response to exogenous asset changes. Note that the model in (\ref{eq.consumption}) remains agnostic about the evolution of assets over time, i.e., it allows for a nonparametric evolution of the asset process.

\end{example}

\begin{example}[Production function]

    \label{sec.ex.production}

    An influential paper by \citet{olley1996} considered the estimation of the production function for firms operating with Cobb-Douglas technology. In their work, the following model was analyzed:
    \begin{displaymath}
        Y_{it} = \gamma_0 + \gamma_a A_{it} + \gamma_k K_{it} + \gamma_l L_{it} + \omega_{it} + \veps_{it}
    \end{displaymath}
    where $Y_{it}$ is the log-output of firm $i$ at time $t$, $A_{it}$ is the firm's age, and $K_{it}$ and $L_{it}$ are the logs of capital and labor inputs, respectively. In this model, $\omega_{it}$ and $\veps_{it}$ are productivity shocks that are unobservable to the econometrician, while the firm observes $\omega_{it}$.

    \citet{olley1996} assume that firm $i$'s investment, $I_{it}$, is a strictly increasing function of $\omega_{it}$, so that $I_{it} = g_t(\omega_{it}, A_{it}, K_{it})$. They then invert this function to obtain $\omega_{it} = h_t(I_{it}, A_{it}, K_{it})$, where $h_t = g_t^{-1}(\cdot, A_{it}, K_{it})$. When $h_t$ is specified as a series function of its arguments, for example, a linear function $h_t = h_I I_{it} + h_A A_{it} + h_K K_{it}$ (for simplicity, the coefficients do not vary with $t$), the production function becomes
    \begin{displaymath}
        Y_{it} = \gamma_0 + \tilde\gamma_a A_{it} + \tilde\gamma_k K_{it} + \gamma_L L_{it} + h_I I_{it} + \veps_{it},
    \end{displaymath}
    where $\tilde\gamma_a = \gamma_a + h_A$ and $\tilde\gamma_k = \gamma_k + h_K$. \citet{olley1996} then exploit additional moment restrictions implied by the model to separately identify $(\gamma_a, \gamma_k)$ and $(h_A, h_K)$. This approach has been extended and generalized by \citet{levinsohn2003} and \citet{ackerberg2015}.
    
    In a recent contribution, \citet{kasahara2023} estimated a version of this model using a finite mixture specification for $(\gamma_0, \gamma_a, \gamma_k, \gamma_L)$, where they found an evidence of heterogeneity in these coefficients.

\end{example}

This paper also considers an extension of (\ref{eq.crc}) that also involves regressors with homogeneous coefficients. Let $M_{it} = ({Z_{it}^{homo}}', {X_{it}^{homo}}')' \in \mb{R}^{q_m+p_m}$ be another vector of regressors, where $Z_{it}^{homo}$ is a vector of strictly exogenous regressors and $X_{it}^{homo}$ is a vector of sequentially exogenous regressors. Consider the model
\begin{equation}
    Y_{it} = R_{it}'B_i + M_{it}'\delta + \veps_{it}, \qquad t=1, \ldots, T,
    \label{eq.crc.constant}
\end{equation}
where $\delta \in \mb{R}^{q_m+p_m}$ is an unknown parameter, and assume that
\begin{equation}
    \E(\veps_{it}|B_i, Z_i, X_i^t, Z_i^{homo}, (X_i^{homo})^t) = 0,
    \label{eq.meanindep.constant}
\end{equation}
where $Z_i^{homo} = ({Z_{i1}^{homo}}', \ldots, {Z_{iT}^{homo}}')'$ is the full history and $(X_i^{homo})^t = ({X_{i1}^{homo}}', \ldots, {X_{it}^{homo}}')'$ is the history up to time $t$. While I consider (\ref{eq.crc}) as the main model of interest, I will also discuss how the results extend to the model in (\ref{eq.crc.constant}) in the context of the mean parameters.

The results of this paper also extend to a multivariate version of (\ref{eq.crc}), namely, a system of random coefficient models. For example, one can combine the models in (\ref{eq.income}) and (\ref{eq.consumption}) to develop a joint lifecycle model of earnings and consumption behavior. This multivariate model permits the coefficients from the two processes to freely correlate among themselves and with $(Y_{i0}, A_{i1})$, allowing for potential correlation between the earnings and consumption processes. A full description of the multivariate model is provided in Online Appendix \ref{sec.appendix.multivar}.

\section{Identification of the mean parameters}

\label{sec.mean}

This section and the next section present identification results for the dynamic random coefficient model defined in (\ref{eq.crc}) and (\ref{eq.meanindep}). This section focuses on the identification of the mean parameters, and the next section extends the results to a more general class of parameters. Consider the mean of the random coefficient distribution:
\begin{displaymath}
    \mu_e = \E(e_\gamma'\gamma_i + e_\beta' \beta_i) = \E(e'B_i)
\end{displaymath}
where $e_\gamma$ and $e_\beta$ are real-valued vectors chosen by the econometrician and $e = (e_\gamma', e_\beta')'$. For example, if $e_\gamma = 0$ and $e_\beta = (1,0,\ldots,0)'$, then $\mu_e$ is the mean of the first entry of $\beta_i$.

This section is organized into four subsections. In the first, I show that $\mu_e$ is generally not point-identified. In the second, I show that $\mu_e$ is partially identified, for which I derive closed-form expressions for the finite lower and upper bounds of $\mu_e$. The third subsection then derives the closed-form bounds of $\mu_e$ when the model also includes regressors with homogeneous coefficients. Lastly, the fourth subsection provides a numerical illustration on the size of the closed-form bounds presented in this section. The results presented in this section are special cases of the more general results discussed in the next section and in Online Appendix \ref{sec.appendix.gmm}.

\subsection{Failure of point identification}

\label{sec.mean.pointid}

This subsection shows that $\mu_e$ is generally not point-identified, by considering a specific example of (\ref{eq.crc}) and showing that $\mu_e$ is not point-identified in that example.

The example considered is the AR(1) model with heterogeneous coefficients in which two waves are observed:
\begin{equation}
    Y_{it} = \gamma_i + \beta_iY_{i,t-1} + \veps_{it}, \qquad \E(\veps_{it}|\gamma_i, \beta_i, Y_i^{t-1})=0, \qquad t=1,2.
    \label{eq.ar1}
\end{equation}

The following proposition states that $\E(\beta_i)$ is not point-identified in this model, which implies that there exists no consistent estimator for $\E(\beta_i)$. %

\begin{proposition}

    \label{prop.mean.failure}

    Consider the model defined in (\ref{eq.ar1}). Assume that $(Y_{i0}, Y_{i1}, Y_{i2}, \gamma_i, \beta_i) \in \mc{C}$, where $\mc{C}$ is a compact subset of $\mb{R}^5$. Also assume that $(Y_{i0}, Y_{i1}, Y_{i2}, \gamma_i, \beta_i)$ is absolutely continuous with respect to the Lebesgue measure and that its joint density is strictly positive on $\mc{C}$. Then, %
    $\E(\beta_i)$ is not point-identified.

\end{proposition}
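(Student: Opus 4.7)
The plan is to establish failure of point identification by exhibiting two admissible joint distributions $P_A$ and $P_B$ of $(Y_{i0}, Y_{i1}, Y_{i2}, \gamma_i, \beta_i)$---both satisfying the mean-independence condition in (\ref{eq.ar1}), both compactly supported with strictly positive joint density, and both inducing the same marginal on the observable $(Y_{i0}, Y_{i1}, Y_{i2})$---that deliver different values of $\E(\beta_i)$.

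The first step is to extract from the mean-independence restrictions a scalar implication that isolates the role of $b(Y_{i0}, Y_{i1}) := \E(\beta_i \mid Y_{i0}, Y_{i1})$. Taking $\E(\cdot \mid Y_{i0}, Y_{i1})$ in $\E(Y_{i2} \mid \gamma_i, \beta_i, Y_{i0}, Y_{i1}) = \gamma_i + \beta_i Y_{i1}$ and $\E(\cdot \mid Y_{i0})$ in $\E(Y_{i1} \mid \gamma_i, \beta_i, Y_{i0}) = \gamma_i + \beta_i Y_{i0}$, then combining the two via the law of iterated expectations, yields
\begin{displaymath}
    \E\bigl[(Y_{i1} - Y_{i0})\, b(Y_{i0}, Y_{i1}) \bigm| Y_{i0}\bigr] \;=\; \E[Y_{i2} - Y_{i1} \mid Y_{i0}] \qquad \text{a.s.}
\end{displaymath}
For each value of $Y_{i0}$ this is one scalar linear restriction on the function $Y_{i1} \mapsto b(Y_{i0}, Y_{i1})$, so its solution set is an infinite-dimensional affine subspace.

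The second step is to show this translates into genuine indeterminacy of $\E(b) = \E(\beta_i)$. Given any admissible $b_A$, I would set $b_B := b_A + \phi$ with $\phi(y_0, y_1) := 1 - c(y_0)(y_1 - y_0)$ and $c(y_0) := \E(Y_{i1} - y_0 \mid Y_{i0} = y_0) / \E((Y_{i1} - y_0)^2 \mid Y_{i0} = y_0)$. By the choice of $c$, $\E[(Y_{i1} - Y_{i0})\phi \mid Y_{i0}] = 0$ a.s., so $b_B$ also solves the aggregate restriction, while
\begin{displaymath}
    \E(\phi) \;=\; 1 - \E\!\left(\frac{[\E(Y_{i1} - Y_{i0} \mid Y_{i0})]^2}{\E((Y_{i1} - Y_{i0})^2 \mid Y_{i0})}\right) \;>\; 0,
\end{displaymath}
because Jensen's inequality is strict whenever the conditional distribution of $Y_{i1}$ given $Y_{i0}$ is non-degenerate---a property guaranteed by the strictly-positive-density assumption on the compact set $\mc{C}$.

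The main technical obstacle is the final step: realizing each admissible $b$ as the conditional mean $\E_P(\beta_i \mid Y_{i0}, Y_{i1})$ under a joint distribution $P$ that satisfies the full mean-independence conditions (rather than merely their aggregate implication), has marginal $f^*$ on observables, and has strictly positive density on a compact support. My plan is to specify $(\gamma_i, \beta_i) = (m(Y_{i0}, Y_{i1}) + u_i^{(1)},\, b(Y_{i0}, Y_{i1}) + u_i^{(2)})$ with $m := \E(Y_{i2} \mid Y_{i0}, Y_{i1}) - Y_{i1}\, b$ and $(u_i^{(1)}, u_i^{(2)})$ a compactly-supported, mean-zero auxiliary noise independent of the observables. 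Under this construction the conditional density of $Y_{i1}$ given $(\gamma_i, \beta_i, Y_{i0})$ under $P$ is proportional to $p_u(\gamma_i - m(Y_{i0}, Y_{i1}),\, \beta_i - b(Y_{i0}, Y_{i1}))\, f^*(Y_{i1} \mid Y_{i0})$, and the crux is choosing the density $p_u$ so that the induced conditional mean of $Y_{i1}$ equals $\gamma_i + \beta_i Y_{i0}$, with an analogous verification for period~$2$. For $b_B$ close to $b_A$ this measure-design step is solvable by an implicit-function-style perturbation argument around the baseline (where $p_u$ may be taken trivial). Coupling the two admissible $b$'s then produces $P_A$ and $P_B$ with different $\E(\beta_i)$, establishing non-identification.
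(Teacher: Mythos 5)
Your Steps 1 and 2 are fine: the aggregate restriction $\E[(Y_{i1}-Y_{i0})\,b(Y_{i0},Y_{i1})\mid Y_{i0}] = \E[Y_{i2}-Y_{i1}\mid Y_{i0}]$ is a correct implication of the model, your perturbation $\phi = 1 - c(Y_{i0})(Y_{i1}-Y_{i0})$ annihilates it, and the strict Cauchy--Schwarz step gives $\E(\phi)>0$ under the positive-density assumption. But this only shows that a \emph{necessary} condition on the reduced-form object $b=\E(\beta_i\mid Y_{i0},Y_{i1})$ admits multiple solutions. The actual proof of non-identification lives entirely in your Step 3: exhibiting an admissible joint law whose conditional mean is $b_B$ and which satisfies the full restrictions $\E(Y_{i1}\mid\gamma_i,\beta_i,Y_{i0})=\gamma_i+\beta_iY_{i0}$ and $\E(Y_{i2}\mid\gamma_i,\beta_i,Y_{i0},Y_{i1})=\gamma_i+\beta_iY_{i1}$, which condition on the \emph{realizations} of $(\gamma_i,\beta_i)$ and therefore constrain far more than the first conditional moment $b$. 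Your proposed lift --- $(\gamma_i,\beta_i)=(m+u^{(1)},\,b+u^{(2)})$ with independent noise, then ``choose $p_u$'' so that the Bayes-inverted conditional mean of $Y_{i1}$ given $(\gamma_i,\beta_i,Y_{i0})$ comes out right --- is a nonlinear integral equation in $p_u$ jointly with the conditional law of $Y_{i2}$; the claim that it is solvable by an implicit-function-style argument is not substantiated (surjectivity of the linearized operator is exactly what would need proving), and the proposed baseline with trivial $p_u$ is degenerate, hence outside the absolutely continuous class you are perturbing within. As written, the argument is incomplete at its decisive step.

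The paper avoids the lifting problem entirely by perturbing in structure space rather than reduced-form space: starting from the true $(\gamma_i,\beta_i,Y_{i0},Y_{i1},Y_{i2})$, it sets $\tilde\beta_i=\beta_i+k$ and $\tilde\gamma_i=\gamma_i-Y_{i1}k$, where $k$ is a function of $(\gamma_i,\beta_i,Y_{i0},Y_{i1})$ orthogonal to $Y_{i1}-Y_{i0}$ conditionally on the \emph{finer} $\sigma$-field generated by $(\gamma_i,\beta_i,Y_{i0})$ (the same ratio formula as your $c$, but with conditional moments given $(\gamma_i,\beta_i,Y_{i0})$ rather than $Y_{i0}$ alone). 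Because $\tilde\gamma_i+\tilde\beta_iY_{i1}=\gamma_i+\beta_iY_{i1}$, the period-2 restriction is preserved exactly, and the period-1 restriction for $(\tilde\gamma_i,\tilde\beta_i)$ follows by iterated expectations, since conditioning on $(\tilde\gamma_i,\tilde\beta_i,Y_{i0})$ can be refined to conditioning on $(\gamma_i,\beta_i,Y_{i0},Y_{i1})$. The new structure is admissible by construction, observationally equivalent, and has $\E(\tilde\beta_i)>\E(\beta_i)$ by the same strict inequality you invoke. If you want to rescue your route, the cleanest fix is to adopt this device: perturb the coefficients directly with such a $k$, rather than trying to re-engineer a joint law with a prescribed $\E(\beta_i\mid Y_{i0},Y_{i1})$.
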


\citet{chamberlain1993,chamberlain2022} showed that $\E(\beta_i)$ is not point-identified in a version of (\ref{eq.ar1}) where the regressor is discrete and $\veps_{it}$ is mean-independent of the regressor. \Cref{prop.mean.failure} complements this result by showing that point identification also fails under stronger assumptions and with the continuous regressor. The failure of point identification in both the discrete and continuous cases in (\ref{eq.ar1}) suggests that this is a general feature of dynamic random coefficient models.

An intuition for \Cref{prop.mean.failure} is as follows. Taking the first difference of (\ref{eq.ar1}) gives
\begin{displaymath}
    Y_{i2}-Y_{i1} = \beta_i (Y_{i1}-Y_{i0}) + \veps_{i2}-\veps_{i1}.
\end{displaymath}
Since $\veps_{i2}-\veps_{i1}$ has zero mean conditional on $(\gamma_i,\beta_i,Y_{i0})$, I obtain
\begin{displaymath}
    \E(Y_{i2}-Y_{i1}|\gamma_i,\beta_i,Y_{i0}) = \E(\beta_i (Y_{i1}-Y_{i0})|\gamma_i,\beta_i,Y_{i0}),
\end{displaymath}
which can be rewritten as
\begin{equation}
    \E(Y_{i2}-Y_{i1} - \beta_i (Y_{i1}-Y_{i0})|\gamma_i,\beta_i,Y_{i0}) = 0.
    \label{eq.ar1.diff}
\end{equation}
Now, consider a function $k(\gamma_i,\beta_i,Y_{i0},Y_{i1})$ that is orthogonal to $Y_{i1}-Y_{i0}$ conditional on $(\gamma_i,\beta_i,Y_{i0})$, i.e.,
\begin{displaymath}
    \E(k(\gamma_i,\beta_i,Y_{i0},Y_{i1})(Y_{i1}-Y_{i0})|\gamma_i,\beta_i,Y_{i0}) = 0.
\end{displaymath}
For example, in the proof of \Cref{prop.mean.failure}, I choose such a function $k$ to be
\begin{displaymath}
    k(\gamma_i, \beta_i, Y_{i0},Y_{i1}) = 1 - \frac{\E(Y_{i1}-Y_{i0}|\gamma_i, \beta_i, Y_{i0})}{\E((Y_{i1}-Y_{i0})^2|\gamma_i, \beta_i, Y_{i0})}(Y_{i1}-Y_{i0}).
\end{displaymath}
Then, it follows that (\ref{eq.ar1.diff}) holds true even if the original random coefficients $(\gamma_i, \beta_i)$ are replaced with the following modified random coefficients:
\begin{displaymath}
    \begin{aligned}
        \tilde\gamma_i &= \gamma_i - Y_{i1}k(\gamma_i,\beta_i,Y_{i0},Y_{i1}), \\
        \tilde\beta_i &= \beta_i + k(\gamma_i,\beta_i,Y_{i0},Y_{i1}).
    \end{aligned}
\end{displaymath}
To see this, note first that
\begin{displaymath}
    \begin{aligned}
        &\E(\tilde\beta_i (Y_{i1}-Y_{i0})|\gamma_i,\beta_i,Y_{i0}) \\
        &= \E(\beta_i (Y_{i1}-Y_{i0})|\gamma_i,\beta_i,Y_{i0}) + \E(k(\gamma_i,\beta_i,Y_{i0},Y_{i1})(Y_{i1}-Y_{i0})|\gamma_i,\beta_i,Y_{i0}) \\
        &= \E(\beta_i (Y_{i1}-Y_{i0})|\gamma_i,\beta_i,Y_{i0})
    \end{aligned}
\end{displaymath}
by the orthogonality property of $k$. Note also that conditioning on $(\gamma_i, \beta_i, \tilde\gamma_i, \tilde\beta_i, Y_{i0}, Y_{i1})$ is equivalent to conditioning on $(\gamma_i, \beta_i, Y_{i0}, Y_{i1})$ since $(\tilde\gamma_i, \tilde\beta_i)$ is a deterministic function of $(\gamma_i, \beta_i, Y_{i0}, Y_{i1})$. Then, by the law of iterated expectations and by (\ref{eq.ar1.diff}), it follows that (\ref{eq.ar1.diff}) holds true for the modified random coefficients $(\tilde\gamma_i, \tilde\beta_i)$:
\begin{displaymath}
    \begin{aligned}
        &\E(Y_{i2}-Y_{i1} - \tilde\beta_i (Y_{i1}-Y_{i0})|\tilde\gamma_i,\tilde\beta_i,Y_{i0}) \\
        &= \E(\E(Y_{i2}-Y_{i1} - \tilde\beta_i (Y_{i1}-Y_{i0})|\tilde\gamma_i,\tilde\beta_i, \gamma_i, \beta_i, Y_{i0},Y_{i1})|\tilde\gamma_i,\tilde\beta_i,Y_{i0}) \\
        &= \E(\E(Y_{i2}-Y_{i1} - \tilde\beta_i (Y_{i1}-Y_{i0})|\gamma_i, \beta_i, Y_{i0},Y_{i1})|\tilde\gamma_i,\tilde\beta_i,Y_{i0}) \\
        &= \E(\E(Y_{i2}-Y_{i1} - \tilde\beta_i (Y_{i1}-Y_{i0})|\gamma_i, \beta_i, Y_{i0})|\tilde\gamma_i,\tilde\beta_i,Y_{i0}) \\
        &= \E(\E(Y_{i2}-Y_{i1} - \beta_i (Y_{i1}-Y_{i0})|\gamma_i, \beta_i, Y_{i0})|\tilde\gamma_i,\tilde\beta_i,Y_{i0})
        = \E(0|\tilde\gamma_i,\tilde\beta_i,Y_{i0}) = 0.
    \end{aligned}
\end{displaymath}
However, if the function $k$ is chosen such that $\E(k(\gamma_i,\beta_i,Y_{i0},Y_{i1})) \neq 0$, which is true for the choice of $k$ above, it follows that $\E(\tilde\beta) \neq \E(\beta)$.

Another intuition for \Cref{prop.mean.failure} follows from an alternative proof of \Cref{prop.mean.failure}, which uses that $\E(\beta_i)$ is point-identified if and only if there exists an unbiased estimator of $\beta_i$ in the individual time series. I state this result as a separate lemma below, which follows as a corollary of the general result in Online Appendix \ref{sec.appendix.gmm}.

\begin{lemma}

    \label{lemma.mean.failure}

    Suppose that the assumptions of \Cref{prop.mean.failure} hold, and that the regularity conditions stated as \Cref{ass.appendix.regularity} in Online Appendix \ref{sec.appendix.gmm} hold. Then $\E(\beta_i)$ is point-identified if and only if there exists a function $S^*(Y_{i0}, Y_{i1}, Y_{i2})$, which is a linear functional on the space of finite and countably additive signed Borel measures that are absolutely continuous with respect to the Lebesgue measure, such that
    \begin{displaymath}
        \E(S^*(Y_{i0}, Y_{i1}, Y_{i2})|\beta_i) = \beta_i
    \end{displaymath}
    almost surely. When such $S^*$ exists, $\E(\beta_i)$ is identified by $\E(\beta_i) = \E(S^*(Y_{i0}, Y_{i1}, Y_{i2}))$.

\end{lemma}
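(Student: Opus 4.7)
The plan is to derive this lemma by specializing the infinite-dimensional linear programming duality developed in Online Appendix \ref{sec.appendix.gmm} to the AR(1) random coefficient model in (\ref{eq.ar1}). The identification problem is cast as an LP whose primal variable is a joint distribution of $(Y_{i0}, Y_{i1}, Y_{i2}, \gamma_i, \beta_i)$, subject to (i) its marginal over $(Y_{i0}, Y_{i1}, Y_{i2})$ equaling the observed distribution of $W_i$, and (ii) the mean-independence restrictions $\E(\veps_{it}|\gamma_i, \beta_i, Y_i^{t-1}) = 0$ for $t=1,2$. Both classes of restrictions are linear in the joint distribution, and point identification of $\E(\beta_i)$ is, by definition, the statement that the primal minimum and maximum of $\E(\beta_i)$ coincide over the feasible set.

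The ``if'' direction is immediate: given such an $S^*$, the law of iterated expectations gives $\E(S^*(Y_{i0}, Y_{i1}, Y_{i2})) = \E(\E(S^*|\beta_i)) = \E(\beta_i)$, and since the left-hand side depends only on the observable marginal of $W_i$, every joint distribution in the identified set must assign $\E(\beta_i)$ the same value, and that common value equals $\E(S^*(Y_{i0}, Y_{i1}, Y_{i2}))$.

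For the converse, the strategy is to invoke strong duality. The dual program attaches Lagrange multipliers to each conditional moment restriction; because these restrictions have conditioning sigma-algebras generated by $(\gamma_i, \beta_i)$ together with the past history of $Y_{it}$, and because the only remaining constraint couples the latent marginal to the observables through the data, the multipliers can be collapsed into a single observable function $S^*(Y_{i0}, Y_{i1}, Y_{i2})$. After this collapse, the dual feasibility condition reads exactly $\E(S^*|\beta_i) = \beta_i$ almost surely. Under \Cref{ass.appendix.regularity}, the general theorem in Online Appendix \ref{sec.appendix.gmm} rules out any duality gap and guarantees attainment of a dual optimum, so point identification of $\E(\beta_i)$ produces such an $S^*$.

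The hardest step will be the dual attainment: for a generic infinite-dimensional LP the dual optimizer is representable only as a finitely additive signed measure, and delivering a countably additive, Lebesgue-absolutely continuous representation—equivalently, an ordinary integrable function $S^*(Y_{i0}, Y_{i1}, Y_{i2})$—requires the regularity conditions bundled into \Cref{ass.appendix.regularity}. The compactness of $\mathcal{C}$ and the strictly positive joint density in \Cref{prop.mean.failure} are precisely what the general theorem needs to secure this representation, so once that theorem is invoked the lemma follows by unpacking the dual in the specific case of (\ref{eq.ar1}).
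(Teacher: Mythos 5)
Your proposal follows essentially the same route as the paper: the ``if'' direction is the same iterated-expectations argument, and the ``only if'' direction invokes the duality characterization of point identification from Online Appendix \ref{sec.appendix.gmm} (\Cref{lemma.appendix.gmm}) to obtain the almost-sure identity $f^*(Y_{i0},Y_{i1},Y_{i2}) + g_1^*(\gamma_i,\beta_i,Y_{i0})\veps_{i1} + g_2^*(\gamma_i,\beta_i,Y_{i0},Y_{i1})\veps_{i2} = \beta_i$ and then sets $S^* = f^*$. The one step you leave implicit is the mechanism by which the multiplier terms ``collapse'': it is not the dual feasibility condition itself but the result of taking $\E(\cdot \mid \beta_i)$ of this identity and using the conditional moment restrictions to conclude $\E(g_t^*\veps_{it}\mid\beta_i)=0$, which is exactly the computation in the paper's proof.
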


\Cref{prop.mean.failure} can then be proved by showing that there is no unbiased estimator of $\beta_i$ (see Online Appendix \ref{sec.appendix.proof}). The intuition for \Cref{lemma.mean.failure} is as follows. Since the distribution of $\beta_i$ is unrestricted, information about individual $\beta_i$ can only be obtained from its own time series. In a long panel context, a time series estimator of $\beta_i$ that is consistent as $T \rightarrow \infty$ would reliably provide such information. In a short panel context, however, such an estimator is not reliable because $T$ is finite. \Cref{lemma.mean.failure} shows that a time series estimator that is unbiased for finite $T$ is the only reliable source of information on $\beta_i$ when it comes to point identification in short panels.

\subsection{Partial identification}

\label{sec.mean.partialid}

A natural question following the last subsection is whether the data are at all informative about $\mu_e = \E(e'B_i)$, or whether they provide no information. This subsection shows that the data are indeed informative about $\mu_e$. I show that there exist finite bounds $L$ and $U$ such that
\begin{displaymath}
    L \leq \mu_e \leq U
\end{displaymath}
where $L$ and $U$ are estimable from the observed data.

To identify $\mu_e$, I use unconditional moment restrictions that are implications of (\ref{eq.meanindep}). It is known that the set of unconditional moment restrictions of the form
\begin{equation}
    \E(g(B_i, Z_i, X_i^t)\veps_{it}) = 0,
    \label{eq.orthogonal}
\end{equation}
indexed by a suitable class of functions $g$, is equivalent to the conditional moment restriction in (\ref{eq.meanindep}) \citep{bierens1990,%
andrews2013}. I choose the class of $g$ to be the set of polynomial functions and select a finite subset of these functions. This yields a finite number of unconditional moment restrictions that are fixed in the asymptotics that $N \rightarrow \infty$. This finite set of unconditional moment restrictions contains less information than the full conditional moment restriction in (\ref{eq.meanindep}), yielding an outer bound rather than the sharp bound, but it leads to estimation and inference procedures that are computationally tractable. In addition, the empirical application in \Cref{sec.application} shows that this finite set is sufficiently restrictive to provide informative bounds. Partial identification results based on the full conditional moment restriction in (\ref{eq.meanindep}) are presented in Online Appendix \ref{sec.appendix.gmm}.

I now study the identification of $\mu_e$. Recall the dynamic random coefficient model defined in (\ref{eq.crc}) and (\ref{eq.meanindep}):
\begin{displaymath}
    Y_{it} = R_{it}'B_i + \veps_{it}, \qquad \E(\veps_{it}|B_i, Z_i, X_i^t) = 0, \qquad t=1, \ldots, T,
\end{displaymath}
where $R_{it} = (Z_{it}', X_{it}')'$. For brevity of notation, define
\begin{displaymath}
    Y_i \equiv \left(\begin{array}{c}
        Y_{i1} \\
        \vdots \\
        Y_{iT}
    \end{array}\right) \qand
    R_i \equiv \left(\begin{array}{c}
        R_{i1}' \\
        \vdots \\
        R_{iT}'
    \end{array}\right)
\end{displaymath}
as a random vector and a random matrix stacking $Y_{it}$ and $R_{it}'$ rowwise across $t$, respectively. Consider the following assumptions:

\begin{assumption}
    \label{ass.crc}
    $(Y_i, Z_i, X_i, B_i)$ satisfies (\ref{eq.crc}) and (\ref{eq.meanindep}).
\end{assumption}

\begin{assumption}
    \label{ass.mean.nomulticollinearity}
    $R_i'R_i$ is positive definite with probability 1.
\end{assumption}

\Cref{ass.crc} states that the dynamic random coefficient model is correctly specified. \Cref{ass.mean.nomulticollinearity} is a no-multicollinearity assumption imposed on the individual time series. This is stronger than the assumption that $\E(R_i'R_i)$ is positive definite, a common assumption in standard dynamic fixed effect models. A stronger assumption is required because $B_i$ is individual-specific with an unrestricted distribution, and each $B_i$ can only be learned from its own individual data\footnote{\citet{graham2012} studied a violation of \Cref{ass.mean.nomulticollinearity} in a non-dynamic context.}~\footnote{If \Cref{ass.mean.nomulticollinearity} is violated because of $Z_i$, one may choose to consider the subpopulation where $\text{det}(Z_i'Z_i) \geq d_0$ for some $d_0 > 0$. The results of this paper extend straightforwardly to this subpopulation because $\veps_{it}$ is assumed to have zero mean conditional on $Z_i$, as stated in (\ref{eq.meanindep}). By contrast, the results of this paper do not extend to the subpopulation for which $\text{det}(X_i'X_i) \geq d_0$, since $\veps_{it}$ has zero mean only conditional on the current history of $X_{it}$, not the full history.\label{footnote.trimming}}.

I now state a theorem showing that $\mu_e$ is partially identified under \Cref{ass.crc,ass.mean.nomulticollinearity}. This theorem is a special case of \Cref{prop.gmm} presented in the next section. For brevity of notation, define
\begin{displaymath}
    \widehat{B}_i = (R_i'R_i)^{-1}R_i'Y_i, \qand B_0 = \E(R_i'R_i)^{-1}\E(R_i'Y_i).
\end{displaymath}

\begin{theorem}

    \label{prop.mean.closedform}

    Suppose that \Cref{ass.crc,ass.mean.nomulticollinearity} hold. 
    Then $L \leq \mu_e \leq U$ where
    \begin{displaymath}
        [L, U] = \left[\mathcal{B}_R - \frac{1}{2}\sqrt{\mathcal{E}_R\mathcal{D}_R}, ~~\mathcal{B}_R + \frac{1}{2}\sqrt{\mathcal{E}_R\mathcal{D}_R}\right],
    \end{displaymath}
    and
    \begin{displaymath}
        \begin{aligned}
            \mathcal{B}_R &= \frac{1}{2}e'\E(\widehat B_i) + \frac{1}{2}e'B_0, \\
            \mathcal{E}_R &= e'\E((R_i'R_i)^{-1})e - e'\E(R_i'R_i)^{-1}e, \\
            \mathcal{D}_R &= \E(Y_iR_i(R_i'R_i)^{-1}R_i'Y_i) - \E(Y_iR_i)\E(R_i'R_i)^{-1}\E(R_i'Y_i).
        \end{aligned}
    \end{displaymath}
    In addition, $\mathcal{E}_R \geq 0$ and $\mathcal{D}_R \geq 0$, and each is equal to zero if and only if $(R_i'R_i)^{-1}e$ and $(R_i'R_i)^{-1}R_i'Y_i$ are degenerate across individuals, respectively.
\end{theorem}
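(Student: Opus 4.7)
The plan is to bound $|\mu_e - \mathcal{B}_R|$ by Cauchy--Schwarz after isolating two ``residual'' random vectors that both satisfy an orthogonality condition with respect to $R_i'R_i$. Define $\delta_i := \widehat B_i - B_i = (R_i'R_i)^{-1}R_i'\veps_i$ (the within-individual OLS error) and $\zeta_i := B_i - B_0$ (individual heterogeneity around the pooled value). The unconditional moment restriction $\E(R_i'\veps_i) = 0$, obtained from (\ref{eq.meanindep}) with $g_t = R_{it}$, immediately gives $\E(R_i'R_i\delta_i) = \E(R_i'\veps_i) = 0$ and also $\E(R_i'R_i\zeta_i) = \E(R_i'R_iB_i) - \E(R_i'R_i)B_0 = \E(R_i'Y_i) - \E(R_i'Y_i) = 0$. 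A direct calculation yields the identity $2(\mu_e - \mathcal{B}_R) = \E\bigl(e'(\zeta_i - \delta_i)\bigr)$, and since $\E(R_i'R_i(\zeta_i - \delta_i)) = 0$ I can subtract $\lambda'\E(R_i'R_i(\zeta_i - \delta_i))$ for an arbitrary $\lambda \in \mb{R}^{p+q}$ to rewrite this as $2(\mu_e - \mathcal{B}_R) = \E\bigl((e - R_i'R_i\lambda)'(\zeta_i - \delta_i)\bigr)$.

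I would then apply Cauchy--Schwarz in two stages. First, the pointwise inequality $|a'v|^2 \leq (a'A^{-1}a)(v'Av)$ for positive definite $A$, applied with $A = R_i'R_i$, $a = e - R_i'R_i\lambda$, and $v = \zeta_i - \delta_i$, followed by the scalar Cauchy--Schwarz on expectations, gives
\begin{displaymath}
    |2(\mu_e - \mathcal{B}_R)| \leq \sqrt{\E\bigl((e - R_i'R_i\lambda)'(R_i'R_i)^{-1}(e - R_i'R_i\lambda)\bigr)}\cdot \sqrt{\E\bigl((\zeta_i - \delta_i)'R_i'R_i(\zeta_i - \delta_i)\bigr)}.
\end{displaymath}
Minimising the first factor over $\lambda$ is a quadratic optimisation with solution $\lambda^{*} = \E(R_i'R_i)^{-1}e$, at which the squared factor evaluates exactly to $e'\E((R_i'R_i)^{-1})e - e'\E(R_i'R_i)^{-1}e = \mathcal{E}_R$.

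The main obstacle is showing that the second factor collapses to $\mathcal{D}_R$. Writing $\zeta_i - \delta_i = (\widehat B_i - B_0) - 2\delta_i$ and expanding, the squared norm becomes $\mathcal{D}_R - 4\E((\widehat B_i - B_0)'R_i'R_i\delta_i) + 4\E(\delta_i'R_i'R_i\delta_i)$. Using $R_i'R_i\delta_i = R_i'\veps_i$ and $\widehat B_i'R_i' = Y_i'P_i$ with $P_i = R_i(R_i'R_i)^{-1}R_i'$, I decompose the cross term as $\widehat B_i'R_i'R_i\delta_i = Y_i'P_i\veps_i = B_i'R_i'\veps_i + \veps_i'P_i\veps_i$. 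The conditional restriction (\ref{eq.meanindep}) is crucial here: since $B_i'R_{it}$ is measurable with respect to $(B_i, Z_i, X_i^t)$, iterating expectations gives $\E(B_i'R_i'\veps_i) = \sum_t \E\bigl(B_i'R_{it}\E(\veps_{it}|B_i, Z_i, X_i^t)\bigr) = 0$. Combined with $\veps_i'P_i\veps_i = \delta_i'R_i'R_i\delta_i$ and $\E(B_0'R_i'R_i\delta_i) = B_0'\E(R_i'\veps_i) = 0$, the cross term matches $\E(\delta_i'R_i'R_i\delta_i)$, the expansion collapses to $\mathcal{D}_R$, and the two factors combine to give $|\mu_e - \mathcal{B}_R| \leq \tfrac{1}{2}\sqrt{\mathcal{E}_R\mathcal{D}_R}$.

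Nonnegativity and the degeneracy conditions are then standard. One has $\mathcal{E}_R \geq 0$ by Jensen's inequality applied to the operator-convex map $A \mapsto A^{-1}$ on positive definite matrices, with equality iff $(R_i'R_i)^{-1}e$ is almost surely constant; and $\mathcal{D}_R = \E\bigl((\widehat B_i - B_0)'R_i'R_i(\widehat B_i - B_0)\bigr) \geq 0$ as the expectation of a quadratic form in the positive definite matrix $R_i'R_i$, with equality iff $\widehat B_i = B_0$ almost surely, equivalently iff $\widehat B_i = (R_i'R_i)^{-1}R_i'Y_i$ is degenerate across individuals.
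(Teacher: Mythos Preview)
Your proof is correct and takes a genuinely different route from the paper's. The paper constructs a Lagrangian
\[
\mc{L}(\lambda,\mu,W_i,B_i) = e'B_i + \mu' R_i'(Y_i - R_iB_i) + \lambda\, B_i'R_i'(Y_i - R_iB_i),
\]
uses $\E(\mc{L}) = \mu_e$ together with the trivial bound $\mc{L} \leq \max_b \mc{L}(\lambda,\mu,W_i,b)$, and then solves the nested optimization $\min_{\lambda>0,\mu}\E(\max_b \mc{L})$ by sequential first-order conditions in $b$, $\mu$, $\lambda$. Your argument instead isolates the residual $\zeta_i-\delta_i$, exploits its $R_i'R_i$-orthogonality to insert the free vector $\lambda$, and closes with the matrix Cauchy--Schwarz $|a'v|^2 \leq (a'A^{-1}a)(v'Av)$ followed by scalar Cauchy--Schwarz on expectations. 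The two are dual views of the same quadratic bound---the paper's Lagrange multipliers $(\lambda,\mu)$ play the role that your $\lambda$ and the implicit Cauchy--Schwarz scaling play---but they read very differently. The paper's route makes transparent that the bounds arise as the exact min--max dual of the moment restrictions (\ref{eq.orthogonal.closedform}) and hence are sharp under those restrictions, which sets up the general Theorem~\ref{prop.gmm}. Your route is more elementary and entirely self-contained, requiring no duality machinery, and gives the $\mathcal{D}_R$ collapse by a clean cancellation rather than by solving a first-order condition in $\lambda$.

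One small remark: your appeal to operator Jensen for the equality case of $\mathcal{E}_R$ is the right conclusion but not quite the right justification, since $e'[\E((R_i'R_i)^{-1})-(\E R_i'R_i)^{-1}]e = 0$ does not by itself force the bracketed matrix to vanish. The cleaner argument is already implicit in your step~6: $\mathcal{E}_R = \min_\lambda \E\bigl((e-R_i'R_i\lambda)'(R_i'R_i)^{-1}(e-R_i'R_i\lambda)\bigr)$, which is zero iff $e = R_i'R_i\lambda^*$ almost surely, i.e.\ $(R_i'R_i)^{-1}e$ is degenerate. The paper handles both sign and equality conditions by citing a lemma of Kiefer (1959), which is essentially the discrete form of the same quadratic-minimum identity.
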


Note that $\widehat B_i$ is the individual-specific OLS estimator of $B_i$ from its individual time series, $B_0$ is the pooled OLS estimator obtained by considering $B_i$ as constant across individuals, and $R_i'R_i$ is the squared design matrix of the individual time series.

The closed-form expressions in \Cref{prop.mean.closedform} provide intuition on when $L$ and $U$ are finite. In particular, $L$ and $U$ are finite even if $(Y_i, R_i, B_i)$ are unbounded, as long as the moments involved in the expression for $[L,U]$ are finite --- that is, $\E(R_i'R_i)$, $\E((R_i'R_i)^{-1})$, $\E(R_i'Y_i)$, $\E((R_i'R_i)^{-1}R_i'Y_i)$, and $\E(Y_iR_i(R_i'R_i)^{-1}R_i'Y_i)$ are finite.

The general result in \Cref{prop.gmm} presented in the next section provides insights on what type of information is used to construct the bounds in \Cref{prop.mean.closedform}. It can be shown that, under the additional regularity conditions stated as \Cref{ass.proof.regularity} in the next section, the bounds in \Cref{prop.mean.closedform} are the sharp bounds of $\mu_e$ when the conditional moment restriction (\ref{eq.meanindep}) is replaced by the following unconditional moment restrictions:
\begin{equation}
    \E\left(\sum_{t=1}^T (R_{it}'B_i)\veps_{it}\right) = 0, \qand \E\left(\sum_{t=1}^T R_{it}\veps_{it}\right) = 0,
    \label{eq.orthogonal.closedform}
\end{equation}
where the first restriction is interpreted as that the ``error term'' ($\veps_{it}$) is orthogonal to the ``explained term'' ($R_{it}'B_i$), and the second is interpreted as that $\veps_{it}$ is orthogonal to the current-period regressors $R_{it}$, on average across $t$. For empirical applications, the amount of information contained in (\ref{eq.orthogonal.closedform}) is small relative to that in (\ref{eq.meanindep}), and its refinement will be discussed later in this subsection. From a theoretical perspective, \Cref{prop.mean.closedform} suggests that the two moment conditions in (\ref{eq.orthogonal.closedform}) are the key identifying restrictions that yield finite $L$ and $U$, out of the infinite number of unconditional moment restrictions in (\ref{eq.orthogonal}) that is equivalent to (\ref{eq.meanindep}).

I now explain the intuition behind \Cref{prop.mean.closedform}, focusing on the upper bound $U$. Consider a Lagrangian where the objective function is the parameter of interest $e'B_i$ and the constraints are the moment functions in (\ref{eq.orthogonal.closedform}):
\begin{displaymath}
    Q(\lambda, \mu, W_i, B_i) = e'B_i + \lambda \sum_{t=1}^T (R_{it}'B_i)\veps_{it} + \mu' \sum_{t=1}^TR_{it}\veps_{it},
\end{displaymath}
where $\lambda \in \mb{R}$ and $\mu$ has the same dimension as $R_{it}$. Note that $\E(Q) = \E(e'B_i) = \mu_e$ because the constraints have zero expectations by (\ref{eq.orthogonal.closedform}).

If I substitute $\veps_{it} = Y_{it} - R_{it}'B_i$ into $Q$ and use the matrix notations $R_i$ and $Y_i$, I obtain the expression:
\begin{displaymath}
    Q(\lambda, \mu, W_i, B_i) = e'B_i + \lambda Y_i'R_iB_i - \lambda B_i'(R_i'R_i)B_i + \mu'R_i'Y_i - \mu'R_i'R_iB_i.
\end{displaymath}
This is a quadratic polynomial in $B_i$ whose second-order derivative is
\begin{displaymath}
    \frac{d^2Q}{dB_idB_i'} = - 2\lambda (R_i'R_i).
\end{displaymath}
If $\lambda>0$, then this second-order derivative is a negative definite matrix, in which case $Q$ attains a global maximum at the solution to the first-order condition $dQ/dB_i = 0$. Let $P = \max_{b \in \mb{R}^{q+p}}Q(\lambda, \mu, W_i, b)$ be the resulting maximum, which is only a function of $(\lambda, \mu, W_i)$ since $B_i$ is ``maximized out.'' Then, by construction:
\begin{displaymath}
    P(\lambda, \mu, W_i) = \max_{b \in \mb{R}^{q+p}}Q(\lambda, \mu, W_i, b) \geq Q(\lambda, \mu, W_i, B_i),
\end{displaymath}
which implies
\begin{displaymath}
    \E(P(\lambda, \mu, W_i)) \geq \E(Q(\lambda, \mu, W_i, B_i)) = \mu_e.
\end{displaymath}
This shows that $\E(P)$ is an upper bound of $\mu_e$ for any choice of $\lambda > 0$ and $\mu$. I then obtain a smallest upper bound for $\mu_e$ by minimizing $\E(P)$ with respect to $\lambda > 0$ and $\mu$:
\begin{displaymath}
    \min_{\lambda > 0, ~ \mu}\E(P(\lambda, \mu, W_i)) \geq \mu_e.
\end{displaymath}
This coincides with $U$ in \Cref{prop.mean.closedform}. The lower bound can be obtained by repeating the same process with $\lambda < 0$.

As discussed earlier, the amount of information used to construct the bounds in \Cref{prop.mean.closedform}, namely the moment restrictions in (\ref{eq.orthogonal.closedform}), is small relative to that in (\ref{eq.meanindep}). I now develop a refinement of \Cref{prop.mean.closedform}.

For each $t$, choose a vector of observable random variables $S_{it}$ such that $\E(S_{it}\veps_{it}) = 0$ under (\ref{eq.meanindep}). For example, one may choose $S_{it}$ to be $S_{it} = R_{it}$ (the vector of current regressors) or $S_{it} = (Z_i', {X_i^t}')'$ (the vector of the full history of $Z_{it}$ and the current history of $X_{it}$). One may also choose $S_{it}$ to include the square terms such as $X_{it}^2$ and $Z_{it}^2$. The dimension of $S_{it}$ is allowed to vary across $t$. Consider the following assumption:

\begin{assumption}
    \label{ass.mean.nomulticollinearity.s}
    For every nonzero vector $a = (a_1', \ldots, a_T')'$, $\pp(\sum_{t=1}^T R_{it} S_{it}' a_t \neq 0) > 0$.
\end{assumption}

Recall that one chooses $S_{it}$. Assumption 3 is a regularity condition requiring that each entry of $\E(S_{it}\veps_{it}) = 0$, for $t = 1, \ldots, T$, contains distinct information. It is implied by \Cref{ass.mean.nomulticollinearity} if $S_{it} = R_{it}$ and $R_{it}$ consists of an intercept and a continuous regressor. \Cref{ass.mean.nomulticollinearity.s} is trivially violated if $S_{it}$ includes duplicate variables, for example, if $S_{it} = (X_{it}', X_{it}', Z_{it}')'$. However, it is not necessarily violated if $S_{it}$ and $S_{iv}$ for $t \neq v$ have duplicate variables. In the empirical application, I estimate the model with $R_{it} = (1, Y_{i,t-1})'$ using $S_{it} = (1, Y_{i,t-1}, \ldots, Y_{i,\max\{0,t-5\}})'$. If Assumption 3 fails to hold for a particular vector $a$, its nonzero entries indicate which entry of $S_{it}$ is redundant, and one can drop the corresponding entry.

I now state a refinement of \Cref{prop.mean.closedform} under \Cref{ass.crc,ass.mean.nomulticollinearity,ass.mean.nomulticollinearity.s}. For brevity of notation, define a block diagonal matrix
\begin{displaymath}
    S_i \equiv \left(\begin{array}{cccc}
        S_{i1} & 0 & \cdots & 0 \\
        0 & S_{i2} & \cdots & 0 \\
        \vdots & \vdots & & \vdots \\
        0 & 0 & \cdots & S_{iT}
    \end{array}\right)
\end{displaymath}
where $S_{it}$ appears in the diagonal as a column vector, so that $S_i$ has $T$ columns. In addition, define
\begin{displaymath}
    \begin{aligned}
        \mc{V}_S &= \E(S_iR_i(R_i'R_i)^{-1}R_i'S_i'), \\
        \mc{Y}_S &= \E(S_iR_i(R_i'R_i)^{-1}R_i'Y_i), \\
        \mc{P}_S &= \E(S_iR_i(R_i'R_i)^{-1}), \\
        Y_S &= \E(S_iY_i), \\
        m_0 &= \E(Y_i'R_i(R_i'R_i)^{-1}R_i'Y_i).
    \end{aligned}
\end{displaymath}

\begin{proposition}
    \label{prop.mean.closedform.refined}
    Suppose that \Cref{ass.crc,ass.mean.nomulticollinearity,ass.mean.nomulticollinearity.s} hold. Then $\mc{V}_S$ is invertible, and $L_S \leq \mu_e \leq U_S$ where
    \begin{displaymath}
        [L_S, U_S] = \left[\mathcal{B}_S - \frac{1}{2}\sqrt{\mathcal{E}_S\mathcal{D}_S}, ~~\mathcal{B}_S + \frac{1}{2}\sqrt{\mathcal{E}_S\mathcal{D}_S}\right]
    \end{displaymath}
    and
    \begin{displaymath}
        \begin{aligned}
            \mathcal{B}_S &= \frac{1}{2}e'\E(\widehat{B}_i) + \frac{1}{2}e'\mc{P}_S'\mc{V}_S^{-1}(2Y_S - \mc{Y}_S), \\
            \mathcal{E}_S &= e'\E((R_i'R_i)^{-1})e-e'\mc{P}_S'\mc{V}_S^{-1}\mc{P}_Se, \\
            \mathcal{D}_S &= m_0-(2Y_S-\mc{Y}_S)'\mc{V}_S^{-1}(2Y_S-\mc{Y}_S).
        \end{aligned}
    \end{displaymath}
\end{proposition}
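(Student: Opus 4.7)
The plan is to adapt the Lagrangian argument sketched for \Cref{prop.mean.closedform} to the richer instrument vector $S_{it}$. I would first dispatch the invertibility of $\mc{V}_S$: for any vector $a$ conformable with $S_i$, the quadratic form equals $a'\mc{V}_S a = \E\|(R_i'R_i)^{-1/2}R_i'S_i'a\|^2 \geq 0$ and vanishes only when $R_i'S_i' a = \sum_{t=1}^{T} R_{it} S_{it}' a_t = 0$ almost surely, which is ruled out for nonzero $a$ by \Cref{ass.mean.nomulticollinearity.s}. Hence $\mc{V}_S$ is positive definite.

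For the bounds, I would introduce the Lagrangian
\begin{displaymath}
    Q(\lambda, \mu, W_i, B_i) = e' B_i + \lambda \sum_{t=1}^{T} (R_{it}' B_i) \veps_{it} + \mu' S_i \veps_i,
\end{displaymath}
for a scalar $\lambda$ and a vector $\mu$ partitioned conformably with $S_i$. Because $\E(\sum_t (R_{it}' B_i) \veps_{it}) = 0$ by iterated expectations under (\ref{eq.meanindep}), and $\E(\mu' S_i \veps_i) = \sum_t \mu_t' \E(S_{it} \veps_{it}) = 0$ by the maintained choice of $S_{it}$, we obtain $\E(Q) = \mu_e$ for every $(\lambda, \mu)$. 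Substituting $\veps_i = Y_i - R_i B_i$, $Q$ becomes a quadratic in $B_i$ with Hessian $-2\lambda R_i' R_i$, which by \Cref{ass.mean.nomulticollinearity} is almost surely negative definite for $\lambda > 0$ and positive definite for $\lambda < 0$.

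Fixing $\lambda > 0$, I would solve the pointwise first-order condition to get the maximizer $B_i^{*} = \tfrac{1}{2\lambda}(R_i'R_i)^{-1} e + \tfrac{1}{2}\widehat{B}_i - \tfrac{1}{2\lambda}(R_i'R_i)^{-1} R_i' S_i' \mu$, and set $P(\lambda,\mu,W_i) = \max_b Q(\lambda,\mu,W_i,b)$, so that $\E(P(\lambda,\mu,W_i)) \geq \mu_e$. Evaluating the quadratic maximum, taking expectations, and grouping terms in $\mu$ yields
\begin{displaymath}
    \E(P(\lambda,\mu,W_i)) = c_0(\lambda) + c_1(\lambda)' \mu + \tfrac{1}{4\lambda} \mu' \mc{V}_S \mu,
\end{displaymath}
where $c_0$ gathers the terms depending only on $\lambda$ (through $\E((R_i'R_i)^{-1})$, $\E(\widehat B_i)$, and $m_0$) and $c_1(\lambda) = \tfrac{1}{2}(2Y_S - \mc{Y}_S) - \tfrac{1}{2\lambda}\mc{P}_S e$. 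Minimizing over $\mu$, justified by positive definiteness of $\mc{V}_S$, gives $\mu^{*} = -2\lambda\,\mc{V}_S^{-1} c_1(\lambda)$, and substituting back collapses the expression to $\mc{B}_S + \tfrac{1}{4\lambda}\mc{E}_S + \tfrac{\lambda}{4}\mc{D}_S$. Optimizing this over $\lambda > 0$ via AM-GM produces $U_S = \mc{B}_S + \tfrac{1}{2}\sqrt{\mc{E}_S \mc{D}_S}$. The analogous calculation with $\lambda < 0$, where $Q$ is convex in $B_i$ and its pointwise minimum yields a lower bound on $\mu_e$, gives $L_S$ after maximizing over $\lambda < 0$.

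Nonnegativity of $\mc{E}_S$ and $\mc{D}_S$—needed so that the square roots are real—falls out of the inequality $\mc{B}_S + \tfrac{1}{4\lambda}\mc{E}_S + \tfrac{\lambda}{4}\mc{D}_S \geq \mu_e$ holding for every $\lambda > 0$: sending $\lambda \to 0^{+}$ forces $\mc{E}_S \geq 0$ and $\lambda \to \infty$ forces $\mc{D}_S \geq 0$. The main obstacle is the algebraic bookkeeping that turns the individual-level pointwise maxima into the exact matrix objects $\mc{B}_S, \mc{E}_S, \mc{D}_S$ of the proposition; no step is conceptually novel once the Lagrangian is set up, but tracking the cross-term $\mc{P}_S'\mc{V}_S^{-1}(2Y_S - \mc{Y}_S)$ through the optimal $\mu^{*}$ and then through the $\lambda$-minimization demands care.
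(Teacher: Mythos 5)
Your proposal is correct and follows essentially the same route as the paper's proof: the same positive-definiteness argument for $\mc{V}_S$, the same Lagrangian with multipliers $(\lambda,\mu)$, and the same sequential optimization over $b$, then $\mu$ (yielding the same $\mu^*=\mc{V}_S^{-1}[\mc{P}_Se+\lambda(\mc{Y}_S-2Y_S)]$), then $\lambda$. Your AM--GM step and the limiting argument ($\lambda\to 0^+$, $\lambda\to\infty$) for $\mc{E}_S,\mc{D}_S\geq 0$ are equivalent to, and a slight tidying of, what the paper does.
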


Similarly to \Cref{prop.mean.closedform}, under the additional regularity conditions stated as \Cref{ass.proof.regularity} in the next section, it can be shown that the bounds in \Cref{prop.mean.closedform.refined} are the sharp bounds of $\mu_e$ if (\ref{eq.meanindep}) is replaced by the following unconditional moments:
\begin{equation}
    \E\left(\sum_{t=1}^T (R_{it}'B_i)\veps_{it}\right) = 0, \qand \E\left(S_{it}\veps_{it}\right) = 0 \quad\text{for}\quad t=1, \ldots, T,
    \label{eq.orthogonal.closedform.refined}
\end{equation}
where the first expression gives one unconditional moment restriction, and the second expression gives $\text{dim}(S_{it})$ unconditional moment restrictions for each $t$. 

While (\ref{eq.orthogonal.closedform.refined}) still contains less information than (\ref{eq.meanindep}), it is found to be sufficiently informative in practice. %
The empirical application in \Cref{sec.application} shows that \Cref{prop.mean.closedform.refined} can produce informative bounds. In addition, the closed-form expressions in \Cref{prop.mean.closedform.refined} lead to a simple estimation and inference procedure that is robust to overidentification and model misspecification, which are generally not simple to deal with in partially identified models.%

\subsection{Extension to models with homogeneous coefficients}

\label{sec.mean.homo}

In this subsection, I extend the partial identification results of the previous subsection to the model that also involves homogeneous coefficients. Recall the model introduced in (\ref{eq.crc.constant}) and (\ref{eq.meanindep.constant}):
\begin{displaymath}
    Y_{it} = R_{it}'B_i + M_{it}'\delta + \veps_{it}, \qquad \E(\veps_{it}|B_i, Z_i, X_i^t, Z_i^{homo}, (X_i^{homo})^t) = 0, \qquad t=1, \ldots, T,
\end{displaymath}
where $M_{it} = ({Z_{it}^{homo}}', {X_{it}^{homo}}')'$ denotes the regressors with homogeneous coefficients. Let $U_{it} = (R_{it}', M_{it}')'$ be the vector of all regressors, and let $M_i$ and $U_i$ be random matrices stacking $M_{it}'$ and $U_{it}'$ rowwise across $t$, hence having $T$ rows, respectively.

Similarly to the last subsection, choose a vector of observable random variables $S_{it}$ such that $\E(S_{it}\veps_{it}) = 0$ under (\ref{eq.meanindep.constant}). Consider the following modifications to \Cref{ass.crc,ass.mean.nomulticollinearity} and the restatement of \Cref{ass.mean.nomulticollinearity.s}.

\begin{assumption}
    \label{ass.crc.constant}
    $(Y_i, R_i, M_i, B_i)$ and $\delta$ satisfy (\ref{eq.crc.constant}) and (\ref{eq.meanindep.constant}).
\end{assumption}

\begin{assumption}
    \label{ass.mean.nomulticollinearity.constant}
    $R_i'R_i$ is positive definite with probability 1. In addition, for every nonzero vector $a \in \mb{R}^{q_m+p_m}$, $\pp(M_ia \notin \text{col}(R_i)) > 0$.
\end{assumption}

\begin{assumption}
    \label{ass.mean.nomulticollinearity.s.constant}
    For every nonzero vector $a = (a_1', \ldots, a_T')'$, $\pp(\sum_{t=1}^T R_{it} S_{it}' a_t \neq 0) > 0$.
\end{assumption}

\Cref{ass.mean.nomulticollinearity.constant} requires that, for every nonzero linear combination of $M_i$, it is not multicollinear with $R_i$ with positive probability, that is, at least for some individuals. Note that a necessary condition for \Cref{ass.mean.nomulticollinearity.constant} is that $\E(M_i'M_i)$ is positive definite, rather than $M_i'M_i$ itself. Therefore, the no-multicollinearity requirement for $M_{it}$ is the same as those for the regressors in standard fixed effect models.

Under these assumptions, the following proposition extends the bounds in \Cref{prop.mean.closedform.refined} to the model defined in (\ref{eq.crc.constant}) and (\ref{eq.meanindep.constant}). For brevity of notation, define
\begin{displaymath}
    \begin{aligned}
        \mc{V}_M &= \E(M_i'R_i(R_i'R_i)^{-1}R_i'M_i), & \mathcal{C} &= \E(S_iR_i(R_i'R_i)^{-1}R_i'M_i), \\
        \mc{Y}_M &= \E(M_i'R_i(R_i'R_i)^{-1}R_i'Y_i), & C &= \E(S_iM_i), \\
        \mc{P}_M &= \E(M_i'R_i(R_i'R_i)^{-1}),        & M_0 &= \E(M_i'M_i).\\
        Y_M &= \E(M_i'Y_i), \\
    \end{aligned}
\end{displaymath}

\begin{proposition}
    \label{prop.mean.closedform.refined.constant}
    Suppose that \Cref{ass.crc.constant,ass.mean.nomulticollinearity.constant,ass.mean.nomulticollinearity.s.constant} hold. Then both the matrix $\mc{V}_M-M_0$ and the matrix
    \begin{displaymath}
        \mc{V} = \mc{V}_S - (C - \mc{C})(\mc{V}_M-M_0)^{-1}(C - \mc{C})'
    \end{displaymath}
    are invertible, and $L_M \leq \mu_e \leq U_M$ where
    \begin{displaymath}
        [L_M, U_M] = \left[\mathcal{B}_M - \frac{1}{2}\sqrt{\mathcal{E}_M\mathcal{D}_M}, ~~\mathcal{B}_M + \frac{1}{2}\sqrt{\mathcal{E}_M\mathcal{D}_M}\right]
    \end{displaymath}
    and
    {\small
    \begin{displaymath}
        \begin{aligned}
            \mathcal{B}_M &= \frac{1}{2}e'\E(\widehat{B}_i) + \frac{1}{2}e'\mc{P}_M'(\mc{V}_M-M_0)^{-1}(Y_M - \mc{Y}_M) + \\
            &\frac{1}{2}(\mc{P}_Se + (C - \mc{C})(\mc{V}_M-M_0)^{-1}\mc{P}_Me)'\mc{V}^{-1} (2Y_S - \mc{Y}_S + (C - \mc{C})(\mc{V}_M-M_0)^{-1}(Y_M - \mc{Y}_M)), \\
            \mathcal{E}_M &= e'\E((R_i'R_i)^{-1})e - e'\mc{P}_M'(\mc{V}_M-M_0)^{-1}\mc{P}_Me - \\
            &(\mc{P}_Se + (C - \mc{C})(\mc{V}_M-M_0)^{-1}\mc{P}_Me)'\mc{V}^{-1}(\mc{P}_Se + (C - \mc{C})(\mc{V}_M-M_0)^{-1}\mc{P}_Me), \\
            \mathcal{D}_M &= m_0 - (Y_M - \mc{Y}_M)'(\mc{V}_M-M_0)^{-1}(Y_M - \mc{Y}_M) - \\
            (2Y_S &- \mc{Y}_S + (C - \mc{C})(\mc{V}_M-M_0)^{-1}(Y_M - \mc{Y}_M))'\mc{V}^{-1}  (2Y_S - \mc{Y}_S + (C - \mc{C})(\mc{V}_M-M_0)^{-1}(Y_M - \mc{Y}_M)).
        \end{aligned}
    \end{displaymath}
    }
\end{proposition}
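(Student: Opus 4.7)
The plan is to extend the Lagrangian duality argument underlying \Cref{prop.mean.closedform.refined} by enlarging the dual with a Lagrange multiplier $\eta$ for the additional vector moment restriction $\E(M_i'\veps_i)=0$ and adjoining $\delta$ to the primal, so that the residual $\veps_i = Y_i - R_iB_i - M_i\delta$ absorbs the homogeneous component. The nested Schur complements $\mc{V}_M-M_0$ and $\mc{V}$ in the statement will emerge from an appropriately ordered sequence of marginalizations over the extra primal and dual variables.

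\textbf{Invertibility.} Writing $\mc{V}_M-M_0 = -\E\bigl(M_i'(I_T-P_{R_i})M_i\bigr)$ with $P_{R_i}=R_i(R_i'R_i)^{-1}R_i'$, I would observe that \Cref{ass.mean.nomulticollinearity.constant} ensures $(I_T-P_{R_i})M_ia \neq 0$ with positive probability for every nonzero $a$, so $\mc{V}_M-M_0$ is negative definite and hence invertible. Then $(\mc{V}_M-M_0)^{-1}$ is also negative definite, so $(C-\mc{C})(\mc{V}_M-M_0)^{-1}(C-\mc{C})'$ is negative semidefinite; combined with $\mc{V}_S$ being positive definite (as shown in the proof of \Cref{prop.mean.closedform.refined} under \Cref{ass.mean.nomulticollinearity.s.constant}), $\mc{V}$ is positive definite and invertible.

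\textbf{Closed-form bounds.} I would form the Lagrangian
\begin{displaymath}
    Q(\lambda,\mu,\eta;W_i,B_i,\delta) = e'B_i + \lambda\sum_{t=1}^{T}(R_{it}'B_i)\veps_{it} + \mu'S_i\veps_i + \eta'M_i'\veps_i,
\end{displaymath}
with $\lambda>0$. Each of the three constraint terms has zero expectation under (\ref{eq.meanindep.constant}), so $\E(Q)=\mu_e$. Substituting $\veps_i=Y_i-R_iB_i-M_i\delta$, $Q$ is quadratic in $B_i$ with Hessian $-2\lambda R_i'R_i$, negative definite a.s.~by \Cref{ass.mean.nomulticollinearity.constant}; maximizing over $B_i$ pointwise and taking expectations yields $F(\lambda,\mu,\eta,\delta)$, a quadratic polynomial jointly in $(\mu,\eta,\delta)$ satisfying $F(\lambda,\mu,\eta,\delta_{\mathrm{true}}) \geq \mu_e$. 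I would then profile in the order: minimize over $\eta$, maximize over $\delta$, minimize over $\mu$, minimize over $\lambda$. Minimization over $\eta$ (using the FOC on the effective subspace where $\mc{V}_M$ acts nondegenerately) converts the residual $\delta$-Hessian into the Schur complement $2\lambda(M_0-M_0\mc{V}_M^{-1}M_0)$, which is symmetric and negative definite by Step 1 (since $M_0\succ\mc{V}_M$), so $\min_\eta F$ is concave in $\delta$ and $\max_\delta\min_\eta F$ is finite and bounds $\mu_e$ from above via
\begin{displaymath}
    \max_\delta\min_\eta F(\lambda,\mu,\eta,\delta) \geq \min_\eta F(\lambda,\mu,\eta,\delta_{\mathrm{true}}) \geq \mu_e.
\end{displaymath}
The Woodbury identity $(M_0-M_0\mc{V}_M^{-1}M_0)^{-1} = M_0^{-1}+(\mc{V}_M-M_0)^{-1}$ shows how the factor $(\mc{V}_M-M_0)^{-1}$ in the proposition's formulas arises naturally from this elimination. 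After these two profilings, the remaining expression is quadratic in $\mu$ with leading coefficient $\mc{V}/(2\lambda)$, positive definite by Step 1; completing the square in $\mu$ and then in $\lambda>0$ yields the closed-form upper bound $\mc{B}_M + \tfrac12\sqrt{\mc{E}_M\mc{D}_M}$, and the lower bound follows by repeating the argument with $\lambda<0$.

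\textbf{Main obstacle.} The principal conceptual difficulty is the handling of the homogeneous coefficient $\delta$. Because $\delta$ is a single population-level constant rather than an individual-level unknown, pointwise maximization over $(B_i,\delta)$ at each $i$ is inadmissible --- the individual matrix $U_i=(R_i,M_i)$ need not have full column rank almost surely under \Cref{ass.mean.nomulticollinearity.constant} alone. The resolution is to defer the treatment of $\delta$ until after integrating over $W_i$ and profiling out $\eta$, so that the population-level matrix $M_0$ replaces the individual $M_i'M_i$ in the Hessian. The sign flip from the raw $\delta$-Hessian $\tfrac{\lambda}{2}\mc{V}_M$ (merely positive semidefinite) to the post-profiling Hessian $2\lambda(M_0-M_0\mc{V}_M^{-1}M_0)$ (negative definite) is precisely the mechanism that makes the maximization over $\delta$ well-posed, and the nested Schur complement structure $(\mc{V}_M-M_0)^{-1}$ and $\mc{V}^{-1}$ appearing throughout $\mc{B}_M$, $\mc{E}_M$, $\mc{D}_M$ is the algebraic trace of this two-step population-level elimination of $\delta$ and $\eta$.
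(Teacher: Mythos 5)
Your invertibility argument for $\mc{V}_M-M_0$ and $\mc{V}$ is correct and is exactly the paper's. The gap is in the main construction. The paper does \emph{not} dualize $\E(M_i'\veps_i)=0$ with a free multiplier $\eta$; it uses the single scalar restriction $\E\bigl(\sum_t (R_{it}'B_i+M_{it}'\delta)\veps_{it}\bigr)=0$, so that the coefficient on $M_i'\veps_i$ is \emph{tied} to $\lambda\delta$. The resulting profiled objective $L(\lambda,\mu,\delta)$ is then directly a quadratic in $\delta$ with Hessian $2\lambda(\mc{V}_M-M_0)\prec 0$, the min over $(\lambda,\mu)$ and max over $\delta$ are swapped by convex--concavity, and $\delta$, $\mu$, $\lambda$ are eliminated in that order, which is where every occurrence of $(\mc{V}_M-M_0)^{-1}$ in $\mc{B}_M,\mc{E}_M,\mc{D}_M$ comes from. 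Your version instead imposes the strictly larger restriction set $\{\E((R_iB_i)'\veps_i)=0,\ \E(S_i\veps_i)=0,\ \E(M_i'\veps_i)=0\}$ with $\eta$ free. That yields a valid but generically \emph{strictly tighter} bound (the paper's scalar restriction is the linear combination of yours with weights $(1,\delta)$), and its closed form is different: your own Woodbury identity shows the $\delta$-elimination produces $(M_0-M_0\mc{V}_M^{-1}M_0)^{-1}=M_0^{-1}+(\mc{V}_M-M_0)^{-1}$, which differs from the $(\mc{V}_M-M_0)^{-1}$ in the stated formulas by the surviving $M_0^{-1}$ term. So the assertion that your elimination ``yields the closed-form upper bound $\mc{B}_M+\tfrac12\sqrt{\mc{E}_M\mc{D}_M}$'' cannot be right; completing your algebra gives different expressions, and showing that those are dominated by the stated $U_M$ would require carrying out the paper's computation anyway.

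A second, subsidiary problem is that your $\eta$-profiling requires inverting $\mc{V}_M=\E\bigl(M_i'R_i(R_i'R_i)^{-1}R_i'M_i\bigr)$. The assumptions only guarantee invertibility of $\mc{V}_M-M_0$ (and of $M_0$); $\mc{V}_M$ itself can be singular, e.g.\ when some linear combination of the columns of $M_i$ is almost surely orthogonal to $\mathrm{col}(R_i)$, in which case $\min_\eta F$ is $-\infty$ for generic $\delta$ and the FOC-based elimination breaks down. The phrase ``on the effective subspace where $\mc{V}_M$ acts nondegenerately'' does not repair this, because the linear-in-$\eta$ coefficient need not lie in the range of $\mc{V}_M$ away from $\delta_{\mathrm{true}}$. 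The paper's coupling $\eta=\lambda\delta$ avoids both issues at once: no inversion of $\mc{V}_M$ is ever needed, and the concavity in $\delta$ holds immediately.
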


The empirical application in \Cref{sec.application} shows that \Cref{prop.mean.closedform.refined.constant} can produce highly informative bounds. The empirical application involves a total of 59 regressors in $M_{it}$ (and 58 in an alternative specification), demonstrating the practicality of \Cref{prop.mean.closedform.refined.constant} even when the number of regressors with homogeneous coefficients is large.

\subsection{Numerical illustration}

\label{sec.mean.numerical}

This subsection provides a numerical illustration of the sizes of the identified sets presented in the previous sections in a simple panel data model. This highlights the practical implications of considering unconditional moment restrictions instead of the conditional ones. Specifically, consider the model
\begin{displaymath}
    Y_{it} = \gamma_i + \beta_i X_{it} + \veps_{it}, \qquad t=1, \ldots, T,
\end{displaymath}
where
\begin{equation}
    \E(\veps_{it}|\gamma_i, \beta_i, X_i^{t}) = 0, \qquad t=1, \ldots, T.
    \label{eq.illustration.meanindep}
\end{equation}
For this model, I numerically compute the sharp identified set of $\E(\beta_i)$ under the conditional moment restriction in (\ref{eq.illustration.meanindep}), and compare it to the outer identified set in \Cref{prop.mean.closedform.refined}, which are based on the unconditional moment restrictions in (\ref{eq.orthogonal.closedform.refined}). Computation of the sharp identified set of $\E(\beta_i)$ is generally prohibitively expensive (see the discussion in \Cref{sec.general}), but it becomes relatively tractable when $(\gamma_i, \beta_i, X_i)$ has a small number of discrete support points, where the sharp characterization reduces to solving optimization problems over a large but finite-dimensional Euclidean spaces. 

Let $\gamma_i$ and $\beta_i$ be independent discrete random variables such that $\gamma_i \in \{-1, 0, 1\}$ with equal probabilities and $\beta_i \in \{0, 0.5, 1\}$ with equal probabilities. In addition, let $\veps_{it}$ be independent of $(\gamma_i, \beta_i)$ and $\veps_{it} \in \{-1, 0, 1\}$ with equal probabilities. Lastly, let $X_{i1} = 1$, and define for $t \geq 2$:
\begin{displaymath}
    \begin{aligned}
        X_{it} = \left\{\begin{array}{ll}
            -1 & \text{ if } Y_{i,t-1} < -1, \\
            0 & \text{ if } -1 \leq Y_{i,t-1} < 1, \\
            1 & \text{ if } Y_{i,t-1} \geq 1, \\
        \end{array}\right.
    \end{aligned}
\end{displaymath}
so that $X_{it}$ depends on $Y_{i,t-1}$.

Under this data generating process, I compute both the sharp and the outer bounds of $\E(\beta_i)$ for $T \in \{3, 4, 5\}$. I also calculate the outer bounds for $T \in \{6, 8\}$ to illustrate how the outer bound tightens as $T$ increases. I choose $S_{it} = (1, X_{i1}, \ldots, X_{it})'$ to compute the outer bounds in \Cref{prop.mean.closedform.refined}. The calculated sharp and outer bounds are presented in \Cref{table.illustration}. Although the outer bounds are wider than the sharp bounds, I will show in \Cref{sec.application} that the outer bounds remain sufficiently informative in empirical applications.%

\begin{table}[!htbp]
    \centering %
    \begin{tabular}{c c c c c c c} %
    \hline\hline %
    & $T=3$ & $T=4$ & $T=5$ & $T=6$ & $T=8$\\ %
    \hline\hline %
    Sharp  & [0.401, 0.593] & [0.452, 0.552] & [0.473, 0.532] & - & - \\ 
    Outer  & [0.216, 0.617] & [0.267, 0.613] & [0.306, 0.613] & [0.330, 0.613] & [0.368, 0.598] \\
    \hline %
    \end{tabular}
    \caption{Numerical illustration of the sharp and the outer identified sets. Sharp refers to the sharp identified set of $\E(\beta_i)$ under (\ref{eq.illustration.meanindep}), and Outer refers to the outer bounds of $\E(\beta_i)$ under (\ref{eq.orthogonal.closedform.refined}). The sharp identified sets for $T=6$ and $T=8$ are not computed because they are computationally prohibitive. Note that the data generating process implies $\E(\beta_i) = 0.5$.}
    \label{table.illustration}
\end{table}

\section{Identification of the general parameters}

\label{sec.general}

This section presents a general partial identification result for dynamic random coefficient models. This section is structured into two subsections. First, I present a general partial identification result for a generic parameter. Second, I apply this general result to derive the bounds for the variance and the CDF of the random coefficient distribution.

\subsection{Identification of the general parameters}

Recall that $W_i \in \mc{W}$ is the vector of observable variables and $B_i \in \mc{B}$ is the vector of unobservable random coefficients. I consider parameters of the form
\begin{displaymath}
    \theta = \E(m(W_i, B_i))
\end{displaymath}
where the function $m:\mc{W} \times \mc{B} \mapsto \mb{R}$ is known. I consider a generic set of unconditional moment restrictions:

\begin{assumption}
    \label{ass.gmm}
    The random vectors $(W_i, B_i)$ satisfy:
    \begin{displaymath}
        \begin{aligned}
            \E(\phi_k(W_i, B_i)) &= 0, \quad k=1, \ldots, K,
        \end{aligned}
    \end{displaymath}
    where $\phi_k:\mc{W} \times \mc{B} \mapsto \mb{R}$ are known moment functions and $K \in \mb{N}$ is the number of moment restrictions.
\end{assumption}

Note that, in the asymptotics, $K$ is fixed when $N  \rightarrow \infty$. Note also that $\veps_{it}$ does not appear in \Cref{ass.gmm} because $(W_i, B_i)$ summarizes $\veps_{it}$ by the relationship $\veps_{it} = Y_{it} - R_{it}'B_i$. More generally, without connection to random coefficient models, \Cref{ass.gmm} imposes generic unconditional moment restrictions that involve both observed and unobserved random vectors. A more general formulation that also involves conditional moment restrictions is studied in Online Appendix \ref{sec.appendix.gmm}.

I characterize the sharp identified set of $\theta$ under \Cref{ass.gmm} and the regularity conditions that are introduced below. To do so, I first recast the identification problem as a linear programming problem. I then show that its dual representation yields a tractable characterization of the identified set.

Let $P \in \mc{M}_{W \times B}$, where $\mc{M}_{W \times B}$ is the linear space of finite and countably additive signed Borel measures on $\mc{W} \times \mc{B}$, equipped with the total variation norm. Let $P_W \in \mc{M}_{W}$ be the observed marginal distribution of $W_i$. The sharp identified set $I$ of $\theta$ is \emph{defined} by:
\begin{displaymath}
    \begin{aligned}
        I \equiv \left\{ \int m(w, b) dP ~\right|~  P \in \mc{M}_{W \times B}, \quad P \geq 0, \quad
        & \int dP = 1, \\
        & \int \phi_k(w, b) dP = 0, \quad k=1, \ldots, K, \\
        & \left. \int P(w, db) = P_W(w) ~\text{ for all } w \in \mc{W} \right\}.
    \end{aligned}
\end{displaymath}
The set $I$ is the collection of all $\int m(W_i, B_i) dP$ values over $P$ such that (i) $P$ is a probability distribution of $(W_i, B_i)$, (ii) $P$ satisfies the moment restrictions, and (iii) the marginal distribution of $W_i$ implied from $P$ equals the observed distribution $P_W$. Dependence of $I$ on $m$, $P_W$, and the $\phi_k$s are suppressed in the notation.

All defining properties of $I$ are linear in $P$, which means that $I$ is a convex set in $\mb{R}$ (i.e., an interval). Therefore, $I$ can be characterized by its lower and upper bounds. The sharp lower bound $L$ of $I$ is \emph{defined} by:
\begin{equation}
    \begin{aligned}
        \min_{P \in \mc{M}_{W \times B}, ~P \geq 0} \int m(w, b) dP \st 
        & \int \phi_k(w, b) dP = 0, \quad k=1, \ldots, K, \\
        & \int P(w, db) = P_W(w) ~\text{ for all } w \in \mc{W}.
    \end{aligned}
    \label{eq.primal}
\end{equation}
Note that the constraint $\int dP = 1$ is omitted in (\ref{eq.primal}), because it is implied by the constraint $\int P(w, db) = P_W(w)$ where $P_W$ is a probability distribution.

Equation (\ref{eq.primal}) is a linear program in $P$, with the caveat that $P$ is an infinite-dimensional object. It is not a tractable characterization of $L$ for dynamic random coefficient models, in the sense that the estimation methods it imply are computationally infeasible. For example, discretizing the space of $(W_i, B_i)$ and solving the discretized problem \citep{honore2006,gunsilius2019} is computationally infeasible because the dimension of $(W_i, B_i)$ is large. Recall that $W_i$ contains the full history of regressors and dependent variables and $B_i$ contains all random coefficients. For the random coefficient model with $R$ regressors and $T$ waves, $P$ is a distribution on an $(RT+T+R)$-dimensional space.

My approach is to use the dual representation of (\ref{eq.primal}) obtained by the duality theorem for infinite-dimensional linear programming \citep{galichon2009,schennach2014}. I consider the following regularity conditions:

\begin{assumption}
    \label{ass.proof.regularity}
    The following conditions hold.
    \begin{itemize}
        \item[(i)] $\mc{W} \times \mc{B}$ is a compact set in a Euclidean space.
        \item[(ii)] $(m, \phi_1, \ldots, \phi_K)$ are bounded Borel measurable functions on $\mc{W} \times \mc{B}$.
    \end{itemize}
\end{assumption}

Under these conditions, the following theorem characterizes the sharp identified set of $\theta$ using the dual representation of (\ref{eq.primal}) and the corresponding problem for the upper bound.

\begin{theorem}
    \label{prop.gmm}
    Suppose \Cref{ass.gmm,ass.proof.regularity} hold. Let $\lambda = (\lambda_1, \ldots, \lambda_K)' \in \mb{R}^K$. Then $I = [L, U]$ where
    \begin{equation}
        L = \max_{\lambda \in \mb{R}^K}
        \E\left[ \min_{b \in \mc{B}} \left\{ m(W_i, b) + \sum_{k=1}^{K}\lambda_k \phi_k(W_i, b) \right\} \right]
        \label{eq.lb}
    \end{equation}
    and
    \begin{equation}
        U = \min_{\lambda \in \mb{R}^K}
        \E\left[ \max_{b \in \mc{B}} \left\{ m(W_i, b) + \sum_{k=1}^{K}\lambda_k \phi_k(W_i, b) \right\} \right]
        \label{eq.ub}
    \end{equation}
    provided that the optimization problems in (\ref{eq.lb}) and (\ref{eq.ub}) possess finite solutions.
\end{theorem}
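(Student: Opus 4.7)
The plan is to recast (\ref{eq.primal}) as an infinite-dimensional linear program over finite signed Borel measures on $\mc{W} \times \mc{B}$ and to derive its Lagrangian dual. I would introduce two groups of multipliers: a finite vector $\lambda \in \mb{R}^K$ for the scalar moment restrictions $\int \phi_k \, dP = 0$, and a bounded Borel function $f:\mc{W} \to \mb{R}$ for the infinite-dimensional marginal constraint $\int P(w, db) = P_W(w)$. The associated Lagrangian is
\[
    \mc{L}(P, \lambda, f) = \int \Bigl[ m(w,b) + \sum_{k=1}^K \lambda_k \phi_k(w,b) - f(w) \Bigr] dP(w, b) + \int f(w) \, dP_W(w).
\]

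The first step is to evaluate $\inf_{P \geq 0} \mc{L}(P, \lambda, f)$. Since $P$ ranges over arbitrary nonnegative measures, this infimum equals $\int f \, dP_W$ when the bracketed integrand is nonnegative on all of $\mc{W} \times \mc{B}$, and equals $-\infty$ otherwise. The Lagrangian dual of (\ref{eq.primal}) is therefore
\[
    \sup_{\lambda, f} \int f(w) \, dP_W(w) \quad \text{subject to} \quad f(w) \leq m(w,b) + \sum_{k=1}^K \lambda_k \phi_k(w,b) \text{ for all } (w,b) \in \mc{W} \times \mc{B}.
\]
For each fixed $\lambda$, the pointwise-largest feasible $f$ is $f^\lambda(w) = \min_{b \in \mc{B}} \{m(w,b) + \sum_k \lambda_k \phi_k(w,b)\}$; compactness of $\mc{B}$ and boundedness of $m$ and the $\phi_k$'s (\Cref{ass.proof.regularity}) ensure that $f^\lambda$ is a bounded Borel function of $w$, so the inner maximization over $f$ is solved by $f^\lambda$ and the dual reduces to $\sup_\lambda \E[f^\lambda(W_i)]$, which is exactly the right-hand side of (\ref{eq.lb}).

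The main obstacle is verifying strong duality, that is, that the primal value $L$ equals its dual value and that the outer supremum is attained. I would invoke the strong duality theorem for infinite-dimensional LP on Borel measures in \citet{galichon2009,schennach2014}. The key hypotheses to check are: (i) compactness of $\mc{W} \times \mc{B}$ (\Cref{ass.proof.regularity}(i)) makes the space of finite signed Borel measures the topological dual of the continuous bounded functions, enabling use of the weak-$\ast$ topology; (ii) the boundedness of $m$ and the $\phi_k$'s (\Cref{ass.proof.regularity}(ii)) renders the objective and moment functionals continuous in $P$ under this topology; and (iii) the marginal constraint confines $P$ to a Prokhorov-compact (hence weak-$\ast$ compact) set, which together with the assumed finiteness of the primal value yields existence of a primal optimum and zero duality gap. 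Weak duality --- the inequality $L \geq \sup_\lambda \E[\min_b \{\cdot\}]$ --- follows directly by plugging any feasible $P$ into the Lagrangian, so the cited theorem is only needed to close the gap in the reverse direction.

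The upper bound (\ref{eq.ub}) follows by applying the identical argument to $-m$ in place of $m$, which turns the primal minimization into maximization and, correspondingly, flips $\min_b$ to $\max_b$ and $\max_\lambda$ to $\min_\lambda$. Since every defining constraint of $I$ is linear in $P$, the attainable values of $\int m \, dP$ form a convex subset of $\mb{R}$ and hence an interval, giving $I = [L, U]$ as claimed.
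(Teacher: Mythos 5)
Your proposal is correct and follows essentially the same route as the paper: recast the identified-set bound as an infinite-dimensional linear program over measures, dualize with a finite vector of multipliers for the moment equalities and a bounded function for the marginal constraint, observe that for fixed $\lambda$ the optimal dual function is the pointwise $\min_b$ (resp. $\max_b$) of the Lagrangian integrand, and close the duality gap by an infinite-dimensional LP duality theorem (the paper invokes Theorem 4 of Anderson and Nash via the closedness of the cone $\{P \geq 0\}$ and the assumed finiteness of the dual value, while you argue via weak-$\ast$ compactness of the feasible measures; both are standard and adequate here). The remaining observations --- weak duality by direct substitution, the sign convention on $\lambda$, treating the upper bound symmetrically, and convexity of $I$ --- all match the paper's argument.
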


Note that the result in \Cref{prop.gmm} is not specific to dynamic random coefficient models. It is a general duality result for moment equality models where the moment functions involve both observables and unobservables \citep{schennach2014,li2018}.

The characterization that also involves conditional moment restrictions is developed in Online Appendix \ref{sec.appendix.gmm}. To illustrate, consider the following conditional moment restrictions:%
\begin{displaymath}
    \E(\psi_k(W_i, B_i)|\mathsf{W}_{ik}, \mathsf{B}_{ik}) = 0, \qquad k=1, \ldots, K,
\end{displaymath}
where $\mathsf{W}_{ik}$ and $\mathsf{B}_{ik}$ are subvectors of $W_i$ and $B_i$. Note that (\ref{eq.meanindep}) has $T$ moment restrictions of this type, one for each $t=1, \ldots, T$. Assume that $W_i$ and $B_i$ are absolutely continuous with respect to the Lebesgue measure, and that the regularity conditions stated as \Cref{ass.appendix.regularity} (i)-(iii) in Online Appendix \ref{sec.appendix.gmm} hold. Then, under these assumptions, \Cref{prop.appendix.gmm} in Online Appendix \ref{sec.appendix.gmm} implies that the sharp lower bound of $\theta$ is given by
\begin{equation}
    L = \max_{\{\mu_k(\textsf{w}_k,\textsf{b}_k) \in L^2(\mathsf{W}_{ik}, \mathsf{B}_{ik})\}_{k=1}^K}
    \E\left[ \min_{b \in \mc{B}} \left\{ m(W_i, b) + \sum_{k=1}^K \mu_k(\mathsf{W}_{ik}, \textsf{b}_k) \psi_k(W_i, b) \right\} \right],
    \label{eq.lb.conditional}
\end{equation}
where $\textsf{b}_k$ is the subvector of $b$ corresponding to $\textsf{B}_{ik}$ and $\mu_k(\textsf{w}_k,\textsf{b}_k)$ is a square integrable function of $(\mathsf{W}_{ik}, \mathsf{B}_{ik})$, denoted by $L^2(\mathsf{W}_{ik}, \mathsf{B}_{ik})$. Therefore, for conditional moment restrictions, the dual representation involves optimization over the functional choice variables $\{\mu_k(\textsf{w}_k,\textsf{b}_k)\}_{k=1}^K$. In general, such functional optimization is not computationally tractable because the inner optimization problem over $b$ is potentially highly nonconvex and $(W_i,B_i)$ is potentially high-dimensional. For the random coefficient model with $R$ regressors and $T$ waves, each $\mu_k$ is a function on a space of dimension at most $(RT+R)$, and one must optimize over $K$ such functions in (\ref{eq.lb.conditional}). In contrast, (\ref{eq.lb}) involves optimization over the finite-dimensional Euclidean space $\mb{R}^K$. In the previous subsection, I used a parsimonious set of moment restrictions in (\ref{eq.orthogonal.closedform.refined}) to derive closed-form bounds for the mean parameters that are computationally efficient. In the next subsection, using the same parsimonious set, I derive computationally efficient bounds of the variance and the CDF parameters. While these bounds are the outer bounds relative to the sharp bounds in (\ref{eq.lb.conditional}), I demonstrate in the empirical application in \Cref{sec.application} that they produce informative bounds in practice.

The condition that (\ref{eq.lb}) and (\ref{eq.ub}) possess finite solutions is mild due to the following key property. Define the value functions of the inner optimization problems in (\ref{eq.lb}) and (\ref{eq.ub}) as $G_L$ and $G_U$, respectively:
\begin{displaymath}
    \begin{aligned}
        G_L(\lambda, w) &= \min_{b \in \mc{B}} \left\{ m(w, b) + \sum_{k=1}^{K}\lambda_k \phi_k(w, b) \right\}, \\
        G_U(\lambda, w) &= \max_{b \in \mc{B}} \left\{ m(w, b) + \sum_{k=1}^{K}\lambda_k \phi_k(w, b) \right\}.
    \end{aligned}
\end{displaymath}
Note that, given the model ingredients $m$ and $\phi_1, \ldots, \phi_K$, these are deterministic functions of $(\lambda,w)$. They have the following key property.

\begin{proposition}
    \label{prop.convex}
    $G_L(\lambda, w)$ is globally concave in $\lambda$ for every $w$, and $G_U(\lambda, w)$ is globally convex in $\lambda$ for every $w$.
\end{proposition}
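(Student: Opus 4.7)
The plan is to recognize that the inner expressions are affine in $\lambda$ for each fixed $(w,b)$, and then invoke the standard fact that pointwise infima (suprema) of families of affine functions are concave (convex).

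More concretely, I would first fix $w \in \mc{W}$ and $b \in \mc{B}$, and define the auxiliary function
\begin{displaymath}
    h(\lambda; w, b) \;=\; m(w, b) + \sum_{k=1}^{K} \lambda_k \phi_k(w, b).
\end{displaymath}
For each fixed $(w,b)$, the function $\lambda \mapsto h(\lambda; w, b)$ is affine in $\lambda$, since $m(w,b)$ and each $\phi_k(w,b)$ are constants with respect to $\lambda$, and $\lambda$ enters only linearly through $\sum_k \lambda_k \phi_k(w,b)$. Hence $h$ is both concave and convex in $\lambda$.

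Next, I would observe that
\begin{displaymath}
    G_L(\lambda, w) = \inf_{b \in \mc{B}} h(\lambda; w, b), \qquad G_U(\lambda, w) = \sup_{b \in \mc{B}} h(\lambda; w, b).
\end{displaymath}
Fix $w$, and take any $\lambda, \lambda' \in \mb{R}^K$ and $\alpha \in [0,1]$. For any $b \in \mc{B}$, affine linearity in $\lambda$ gives
\begin{displaymath}
    h(\alpha \lambda + (1-\alpha)\lambda'; w, b) \;=\; \alpha\, h(\lambda; w, b) + (1-\alpha)\, h(\lambda'; w, b) \;\geq\; \alpha\, G_L(\lambda, w) + (1-\alpha)\, G_L(\lambda', w),
\end{displaymath}
where the inequality uses the definition of $G_L$ as the infimum. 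Taking the infimum over $b$ on the left-hand side preserves the inequality, yielding
\begin{displaymath}
    G_L(\alpha \lambda + (1-\alpha) \lambda', w) \;\geq\; \alpha\, G_L(\lambda, w) + (1-\alpha)\, G_L(\lambda', w),
\end{displaymath}
which is the defining inequality for concavity of $G_L(\cdot, w)$. The argument for $G_U$ is symmetric: replace infima by suprema and reverse the inequality, giving convexity of $G_U(\cdot, w)$.

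There is no serious obstacle here; the result is the standard convex-analysis fact that a pointwise infimum of affine functions is concave and a pointwise supremum of affine functions is convex, applied to the parametric family $\{h(\,\cdot\,; w, b) : b \in \mc{B}\}$. The only subtlety worth flagging is that concavity and convexity are preserved even when the infimum or supremum is not attained, so no additional appeal to compactness of $\mc{B}$ or measurability of $m, \phi_k$ is needed for this particular conclusion; those conditions (from \Cref{ass.proof.regularity}) are used elsewhere (e.g., in \Cref{prop.gmm}) to ensure the attainments and integrability required for the duality, not for the concavity/convexity itself.
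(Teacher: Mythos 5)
Your proof is correct and follows essentially the same route as the paper's: both arguments exploit that the inner objective is affine in $\lambda$ for each fixed $(w,b)$ and then use the standard fact that an infimum (supremum) of affine functions is concave (convex); the paper just phrases it as $\min_b(f+g)\geq\min_b f+\min_b g$ while you phrase it as a pointwise bound followed by taking the infimum. Your closing remark that compactness of $\mc{B}$ and measurability are not needed for this particular conclusion is also accurate.
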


Since concave and convex functions on $\mb{R}^K$ are continuous, (\ref{eq.lb}) and (\ref{eq.ub}) possess finite solutions whenever the optimizers in $\lambda$ lie in the interior of $\mb{R}^K$. For dynamic random coefficient models in (\ref{eq.crc}) and (\ref{eq.meanindep}), this can be achieved by a suitable choice of the moment functions $(\phi_1, \ldots, \phi_K)$ derived from (\ref{eq.meanindep}), which I illustrate in the next subsection for the variance and the CDF parameters.

Using the definitions of $G_L$ and $G_U$, the bounds in \Cref{prop.gmm} can be written as
\begin{displaymath}
    L = \max_{\lambda \in \mb{R}^K}
    \E\left[ G_L(\lambda, W_i) \right], \qand
    U = \min_{\lambda \in \mb{R}^K}
    \E\left[ G_U(\lambda, W_i) \right].
\end{displaymath}
Under suitable conditions, $G_L$ and $G_U$ are differentiable when $K=1$ \citep[Theorem 3]{milgrom2002}, which can be extended to show that $G_L$ and $G_U$ are directionally differentiable for $K > 1$. \Cref{prop.convex} then implies that the optimization problems over $\lambda$ can be solved using fast convex optimization algorithms such as gradient descent, provided that the inner optimization problems over $b$ can be solved efficiently. In the next subsection, I illustrate the choice of the moment functions for the variance and the CDF parameters that admits computationally efficient solutions to the inner optimization problems.

A direct consequence of \Cref{prop.gmm} is that $\theta$ is point-identified if and only if $L = U$. Proof of \Cref{prop.gmm} then implies a necessary and sufficient condition for point identification of $\theta$, which I state as a separate lemma below.

\begin{lemma}
    \label{lemma.gmm}
    Suppose that the assumptions of \Cref{prop.gmm} hold. Suppose also that $(W_i, B_i)$ are absolutely continuous with respect to the Lebesgue measure, and that their joint density is strictly positive on $\mc{W} \times \mc{B}$. Then $\theta$ is point-identified if and only if there exists a function $S^*$, which is a linear functional on $\mc{M}_{W}$, and real numbers $\lambda_1^*, \ldots, \lambda_{K}^* \in \mb{R}$ such that:
    \begin{displaymath}
        m(W_i, B_i) + \sum_{k=1}^{K}\lambda_k^* \phi_k(W_i, B_i) = S^*(W_i)
    \end{displaymath}
    almost surely on $\mc{W} \times \mc{B}$. When such $S^*$ exists, $\theta$ is identified by $\theta = \E(S^*(W_i))$.
\end{lemma}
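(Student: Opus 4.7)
The plan is to build on \Cref{prop.gmm}, which characterizes the identified set as $I = [L, U]$ via its dual program, so point identification of $\theta$ is equivalent to $L = U$. Since the true distribution of $(W_i, B_i)$ is always feasible in the primal, we have $L \leq \theta \leq U$, and thus the lemma amounts to showing that $L = U$ is equivalent to the existence of $(\lambda^*, S^*)$ such that $m + \sum_k \lambda_k^* \phi_k$ is independent of $b$ and equals $S^*$ as a function of $w$. The bridge between the \emph{integral} equality $L = \theta$ and the claimed \emph{pointwise} equality on $\mc{W} \times \mc{B}$ is the strict positivity of the joint density, which is where that hypothesis enters.

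For sufficiency, suppose $\lambda^* \in \mb{R}^K$ and $S^*$ satisfy $m(W_i, B_i) + \sum_k \lambda_k^* \phi_k(W_i, B_i) = S^*(W_i)$ almost surely. Since the joint density is strictly positive on $\mc{W} \times \mc{B}$, any $P$-null set is Lebesgue-null, so the identity holds for Lebesgue-a.e.\ $(w, b)$. For a.e.\ $w$ the expression $m(w, b) + \sum_k \lambda_k^* \phi_k(w, b)$ is then constant in $b$ and equals $S^*(w)$, so in the notation of the text $G_L(\lambda^*, w) = G_U(\lambda^*, w) = S^*(w)$. Substituting $\lambda^*$ into (\ref{eq.lb}) and (\ref{eq.ub}) gives $L \geq \E[S^*(W_i)] \geq U$; combined with $L \leq U$ from \Cref{prop.gmm}, this forces $L = U = \theta = \E[S^*(W_i)]$.

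For necessity, assume $L = U = \theta$ and let $\lambda^*$ be an argmax in (\ref{eq.lb}). Weak duality, applied to the true law $P$ of $(W_i, B_i)$, gives
\begin{equation*}
    \theta = \int \Bigl[m(w, b) + \sum_{k=1}^K \lambda_k^* \phi_k(w, b)\Bigr] dP(w, b) \geq \int G_L(\lambda^*, w) dP_W(w) = L = \theta,
\end{equation*}
where the first equality uses $\E[\phi_k(W_i, B_i)] = 0$ and the penultimate uses the fact that $P$ has marginal $P_W$. The inequality is therefore tight, and because the integrand dominates $G_L(\lambda^*, w)$ pointwise, the nonnegative gap $m(w, b) + \sum_k \lambda_k^* \phi_k(w, b) - G_L(\lambda^*, w)$ must vanish $P$-almost surely, hence Lebesgue-a.e.\ on $\mc{W} \times \mc{B}$ by the strict positivity of the density. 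Setting $S^*(w) := G_L(\lambda^*, w)$ yields the required identity; $S^*$ is Borel measurable and bounded by \Cref{ass.proof.regularity}, so it defines a bounded linear functional on $\mc{M}_W$ via $\mu \mapsto \int S^* d\mu$, and $\theta = L = \E[G_L(\lambda^*, W_i)] = \E[S^*(W_i)]$. The main obstacle is exactly this ``integral equality implies pointwise equality'' step; without strict positivity of the joint density it could fail, and it is precisely what encodes the intuition that, with an unrestricted distribution for $B_i$, only functions of $W_i$ alone can deliver point identification.
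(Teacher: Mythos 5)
Your proof is correct in substance and rests on the same two pillars as the paper's: the duality characterization of $[L,U]$ from \Cref{prop.gmm}, and strict positivity of the joint density to pass between ``equality of integrals'' and ``equality almost everywhere.'' The packaging differs slightly. For necessity, the paper argues by contradiction using \emph{both} dual programs in their constrained form: if no $(\lambda^*,S^*)$ achieves the identity a.e., then every feasible dual point for the lower (resp.\ upper) bound has slack on a positive-measure set, and the strictly positive density converts that slack into $L<\theta<U$. You instead argue directly via complementary slackness at the optimizer of the lower-bound dual alone: the chain $\theta=\int(m+\sum_k\lambda_k^*\phi_k)\,dP\geq\int G_L(\lambda^*,\cdot)\,dP_W=L=\theta$ is tight, so the nonnegative gap vanishes a.e.\ and $S^*=G_L(\lambda^*,\cdot)$. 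This is the contrapositive of the same mechanism, and it is a clean economy that you need only one of the two dual programs. One shared caveat: in your sufficiency step, the a.s.\ identity only gives that $m(w,\cdot)+\sum_k\lambda_k^*\phi_k(w,\cdot)$ equals $S^*(w)$ for a.e.\ $b$ given a.e.\ $w$, whereas $G_L(\lambda^*,w)$ and $G_U(\lambda^*,w)$ are extrema over \emph{all} $b\in\mc{B}$ and could in principle be perturbed by the exceptional $b$-null set, so the claim $G_L(\lambda^*,w)=G_U(\lambda^*,w)=S^*(w)$ requires either an everywhere version of the identity or an essential-extremum reading; the paper's own sufficiency argument (which simply integrates the identity against any feasible $P$) has the analogous looseness for feasible distributions that are not absolutely continuous, so this does not put you below the paper's level of rigor.
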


\Cref{lemma.gmm} states that $\theta$ is point-identified if and only if the Lagrangian reduces to a function of data only. Note that $S^*$ can be considered as an unbiased estimator because the term $\sum_{k=1}^{K}\lambda_k^* \phi_k(W_i, B_i)$ has zero expectation\footnote{Whether there exist point-identified parameters in (\ref{eq.crc}) and (\ref{eq.meanindep}) that are similar to the examples in \citet{chamberlain1993,chamberlain2022} and \citet{bonhomme2025b} remains an open question and is not pursued here.}.

Lastly, I highlight the connection between \Cref{prop.gmm} and the support function approach of \citet{beresteanu2011}. Let $\delta$ be a structural parameter and consider the moment conditions
\begin{displaymath}
    \E(\phi_k(W_i, B_i, \delta)) = 0, \quad k=1, \ldots, K.
\end{displaymath}
In what follows, I fix the value of $\delta$ and consider each $\phi_k(\cdot,\cdot,\delta)$ as a function of $(W_i, B_i)$ only. In addition, I set $m(W_i, B_i) = 0$, so that $\theta = \E(m(W_i, B_i)) = 0$. In this case, the sharp lower bound of $\theta = 0$ is obtained by specializing (\ref{eq.primal}) with $m = 0$:
\begin{displaymath}
    \begin{aligned}
        L_{primal}(\delta) = \min_{P \in \mc{M}_{W \times B}, ~P \geq 0} 0 \st 
        & \int \phi_k(w, b, \delta) dP = 0, \quad k=1, \ldots, K, \\
        & \int P(w, db) = P_W(w) ~\text{ for all } w \in \mc{W}.
    \end{aligned}
\end{displaymath}
The solution of this problem is trivially $0$, but only if there exists a probability distribution $P$ that satisfies the moment conditions. If no such $P$ exists, the problem is infeasible, and I set $L_{primal}(\delta) = \infty$. This characterization is similar in spirit to those in \citet{honore2006}, \citet{honore2006b} and \citet{molinari2008}, extended here to allow $P$ to be a continuous distribution. I can then write the identified set of $\delta$ as
\begin{displaymath}
    \{\delta ~|~ L_{primal}(\delta) = 0\}.
\end{displaymath}
\Cref{prop.gmm} then implies that the dual representation of $L_{primal}(\delta)$ is
\begin{equation}
    L_{dual}(\delta) = \max_{\lambda \in \mb{R}^K}
    \E\left[ \min_{b \in \mc{B}} \left\{ \sum_{k=1}^{K}\lambda_k \phi_k(W_i, b, \delta) \right\} \right].
    \label{eq.lb.support}
\end{equation}
Using this, the identified set of $\delta$ can also be written as $\{\delta ~|~ L_{dual}(\delta) = 0\}$. This characterization coincides with the support function characterization in \citet[Section 4]{beresteanu2011} given for regression coefficients with interval data. In particular, in their Theorem 4.1, the negative of their support function coincides with the inner objective function $\sum_{k=1}^{K}\lambda_k \phi_k(W_i, b, \delta)$ in (\ref{eq.lb.support}), and their variable $u$ coincides with the Lagrange multiplier $\lambda$ in (\ref{eq.lb.support}).

\subsection{Examples: the variance and the CDF of random coefficients}

\label{sec.general.varcdf}

In this subsection, I apply \Cref{prop.gmm} to derive the bounds for the variance and the CDF parameters. Recall the dynamic random coefficient model defined in (\ref{eq.crc}) and (\ref{eq.meanindep}):
\begin{displaymath}
    Y_{it} = R_{it}'B_i + \veps_{it}, \qquad \E(\veps_{it}|B_i, Z_i, X_i^t) = 0, \qquad t=1, \ldots, T,
\end{displaymath}
where $R_{it} = (Z_{it}', X_{it}')'$. I first consider the second moments of the random coefficients:
\begin{displaymath}
    V_e = \E(e_1'B_iB_i'e_2) = \E(B_i'e_1e_2'B_i),
\end{displaymath}
where $e_1$ and $e_2$ are real-valued constant vectors chosen by the econometrician. %

A full characterization of the bounds of $V_e$ for generic choices of $e_1$ and $e_2$ is discussed in Online Appendix \Cref{sec.appendix.var}. Here, I focus on the special case where $e_1 = e_2$ and this common vector has a single entry equal to $1$ and zeros elsewhere. In this case, $V_e$ is the second moment of a particular coefficient --- a key ingredient of the variance parameter. This particular case deserves separate discussion because its bounds can be computed more efficiently than those in the general case.

In the case where $e_1 = e_2$ and this common vector has a single entry equal to $1$ and zeros elsewhere, define $e_0 = e_1e_1'$. Then $e_0$ is a diagonal matrix which has $1$ in only one entry and zeros elsewhere. For example, if $B_i = (\beta_{i1}, \beta_{i2})'$ and $e_1 = e_2 = (0,1)'$, then
\begin{displaymath}
    e_0 = \left(\begin{array}{cc}
        0 & 0 \\
        0 & 1
    \end{array}\right), \qand
    V_e = \E(B_i'e_0B_i) = \E(\beta_{i2}^2). 
\end{displaymath}
Recall the moment restrictions used for the bounds in \Cref{prop.mean.closedform.refined}, namely those in (\ref{eq.orthogonal.closedform.refined}):
\begin{displaymath}
    \E\left(\sum_{t=1}^T (R_{it}'B_i)\veps_{it}\right) = 0, \qand \E\left(S_{it}\veps_{it}\right) = 0 \quad\text{for}\quad t=1, \ldots, T.
\end{displaymath}
Let $L = \sum_{t=1}^T \text{dim}(S_{it})$. Applying \Cref{prop.gmm} with these restrictions yields the following lower bound for $V_e$, denoted by $L_V$:
\begin{displaymath}
    \begin{aligned}
        L_V &= \max_{\lambda \in \mb{R}, ~\mu \in \mb{R}^L} 
        \E\left[ \min_{b \in \mc{B}} \left\{ b'e_0b + \lambda b'R_i'(Y_i-R_ib) + \mu'S_i(Y_i - R_ib) \right\} \right] \\
        &= \max_{\lambda \in \mb{R}, ~\mu \in \mb{R}^L} 
        \E\left[ \min_{b \in \mc{B}} \left\{ \mu'S_iY_i + (\lambda R_i'Y_i - R_i'S_i'\mu)'b - b'(\lambda R_i'R_i - e_0)b \right\} \right], \\
    \end{aligned}
\end{displaymath}

Suppose that \Cref{ass.crc,ass.mean.nomulticollinearity,ass.mean.nomulticollinearity.s} hold. Note that, in $L_V$, the objective function of the inner minimization problem is a quadratic polynomial in $b$, where the leading coefficient matrix is $-(\lambda R_i'R_i - e_0) = e_0 - \lambda R_i'R_i$. Since $e_0$ is positive semidefinite, and since $R_i'R_i$ is positive definite by \Cref{ass.mean.nomulticollinearity}, %
a finite lower bound is obtained only when $\lambda < 0$. I then obtain the following expression for the lower bound, which is a direct application of \Cref{prop.gmm} and is stated without proof.

\begin{proposition}
    \label{prop.variance.lb}
    Suppose that \Cref{ass.crc,ass.mean.nomulticollinearity,ass.mean.nomulticollinearity.s,ass.proof.regularity} hold. Then $L_V \leq V_e$ where
    \begin{displaymath}
        \begin{aligned}
            L_V &= \max_{\lambda <0, ~\mu \in \mb{R}^L} 
            \E\left[ \min_{b \in \mc{B}} \left\{ \mu'S_iY_i + (\lambda R_i'Y_i - R_i'S_i'\mu)'b - b'(\lambda R_i'R_i - e_0)b \right\} \right]. \\
        \end{aligned}
    \end{displaymath}
\end{proposition}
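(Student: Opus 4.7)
The plan is to derive $L_V \leq V_e$ as a direct specialization of \Cref{prop.gmm} applied to the objective $m(W_i, B_i) = B_i' e_0 B_i$ together with the unconditional moment restrictions in (\ref{eq.orthogonal.closedform.refined}). The preliminary step is to check that these restrictions are implied by \Cref{ass.crc}: the first equation follows from the law of iterated expectations conditioning on $(B_i, Z_i, X_i^t)$, since $R_{it}'B_i$ is measurable with respect to that sigma-field, giving $\E[(R_{it}'B_i)\veps_{it}] = \E[(R_{it}'B_i)\E(\veps_{it}\mid B_i, Z_i, X_i^t)] = 0$, and summation over $t$ yields the first claim; the second holds by the stipulated choice of $S_{it}$. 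I also verify the hypotheses of \Cref{prop.gmm}: the objective and the moment functions $\phi_0(W_i, b) = b' R_i'(Y_i - R_i b)$ and $\phi_S(W_i, b) = S_i(Y_i - R_i b)$ are polynomials of degree at most two in $(W_i, b)$, hence bounded and Borel measurable on the compact set $\mc{W}\times\mc{B}$ guaranteed by \Cref{ass.proof.regularity}, so all required regularity is in place.

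With these in place, \Cref{prop.gmm} yields that the sharp lower bound of $V_e$ under the reduced moment system (\ref{eq.orthogonal.closedform.refined}) is
\begin{displaymath}
L^{*} = \max_{\lambda \in \mb{R},~\mu \in \mb{R}^L} \E\left[\min_{b \in \mc{B}}\left\{ b' e_0 b + \lambda b' R_i'(Y_i - R_i b) + \mu' S_i(Y_i - R_i b) \right\}\right],
\end{displaymath}
and since the true $V_e$ lies in the identified set $[L^{*}, U^{*}]$ implied by (\ref{eq.orthogonal.closedform.refined}), we obtain $L^{*} \leq V_e$. Routine algebraic rearrangement of the inner Lagrangian, distributing the products and collecting terms by powers of $b$, rewrites it as
\begin{displaymath}
\mu' S_i Y_i + (\lambda R_i'Y_i - R_i'S_i'\mu)' b - b'(\lambda R_i'R_i - e_0) b,
\end{displaymath}
which matches the integrand displayed in the proposition.

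Finally, since $L_V$ restricts the outer maximization to the open half-line $\lambda < 0$, monotonicity of the supremum in its constraint set gives $L_V \leq L^{*}$, and combining with the preceding inequality delivers $L_V \leq L^{*} \leq V_e$, the claim. There is no substantive obstacle here because the proposition asserts only a one-sided inequality and restricting $\lambda$ can only weaken the bound; the only item requiring care is confirming that the quadratic growth of $m$, $\phi_0$, and $\phi_S$ is tamed by compactness of $\mc{W}\times\mc{B}$ so that \Cref{prop.gmm} applies verbatim. The more substantive observation that the $\lambda < 0$ region is in fact the one relevant for a useful bound --- because the leading matrix $e_0 - \lambda R_i'R_i$ of the inner quadratic fails to be negative semidefinite otherwise --- is useful interpretation but is not required for the proof of the inequality itself.
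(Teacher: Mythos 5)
Your proposal is correct and follows exactly the route the paper intends: the paper states \Cref{prop.variance.lb} without proof as a direct application of \Cref{prop.gmm}, and your write-up simply fills in that application (verifying the moment restrictions via iterated expectations, checking regularity on the compact support, rearranging the Lagrangian, and noting that restricting to $\lambda<0$ can only weaken a lower bound). The only cosmetic quibble is that the moment functions are quadratic in $b$ but of higher joint degree in $(W_i,b)$; this is immaterial since any polynomial is bounded on the compact set $\mc{W}\times\mc{B}$.
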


Note that the objective function in \Cref{prop.variance.lb} is concave in $(\lambda, \mu)$ by \Cref{prop.convex}. Therefore, the maximization over $(\lambda,\mu)$ can be performed efficiently using standard convex optimization methods. The inner minimization over $b$ can also be solved efficiently using quadratic optimization softwares.

An upper bound of $V_e$ can be obtained similarly, but with a stronger assumption. Specifically, by applying \Cref{prop.gmm}, I obtain the following upper bound of $V_e$:
\begin{displaymath}
    \begin{aligned}
        U_V &= \min_{\lambda \in \mb{R}, ~\mu \in \mb{R}^L} 
        \E\left[ \max_{b \in \mc{B}} \left\{ \mu'S_iY_i + (\lambda R_i'Y_i - R_i'S_i'\mu)'b - b'(\lambda R_i'R_i - e_0)b \right\} \right]. \\
    \end{aligned}
\end{displaymath}
The inner objective function is a quadratic polynomial in $b$, where the leading coefficient matrix is $-(\lambda R_i'R_i - e_0) = e_0 - \lambda R_i'R_i$. %
A finite upper bound is obtained only in the region where $e_0 - \lambda R_i'R_i$ is negative definite, i.e., all of its eigenvalues are negative. %
Then %
Weyl's inequality implies that the largest eigenvalue of $e_0 - \lambda R_i'R_i$ is bounded above by $1 - \lambda \nu$, where $\nu>0$ is the smallest eigenvalue of $R_i'R_i$. Therefore, all eigenvalues of $e_0 - \lambda R_i'R_i$ are negative if %
$\lambda > 1/\nu$.
Under this additional condition, I obtain the following upper bound of $V_e$, which is a direct application of \Cref{prop.gmm} and is stated without proof.

\begin{assumption}
    \label{ass.var.nomulticollinearity}
    There exists $\lambda_{min} > 0$ such that the smallest eigenvalue of $R_i'R_i$ is strictly larger than $1/\lambda_{min}$ almost surely.
\end{assumption}

\begin{proposition}
    \label{prop.variance.ub}
    Suppose that \Cref{ass.crc,ass.mean.nomulticollinearity,ass.mean.nomulticollinearity.s,ass.proof.regularity,ass.var.nomulticollinearity} hold. Then $V_e \leq U_V$ where
    \begin{displaymath}
        \begin{aligned}
            U_V &= \min_{\lambda \geq \lambda_{min}, ~\mu \in \mb{R}^L} 
            \E\left[ \max_{b \in \mc{B}} \left\{ \mu'S_iY_i + (\lambda R_i'Y_i - R_i'S_i'\mu)'b - b'(\lambda R_i'R_i - e_0)b \right\} \right].
        \end{aligned}
    \end{displaymath}
\end{proposition}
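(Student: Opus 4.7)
The plan is to apply Theorem 1 directly, specializing its generic Lagrangian to the choice $m(W_i, b) = b'e_0 b$ with moment functions drawn from the orthogonality restrictions in (\ref{eq.orthogonal.closedform.refined}), and then to argue that restricting the outer minimization to $\lambda \geq \lambda_{min}$ preserves a valid (possibly looser) upper bound. First, I would enumerate the $K = 1 + \sum_{t=1}^T \dim(S_{it})$ unconditional moment restrictions: one scalar restriction $\phi_0(W_i,b) = \sum_t (R_{it}'b)(Y_{it}-R_{it}'b)$ and, for each $t$ and each coordinate $k$ of $S_{it}$, a scalar restriction $\phi_{tk}(W_i,b) = S_{itk}(Y_{it}-R_{it}'b)$. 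Validity of these as zero-mean moment functions follows directly from Assumption 1 via iterated expectations conditioning on $(B_i, Z_i, X_i^t)$.

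Next, I would collect the Lagrange multiplier for $\phi_0$ as $\lambda$ and those for $\{\phi_{tk}\}$ as a vector $\mu \in \mb{R}^L$, and assemble the full Lagrangian. Using $\veps_{it} = Y_{it} - R_{it}'b$ together with the matrix notations $R_i$, $Y_i$ and the block-diagonal $S_i$, the Lagrangian is $b'e_0b + \lambda b'R_i'(Y_i-R_ib) + \mu'S_i(Y_i-R_ib)$. Grouping by degree in $b$ yields exactly the quadratic expression
\begin{displaymath}
\mu'S_iY_i + (\lambda R_i'Y_i - R_i'S_i'\mu)'b - b'(\lambda R_i'R_i - e_0)b,
\end{displaymath}
so that Theorem 1, in conjunction with Assumption 6, delivers the unrestricted dual formula $U = \min_{(\lambda,\mu) \in \mb{R} \times \mb{R}^L}\E[\max_{b\in\mc{B}} \{\cdot\}]$ as a valid upper bound of $V_e$. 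Since restricting the domain of an outer minimization can only increase its value, replacing $\mb{R} \times \mb{R}^L$ by $[\lambda_{min},\infty) \times \mb{R}^L$ gives $U_V \geq U \geq V_e$, which is precisely the claim.

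The third step justifies why this particular restriction is the natural one. The inner quadratic in $b$ has leading coefficient matrix $-(\lambda R_i'R_i - e_0) = e_0 - \lambda R_i'R_i$. Weyl's inequality gives $\lambda_{\max}(e_0 - \lambda R_i'R_i) \leq \lambda_{\max}(e_0) - \lambda\,\lambda_{\min}(R_i'R_i) = 1 - \lambda\,\lambda_{\min}(R_i'R_i)$. Under Assumption 7, $\lambda_{\min}(R_i'R_i) > 1/\lambda_{min}$ almost surely, so for every $\lambda \geq \lambda_{min}$ the leading matrix is strictly negative definite almost surely, ensuring that the inner maximizer is the unique stationary point and that the inner maximum is a smooth, integrable function of $W_i$. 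This makes the outer minimization well-posed and is what the paper flagged as the reason for imposing the restriction.

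The main obstacle is really just bookkeeping: verifying (via Weyl) that Assumption 7 suffices to place the relevant multipliers into a region where the primal-dual argument of Theorem 1 behaves sharply, and checking that Assumption 6 (compactness of $\mc{W}\times\mc{B}$ and boundedness of $m, \phi_0, \phi_{tk}$) holds for the specific $m$ and $\phi$'s chosen here. Because no new duality machinery is required, the result genuinely is \emph{a direct application of Theorem 1}, and the proof reduces to writing out the Lagrangian, identifying the restriction on $\lambda$, and invoking monotonicity of $\min$ under domain restriction.
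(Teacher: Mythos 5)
Your argument is correct and is essentially the paper's own: the paper states this proposition without proof as a direct application of the general duality result in \Cref{prop.gmm}, with the surrounding text supplying exactly the reasoning you give — weak duality makes every $(\lambda,\mu)$ yield a valid upper bound so restricting the outer minimization to $\lambda \geq \lambda_{min}$ preserves the bound, and Weyl's inequality combined with \Cref{ass.var.nomulticollinearity} guarantees $e_0 - \lambda R_i'R_i$ is negative definite on that region. The only slips are cosmetic: the duality theorem you invoke is \Cref{prop.gmm} (Theorem 2), not Theorem 1, and the regularity and eigenvalue conditions you cite as Assumptions 6 and 7 are in fact \Cref{ass.proof.regularity} and \Cref{ass.var.nomulticollinearity}.
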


Note that \Cref{ass.var.nomulticollinearity} is stronger than \Cref{ass.mean.nomulticollinearity}. While \Cref{ass.mean.nomulticollinearity} requires the smallest eigenvalue of $R_i'R_i$ to be positive almost surely, \Cref{ass.var.nomulticollinearity} further requires that it is strictly bounded away from zero\footnote{Analogously to the discussion on \Cref{ass.mean.nomulticollinearity} in Footnote \ref{footnote.trimming}, if \Cref{ass.var.nomulticollinearity} is violated because of $Z_i$, one may choose to consider the subpopulation where $\text{det}(Z_i'Z_i) \geq d_0$ for some $d_0 > 1/\lambda_{min}$.}. I also derive the bounds for $V_e$ that do not require \Cref{ass.var.nomulticollinearity} in Online Appendix \ref{sec.appendix.var}, but their estimation will involve computationally more intensive methods.

Next, I consider the CDF of the random coefficients. Consider the parameter of the form
\begin{displaymath}
    F_{e,c} = \pp(e'B_i \leq c) = \E(\textbf{1}(e'B_i \leq c))
\end{displaymath}
where $e$ is a real-valued constant vector and $c$ is a scalar. To derive the identified set of $F_{e,c}$, I consider the moment restrictions used for the bounds in \Cref{prop.mean.closedform.refined}, namely those in (\ref{eq.orthogonal.closedform.refined}). Applying \Cref{prop.gmm} with (\ref{eq.orthogonal.closedform.refined}) yields the bounds with the inner objective function
\begin{displaymath}
    \mc{L} = \textbf{1}(e'B_i \leq c) + \lambda B_i'R_i'(Y_i-R_iB_i) + \mu'S_i(Y_i-R_iB_i)
\end{displaymath}
that must be optimized over $B_i$ for fixed $(\lambda,\mu)$. Note that the indicator $\textbf{1}(e'B_i \leq c)$ partitions the support of $B_i$ into two disjoint sets, where it equals to $1$ on the set $\{B_i | e'B_i \leq c\}$ and $0$ on the set $\{e'B_i > c\}$. Moreover, within each set, $\mc{L}$ reduces to a standard quadratic polynomial in $B_i$, which can be solved efficiently. Therefore, the optimization of $\mc{L}$ over $B_i$ for fixed $(\lambda,\mu)$ can be carried out in two steps: (i) solve for the quadratic polynomial within each set, and then (ii) take the optimum between the two. 
I then obtain the following characterization for the identified set of $F_{e,c}$, which is a direct application of \Cref{prop.gmm} and is stated without proof.

\begin{proposition}
    \label{prop.cdf}
    Suppose that \Cref{ass.crc,ass.mean.nomulticollinearity,ass.mean.nomulticollinearity.s,ass.proof.regularity} hold. 
    Then $L_F \leq F_{e,b} \leq U_F$ where
    \begin{displaymath}
        L_F = \max_{\lambda < 0, ~\mu \in \mb{R}^L}
        \E\left[ G_{L,F}(W_i, \lambda, \mu) \right],
        \qand
        U_F = \min_{\lambda > 0, ~\mu \in \mb{R}^L}
        \E\left[ G_{U,F}(W_i, \lambda, \mu) \right],
    \end{displaymath}
    where
    \begin{displaymath}
        \begin{aligned}
            G_{L,F}(W_i, \lambda, \mu) = \min\Big\{ & \min_{b ~\in~ \{b \in \mc{B}~|~e'b \leq c\}} \Big[1 + \lambda b'R_i'(Y_i-R_ib) + \mu'S_i(Y_i-R_ib)\Big], \\
            & \min_{b ~\in~ \{b \in \mc{B}~|~e'b > c\}} \Big[\lambda b'R_i'(Y_i-R_ib) + \mu'S_i(Y_i-R_ib)\Big] \Big\},
        \end{aligned}
    \end{displaymath}
    and
    \begin{displaymath}
        \begin{aligned}
            G_{U,F}(W_i, \lambda, \mu) = \max\Big\{ & \max_{b ~\in~ \{b \in \mc{B}~|~e'b \leq c\}} 1 + \lambda b'R_i'(Y_i-R_ib) + \mu'S_i(Y_i-R_ib), \\
            & \max_{b ~\in~ \{b \in \mc{B}~|~e'b > c\}} \Big[\lambda b'R_i'(Y_i-R_ib) + \mu'S_i(Y_i-R_ib)\Big] \Big\}.
        \end{aligned}
    \end{displaymath}
\end{proposition}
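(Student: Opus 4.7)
The plan is to apply \Cref{prop.gmm} directly with $m(W_i, B_i) = \mathbf{1}(e'B_i \leq c)$ and the $1+L$ moment functions implied by (\ref{eq.orthogonal.closedform.refined}): the scalar $\phi_0(W_i, B_i) = B_i'R_i'(Y_i - R_iB_i) = \sum_{t=1}^T (R_{it}'B_i)\veps_{it}$, together with the $L$ components of $S_i(Y_i - R_iB_i)$ obtained by stacking $S_{it}\veps_{it}$ across $t=1,\ldots,T$. I would first verify that these have zero expectation under \Cref{ass.crc}: for $\phi_0$, iterated expectations on $\sigma(B_i, Z_i, X_i^t)$ yields the claim because $R_{it}'B_i$ is measurable with respect to that $\sigma$-algebra; the $S_{it}$ moments vanish by the defining property of $S_{it}$.

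With this setup, \Cref{prop.gmm} produces candidate bounds with inner objective
\begin{displaymath}
    \mc{L}(b;\lambda,\mu,W_i) = \mathbf{1}(e'b \leq c) + \lambda b'R_i'(Y_i - R_ib) + \mu'S_i(Y_i - R_ib).
\end{displaymath}
The key step is to evaluate the inner extremum over $b \in \mc{B}$ despite the discontinuity from the indicator. I would handle this by observing that $\mc{B}_{-} = \{b \in \mc{B} : e'b \leq c\}$ and $\mc{B}_{+} = \{b \in \mc{B} : e'b > c\}$ partition $\mc{B}$ and that $\mathbf{1}(e'b \leq c)$ is constant on each piece (equal to $1$ and $0$, respectively). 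On each piece $\mc{L}$ reduces to a smooth quadratic in $b$ plus the appropriate additive constant, so the overall minimum (resp.\ maximum) over $\mc{B}$ is simply the min (resp.\ max) of the two piecewise optima. This yields precisely the expressions $G_{L,F}$ and $G_{U,F}$ in the statement.

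The remaining task is to justify the sign restriction on $\lambda$. The quadratic form in $b$ has leading matrix $-\lambda R_i'R_i$, which by \Cref{ass.mean.nomulticollinearity} is positive definite when $\lambda < 0$ and negative definite when $\lambda > 0$. The piecewise subproblems are therefore strictly convex minimizations (relevant for $L_F$) exactly when $\lambda < 0$ and strictly concave maximizations (relevant for $U_F$) exactly when $\lambda > 0$; in the opposite regimes the piecewise extremum is driven to a vertex of $\mc{B}$ and is dominated by the compactification in \Cref{ass.proof.regularity} rather than by the model, yielding uninformative values. Since \Cref{prop.gmm} delivers valid bounds over unrestricted $\lambda \in \mb{R}$, shrinking the outer max (resp.\ min) to the half-line $\lambda < 0$ (resp.\ $\lambda > 0$) weakly lowers (resp.\ raises) it, and hence $L_F \leq F_{e,c} \leq U_F$ still holds. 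The main obstacle is checking that the finite-solution hypothesis of \Cref{prop.gmm} remains in force after the restriction; compactness of $\mc{B}$ under \Cref{ass.proof.regularity} makes the inner extremum finite for every $(\lambda,\mu)$, and the sign restriction on $\lambda$ prevents the outer objective from being driven to $\pm\infty$.
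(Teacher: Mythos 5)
Your proposal is correct and follows essentially the same route as the paper, which states \Cref{prop.cdf} without a formal proof as a direct application of \Cref{prop.gmm} to the moment restrictions in (\ref{eq.orthogonal.closedform.refined}), with the inner optimization split over the two pieces $\{e'b \leq c\}$ and $\{e'b > c\}$ on which the indicator is constant and the objective is quadratic. Your justification of the sign restriction on $\lambda$ --- that restricting the outer max (resp.\ min) to a half-line only weakens the bound and hence preserves validity --- matches the intended argument.
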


If $B_i$ is continuous, then one may replace the set $\{b \in \mc{B}~|~e'b > c\}$ with its closure $\{b \in \mc{B}~|~e'b \geq c\}$, which facilitates estimation and inference.

\section{Estimation and inference}

\label{sec.estimation}

This section discusses estimation and inference for the identified sets derived in \Cref{sec.mean,sec.general}. This section is structured into two subsections. In the first, I consider inference for the mean parameters discussed in \Cref{sec.mean}. I exploit their simple closed-form expressions to present a procedure that is both straightforward to implement and robust to overidentification and model misspecification. In the second, I consider inference for the general parameters discussed in \Cref{sec.general}, presenting a procedure under the assumption of correct model specification.

\subsection{Estimation and inference for the mean parameters}

\label{sec.inference.mean}

In this subsection, I discuss estimation and inference for the mean parameters, focusing on the refined bounds in \Cref{prop.mean.closedform.refined,prop.mean.closedform.refined.constant}. In what follows, I present a procedure for the bounds in \Cref{prop.mean.closedform.refined}. The same procedure applies to the bounds in \Cref{prop.mean.closedform.refined.constant}.

Note that the bounds $[L_S, U_S]$ in \Cref{prop.mean.closedform.refined} are deterministic functions of the following moments:
\begin{displaymath}
    \begin{aligned}
        V_0 &= \mc{V}_S = \E(S_iR_i(R_i'R_i)^{-1}R_i'S_i'), \\
        Y_0 &= 2Y_S - \mc{Y}_S = \E(2S_iY_i - S_iR_i(R_i'R_i)^{-1}R_i'Y_i), \\
        P_0 &= \mc{P}_Se = \E(S_iR_i(R_i'R_i)^{-1}e), \\
        m_0 &= \E(Y_i'R_i(R_i'R_i)^{-1}R_i'Y_i), \\
        b_0 &= \E(e'\widehat{B}_i) = \E(e'(R_i'R_i)^{-1}R_i'Y_i), \\
        R_0 &= \E((R_i'R_i)^{-1}).
    \end{aligned}
\end{displaymath}
Let $D_i$ be the vector that collects all of the entries inside these expectations. In other words, $D_i$ is defined as
\begin{displaymath}
    \begin{aligned}
        D_i = \Big( &\text{vech}(S_iR_i(R_i'R_i)^{-1}R_i'S_i')', ~(2S_iY_i - S_iR_i(R_i'R_i)^{-1}R_i'Y_i)', ~(S_iR_i(R_i'R_i)^{-1}e)', \\
        &\qquad Y_i'R_i(R_i'R_i)^{-1}R_i'Y_i, ~e'(R_i'R_i)^{-1}R_i'Y_i, ~\text{vech}((R_i'R_i)^{-1})'\Big)'.
    \end{aligned}
\end{displaymath}
Note that $\E(D_i) = (\text{vech}(V_0)', Y_0', P_0', m_0, b_0, \text{vech}(R_0)')'$. Now, given an independent and identically distributed (i.i.d.) sample $\{D_i\}_{i=1}^N$ of size $N$, define
\begin{displaymath}
    \overline{D}_N = \frac{1}{N}\sum_{i=1}^N D_i.
\end{displaymath}
I assume that $\overline{D}_N$ is asymptotically normal with rate $\sqrt{N}$, which holds if the conditions for multivariate Central Limit Theorem hold for $D_i$ \citep[Section 2]{van2000}. %

\begin{assumption}
    \label{ass.mean.asympnormal}
    $\sqrt{N}(\overline{D}_N - \E(D_i))$ converges in distribution to $N(0, V_D)$ for some variance matrix $V_D$.
\end{assumption}

Now I discuss estimation and inference for $[L_S, U_S]$ under assumptions of \Cref{prop.mean.closedform.refined} and \Cref{ass.mean.asympnormal}. %
Recall that the expressions for $[L_S, U_S]$ are:
\begin{displaymath}
    [L_S, U_S] = \left[ \mathcal{B}_S - \frac{1}{2}\sqrt{\mc{E}_S\mc{D}_S}, ~\mathcal{B}_S + \frac{1}{2}\sqrt{\mc{E}_S\mc{D}_S} \right].
\end{displaymath}
Let $\widehat{\mc{B}}_S$, $\widehat{\mc{E}}_S$, and $\widehat{\mc{D}}_S$ be the sample counterparts of $\mathcal{B}_S$, $\mc{E}_S$, and $\mc{D}_S$ calculated with $\overline{D}_N$. For example, $\widehat{\mc{B}}_S$ is given by
\begin{displaymath}
    \begin{aligned}
        \widehat{\mc{B}}_S &= \frac{1}{2}e'\left(\frac{1}{N}\sum_{i=1}^N(R_i'R_i)^{-1}R_i'Y_i\right) + \frac{1}{2}e'\left(\frac{1}{N}\sum_{i=1}^N(R_i'R_i)^{-1}R_i'S_i'\right) \times \\
        &\qquad\qquad\qquad\qquad\left(\frac{1}{N}\sum_{i=1}^NS_iR_i(R_i'R_i)^{-1}R_i'S_i'\right)^{-1} \left(\frac{1}{N}\sum_{i=1}^N (2S_iY_i - S_iR_i(R_i'R_i)^{-1}R_i'Y_i)\right).
    \end{aligned}
\end{displaymath}
Then, define an estimator of $[L_S, U_S]$ as
\begin{displaymath}
    [\widehat{L}_S, \widehat{U}_S] = \left[ \widehat{\mc{B}}_S - \frac{1}{2}\sqrt{\widehat{\mc{E}}_S\widehat{\mc{D}}_S}, ~\widehat{\mc{B}}_S + \frac{1}{2}\sqrt{\widehat{\mc{E}}_S\widehat{\mc{D}}_S} \right].
\end{displaymath}

Since $[L_S, U_S]$ is a smooth function of $\E(D_i)$ provided that $\mathcal{E}_S > 0$ and $\mathcal{D}_S > 0$, the Delta method \citep[Section 3]{van2000} implies that $[\widehat{L}_S, \widehat{U}_S]$ is asymptotically normal. A key practical issue, however, is that the quantity $\widehat{\mc{D}}_S$ may be negative in finite samples, causing the term $\sqrt{\widehat{\mc{E}}_S\widehat{\mc{D}}_S}$ and thus the estimator $[\widehat{L}_S, \widehat{U}_S]$ to be not well-defined. This issue is related to a well-known challenge in inference for partially identified models --- overidentification and model misspecification. In what follows, I discuss this issue in detail and propose an inference procedure that addresses it. 

Recall that the bounds $[L_S, U_S]$ arise as the dual representations of the primal problem in (\ref{eq.primal}) (and the corresponding problem for the upper bound) based on the moment restrictions in (\ref{eq.orthogonal.closedform.refined}). It can be shown that the estimated bounds $[\widehat{L}_S, \widehat{U}_S]$ are the dual of the sample version of (\ref{eq.primal}) where the population distribution $P_W$ is replaced with the finite-sample empirical distribution $\hat P_W$. Overidentification then arises when the population problem (\ref{eq.primal}) is feasible but its sample version with $\hat P_W$ is infeasible. This mirrors the familiar overidentification problem in generalized method of moments (GMM) estimation where, even if the population satisfies all the moment restrictions so that the population GMM criterion achieves zero, the finite sample may not satisfy all moment restrictions simultaneously, yielding a strictly positive sample GMM criterion\footnote{For the empirical likelihood approach, this translates into the empirical likelihood criterion failing to attain its optimum at equal probabilities.}. 
In terms of the closed-form expressions of \Cref{prop.mean.closedform.refined}, this corresponds to having $\mathcal{D}_S > 0$ but $\widehat{\mc{D}}_S < 0$. 
In contrast, model misspecification arises when 
the population problem (\ref{eq.primal}) itself is infeasible.
In this case, the population quantity $\mathcal{D}_S$ is negative, and thus its sample counterpart $\widehat{\mc{D}}_S$ is also likely to be negative.

A recently growing literature on inference under misspecification in partially identified models \citep{stoye2020,andrews2024} propose solutions to these issues. In what follows, I adopt the procedure of \citet{stoye2020} who provides a simple, easy-to-implement method for conducting inference on bounds that are smooth functions of the moments.

To apply the procedure of \citet{stoye2020}, I first construct a smooth approximation and extension of the bounds $[L_S, U_S]$ that remains well-defined for any values of $\mc{B}_S$, $\mc{E}_S$, and $\mc{D}_S$. Let $r > 0$ be a small constant, and define the smoothed square root function
\begin{displaymath}
    s(x,y) = \sqrt{\frac{xy + \sqrt{(xy)^2 + r^2}}{2}}.
\end{displaymath}
For small $r > 0$, this function satisfies $s(x,y) = \sqrt{xy} + O(r)$ if $xy>0$, and $s(x,y) = O(r)$ if $xy < 0$. In other words, $s(x,y)$ coincides with the ordinary square root when $xy > 0$ and vanishes if $xy < 0$. In addition, because of the term $r^2 >0$, it is smooth everywhere, including at $xy=0$. I then define the smooth approximation and extension of $[L_S, U_S]$ as:
\begin{displaymath}
    [L_{Smth}, U_{Smth}] = \left[ \mc{B}_S - \frac{1}{2}\left(s(\mc{E}_S,\mc{D}_S) - s(\mc{E}_S,-\mc{D}_S)\right), ~\mc{B}_S + \frac{1}{2}\left(s(\mc{E}_S,\mc{D}_S) - s(\mc{E}_S,-\mc{D}_S)\right) \right].
\end{displaymath}
Note that $\mc{E}_S > 0$ even under overidentification and misspecification because
\begin{displaymath}
    \begin{aligned}
        \mc{E}_S &= e'\E((R_i'R_i)^{-1})e-e'\mc{P}_S'\mc{V}_S^{-1}\mc{P}_Se \\
        &= \E\left( (e'-e'\E((R_i'R_i)^{-1}R_i'S_i')S_iR_i)(R_i'R_i)^{-1}(e-R_i'S_i'\E(S_iR_i(R_i'R_i)^{-1})e) \right),
    \end{aligned}
\end{displaymath}
which is a quadratic form associated with a positive definite matrix. Therefore, $\mc{D}_S$ is the only quantity that can become negative in the square root function. Then, if $\mc{D}_S > 0$, $[L_{Smth}, U_{Smth}]$ simplifies to
\begin{displaymath}
    [L_{Smth}, U_{Smth}] \approx \left[ \mc{B}_S - \frac{1}{2}s(\mc{E}_S,\mc{D}_S), ~\mc{B}_S + \frac{1}{2}s(\mc{E}_S,\mc{D}_S) \right],
\end{displaymath}
which coincides with $[L_S, U_S]$ up to the error term $O(r)$. If $\mc{D}_S < 0$, then
\begin{displaymath}
    [L_{Smth}, U_{Smth}] \approx \left[ \mc{B}_S + \frac{1}{2}s(\mc{E}_S,-\mc{D}_S), ~\mc{B}_S - \frac{1}{2}s(\mc{E}_S,-\mc{D}_S) \right]
\end{displaymath}
so that $L_{Smth} > U_{Smth}$, indicating that the estimated bound is empty.

Now I discuss inference for $[L_{Smth}, U_{Smth}]$. Define the estimator of $[L_{Smth}, U_{Smth}]$ as
\begin{displaymath}
    [\widehat{L}_{Smth}, \widehat{U}_{Smth}] = \left[ \widehat{\mc{B}}_S - \frac{1}{2}\left(s(\widehat{\mc{E}}_S,\widehat{\mc{D}}_S) - s(\widehat{\mc{E}}_S,-\widehat{\mc{D}}_S)\right), ~\widehat{\mc{B}}_S + \frac{1}{2}\left(s(\widehat{\mc{E}}_S,\widehat{\mc{D}}_S) - s(\widehat{\mc{E}}_S,-\widehat{\mc{D}}_S)\right) \right].
\end{displaymath}
\Cref{ass.mean.asympnormal} and the Delta method \citep[Section 3]{van2000} then imply that $(\widehat{L}_{Smth}, \widehat{U}_{Smth})$ is asymptotically normal:
\begin{displaymath}
    \sqrt{N}((\widehat{L}_{Smth}, \widehat{U}_{Smth})' - (L_{Smth}, U_{Smth})') 
    \overset{d}{\longrightarrow} 
    N\left(0, \left[\begin{array}{cc}
        \sigma_L^2 & \rho \sigma_L \sigma_U \\
        \rho \sigma_L \sigma_U & \sigma_U^2
    \end{array}\right]\right)
\end{displaymath}
for some $\sigma_L$, $\sigma_U$, and $\rho$. This verifies Assumption 1 of \citet{stoye2020}. Note that $\sigma_L$, $\sigma_U$, and $\rho$ can be consistently estimated by bootstrap \citep[Section 23]{van2000}.

Next, define the pseudo-true parameter \citep{stoye2020,andrews2024}:
\begin{displaymath}
    \mu_e^* = \frac{\sigma_UL_{Smth} + \sigma_LU_{Smth}}{\sigma_L + \sigma_U}.
\end{displaymath}
Note that $\mu_e^*$ is well-defined even if $\mc{D}_S < 0$ and that $\mu_e^* \approx \mc{B}_S$. Define its estimator as
\begin{displaymath}
    \widehat{\mu}_e^* = \frac{\widehat{\sigma}_U\widehat{L}_{Smth} + \widehat{\sigma}_L\widehat{U}_{Smth}}{\widehat{\sigma}_L + \widehat{\sigma}_U},
\end{displaymath}
where $\widehat{\sigma}_L$, $\widehat{\sigma}_U$, and $\widehat{\rho}$ are consistent estimators of $\sigma_L$, $\sigma_U$, and $\rho$, respectively. Both $\mu_e^*$ and $\widehat{\mu}_e^*$ are well-defined under overidentification or misspecification.

Then, the $(1-\alpha)$-level confidence interval for $\mu_e$ is constructed as follows. First, consider an interval for $\mu_e$ based on the smoothed bounds $[\widehat{L}_{Smth}, \widehat{U}_{Smth}]$:
\begin{displaymath}
    I_{\mu_e} = \left[\widehat{L}_{Smth} - \widehat{c}(\alpha)\frac{\widehat{\sigma}_L}{\sqrt{N}}, ~~\widehat{U}_{Smth} + \widehat{c}(\alpha)\frac{\widehat{\sigma}_U}{\sqrt{N}}\right],
\end{displaymath}
where $\widehat{c}(\alpha)$ is the critical value specified in Table 1 of \citet{stoye2020}. For instance, if $\alpha = 0.05$, then $\widehat{c}(0.05) = 1.64$ if $\widehat{\rho} < 0.8$, and $\widehat{c}(0.05) = 1.96$ if $\widehat{\rho} \approx 1$. Note that $I_{\mu_e}$ may be empty under overidentification or misspecification. Second, consider an interval for the pseudo-true parameter $\mu_e^*$:
\begin{displaymath}
    I_{\mu_e^*} = \left[\widehat{\mu}_e^*
    - \Phi\left(1-\frac{\alpha}{2}\right)\frac{\widehat{\sigma}^*}{\sqrt{N}}, 
    ~~\widehat{\mu}_e^*
    + \Phi\left(1-\frac{\alpha}{2}\right)\frac{\widehat{\sigma}^*}{\sqrt{N}}\right]
\end{displaymath}
where $\Phi$ is the standard normal CDF and
\begin{displaymath}
    \widehat{\sigma}^* = \frac{\widehat{\sigma}_L\widehat{\sigma}_U\sqrt{2+2\widehat{\rho}}}{\widehat{\sigma}_L + \widehat{\sigma}_U}.
\end{displaymath}
Then, the $(1-\alpha)$-level confidence interval for $\mu_e$ that is valid under overidentification and misspecification is given by
\begin{displaymath}
    CI_{\mu_e} = I_{\mu_e} \cup I_{\mu_e^*}.
\end{displaymath}
Theorem 1 of \citet{stoye2020} establishes the validity of $CI_{\mu_e}$. Under overidentification, $CI_{\mu_e}$ asymptotically achieves the $(1-\alpha)$ coverage rate for the true parameter $\mu_e$, where overidentification is resolved as $N \rightarrow \infty$. Under misspecification, $CI_{\mu_e}$ asymptotically achieves the coverage rate for the pseudo-true parameter $\mu_e^*$.

\subsection{Inference for the general parameters}

\label{sec.inference.general}

I now discuss construction of a confidence interval for a general parameter $\theta$ in \Cref{sec.general}. By \Cref{prop.gmm}, the bounds $[L, U]$ of $\theta$ are given by
\begin{displaymath}
    L = \max_{\lambda \in \mb{R}^K}\E(G_L(\lambda, W_i)), \qand U = \min_{\lambda \in \mb{R}^K}\E(G_U(\lambda, W_i)).
\end{displaymath}
Note first that any $\theta \in [L, U]$ must satisfy
\begin{displaymath}
    \begin{aligned}
        \theta \geq L &= \max_{\lambda \in \mb{R}^{K}}\E(G_L(\lambda, W_i)), \\
        \theta \leq U &= \min_{\lambda \in \mb{R}^{K}}\E(G_U(\lambda, W_i)).
    \end{aligned}
\end{displaymath}
For regularity of the inference procedure, consider a large compact set $R^K \subseteq \mb{R}^K$ and consider the inequalities
\begin{displaymath}
    \begin{aligned}
        \theta \geq \tilde L &= \max_{\lambda \in R^K}\E(G_L(\lambda, W_i)), \\
        \theta \leq \tilde U &= \min_{\lambda \in R^K}\E(G_U(\lambda, W_i)).
    \end{aligned}
\end{displaymath}
I choose the set $R^K$ to be large enough so that both $\lambda_0^L = \argmax_\lambda \E(G_L(\lambda, W_i))$ and $\lambda_0^U = \argmin_\lambda \E(G_U(\lambda, W_i))$ lie in the interior of $R^K$, in which case $[L, U] = [\tilde L, \tilde U]$. Otherwise, $[\tilde L, \tilde U]$ becomes an outer identified set of $[L, U]$. I then rewrite the above as
\begin{displaymath}
    \begin{aligned}
        \theta \geq \E(G_L(\lambda, W_i)) \quad \text{ for all } \lambda \in R^K, \\
        \theta \leq \E(G_U(\lambda, W_i)) \quad \text{ for all } \lambda \in R^K.
    \end{aligned}
\end{displaymath}
Equivalently, these can be written as the following moment inequalities:
\begin{equation}
    \begin{aligned}
        \E(G_L(\lambda, W_i) - \theta) \leq 0 \quad \text{ for all } \lambda \in R^K, \\
        \E(\theta - G_U(\lambda, W_i)) \leq 0 \quad \text{ for all } \lambda \in R^K,
    \end{aligned}
    \label{eq.inference}
\end{equation}
which is a moment inequalities model with infinitely many restrictions indexed by $\lambda$.

The literature on many moment inequalities \citep{romano2014,andrews2017,chernozhukov2019,bai2022} develops procedures for constructing a confidence interval for $\theta$. In this paper, I adopt the inference procedure of \citet{andrews2017} on a continuum of moment inequalities, which includes countably many moment inequalities as a special case. Note first that $G_L$ is concave and $G_U$ is convex in $\lambda$ by \Cref{prop.convex}, which implies that both functions are continuous in $\lambda$. This means that, for inference on $\theta$, it suffices to consider:
\begin{equation}
    \begin{aligned}
        \E(G_L(\lambda, W_i) - \theta) \leq 0 \quad \text{ for all } \lambda \in Q^K, \\
        \E(\theta - G_U(\lambda, W_i)) \leq 0 \quad \text{ for all } \lambda \in Q^K,
    \end{aligned}
    \label{eq.inference.countable}
\end{equation}
where $Q^K \subseteq R^K$ is a set of rational numbers in $R^K$, which is dense in $R^K$. Section 9.2 of \citet{andrews2017} develops an inference procedure for this countable set of moment inequalities. The conditions for the validity of their procedure are given in Lemma 9.2 of \citet{andrews2017}. Although this lemma is given for a single moment restriction with one-dimensional $\lambda$, its extension to two moment restrictions and to a $K$-dimensional $\lambda$ is straightforward. 
In what follows, I assume that the conditions of their Lemma 9.2 are satisfied, which are mild given \Cref{ass.proof.regularity} and the fact that $Q^K$ is compact\footnote{Lemma 9.2 of \citet{andrews2017} introduces a weight function associated with an ordering of the moment inequalities. %
This weight function does not affect the inference procedure, since it cancels out in the construction of the test statistics and therefore does not appear in any of the expressions. %
}.

\begin{assumption}
    \label{ass.inference.as}
    There is $\underline{\sigma} > 0$ such that $\var(G_L(\lambda_0, W_i)) \geq \underline{\sigma}^2$ and $\var(G_U(\lambda_0, W_i)) \geq \underline{\sigma}^2$ for some fixed $\lambda_0 \in Q^K$. Also, there is a measurable function $g$ such that $|G_L(\lambda_0, W_i)| \leq g(W_i)$, $|G_U(\lambda_0, W_i)| \leq g(W_i)$, and $\E((g(W_i)/\underline{\sigma})^{2+r}) \leq C$ for some $r > 0$ and $C < \infty$.
\end{assumption}

In what follows, I apply their inference method under assumptions of \Cref{prop.gmm} and \Cref{ass.inference.as}, where I choose the tuning parameters appropriately for brevity of discussion. Given an i.i.d. sample $\{W_i\}_{i=1}^N$ of size $N$, define the sample quantities
\begin{displaymath}
    \hat\mu_{G_L}(\lambda) = \frac{1}{N}\sum_{i=1}^N G_L(\lambda, W_i) \quad\text{and}\quad \hat\sigma_{G_L}(\lambda) = \sqrt{(1+\kappa)\frac{1}{N}\sum_{i=1}^N \left( G_L(\lambda, W_i) - \hat\mu_{G_L}(\lambda)\right)^2}
\end{displaymath}
where $\kappa = 0.05$ is a small number, and where $\hat\mu_{G_U}(\lambda)$ and $\hat\sigma_{G_U}^2(\lambda)$ are defined similarly with $G_U$. 
Define the test statistic as
\begin{displaymath}
    T_{AS}(\theta) = \sup_{\lambda \in Q^K} ~\max\left\{\frac{\sqrt{N}(\hat\mu_{G_L}(\lambda) - \theta)}{\hat\sigma_{G_L}(\lambda)}, ~\frac{\sqrt{N}(\theta - \hat\mu_{G_U}(\lambda))}{\hat\sigma_{G_U}(\lambda)}, ~0\right\}^2,
\end{displaymath}
which corresponds to the function $S_3$ in \citet{andrews2017}. This test statistic is then compared to the critical value $c_{AS}(\alpha)$, which can be computed in two ways: the plug-in asymptotic (PA) type and the generalized moment selection (GMS) type. For brevity of discussion, I briefly outline the PA type critical value here, which yields an intuitive expression for a confidence interval of $\theta$, and refer to \citet{andrews2017} for the GMS type critical value.

Let $\{W_i^{(b)}\}_{i=1}^N$ be the empirical bootstrap sample of $\{W_i\}_{i=1}^N$, meaning each $\{W_i^{(b)}\}_{i=1}^N$ is drawn from $\{W_i\}_{i=1}^N$ with replacement. Let $\hat\mu_{G_L}^{(b)}$, $\hat\mu_{G_U}^{(b)}$, $\hat\sigma_{G_L}^{(b)}$, $\hat\sigma_{G_U}^{(b)}$ be the values of $\hat\mu_{G_L}$, $\hat\mu_{G_U}$, $\hat\sigma_{G_L}$, $\hat\sigma_{G_U}$ computed with $\{W_i^{(b)}\}_{i=1}^N$ instead of $\{W_i\}_{i=1}^N$. Then, compute the statistic
\begin{displaymath}
    \begin{aligned}
        c_{AS}^{(b)}(\theta) = \sup_{\lambda \in Q^K} ~ \max \Big\{ &
        \frac{\sqrt{N}(\hat\mu_{G_L}^{(b)}(\lambda) - \hat\mu_{G_L}(\lambda))%
        }{\hat\sigma_{G_L}^{(b)}(\lambda)}, %
        \frac{\sqrt{N}(\hat\mu_{G_U}(\lambda) - \hat\mu_{G_U}^{(b)}(\lambda))%
        }{\hat\sigma_{G_U}^{(b)}(\lambda)},~0\Big\}^2.
    \end{aligned}
\end{displaymath}
The critical value $c_{AS}(\theta,\alpha)$ is then defined as the $(1-\alpha)$ quantile of the bootstrapped $c_{AS}^{(b)}$ values. The confidence set for $\theta$ is then given by $\{\theta ~|~ T_{AS}(\theta) \leq c_{AS}(\theta,\alpha)\}$. Note that %
the critical value $c_{AS}(\theta,\alpha)$ does not depend on $\theta$. %
Consequently, the PA type confidence set simplifies to the interval
\begin{displaymath}
    \left[\sup_{\lambda}\left\{\hat\mu_{G_L}(\lambda) - \sqrt{c_{AS}(\alpha)} \times \frac{\hat\sigma_{G_L}(\lambda)}{\sqrt{N}}\right\}, ~~
    \inf_{\lambda}\left\{\hat\mu_{G_U}(\lambda) + \sqrt{c_{AS}(\alpha)} \times \frac{\hat\sigma_{G_U}(\lambda)}{\sqrt{N}}\right\} \right].
\end{displaymath}

When $K$ is large, searching for supremum over all $\lambda \in Q^K$ in $T_{AS}(\theta)$ and $c_{AS}^{(b)}(\theta)$ can be computationally prohibitive. However, note that the inequalities in (\ref{eq.inference.countable}) bind only at two $\lambda$ values, namely at $\lambda_L^* = \argmax_\lambda\E( G_L(\lambda, W_i))$ and $\lambda_U^* = \argmin_\lambda\E( G_U(\lambda, W_i))$. Moreover, since $G_L$ is concave and $G_U$ is convex, the inequalities become loose for $\lambda$ values that are far from $\lambda_L^*$ and $\lambda_U^*$. Consequently, in practice, one can focus the search for $\lambda$ on neighborhoods of $\lambda_L^*$ and $\lambda_U^*$. 
While $\lambda_L^*$ and $\lambda_U^*$ are population quantities, they can be approximated with their sample analogues. %

The procedure naturally extends to a vector-valued parameter $\theta \in \mb{R}^d$, by considering (\ref{eq.inference}) on each component of $\theta$. For example, the moment inequalities for $\theta = (\theta_1, \theta_2) \in \mb{R}^2$ are:
\begin{equation}
    \begin{aligned}
        \E(G_{L1}(\lambda, W_i) - \theta_1) \leq 0 \quad \text{ for all } \lambda \in Q^K, \\
        \E(\theta_1 - G_{U1}(\lambda, W_i)) \leq 0 \quad \text{ for all } \lambda \in Q^K, \\
        \E(G_{L2}(\lambda, W_i) - \theta_2) \leq 0 \quad \text{ for all } \lambda \in Q^K, \\
        \E(\theta_2 - G_{U2}(\lambda, W_i)) \leq 0 \quad \text{ for all } \lambda \in Q^K,
    \end{aligned}
\end{equation}
where $G_{Uk}$ and $G_{Lk}$ denote the functions $G_L$ and $G_U$ in (\ref{eq.inference}) corresponding to $\theta_k$ for $k=1,2$. Applying the same inference procedure then yields a confidence region in $\mb{R}^2$. This extension can be used to construct a confidence interval for the variance of random coefficients, which involves both first and second moments. Alternatively, it can be constructed by the Bonferroni correction to the individual bounds.

Lastly, I discuss overidentification and model misspecification in inference for the general parameters. Under overidentification or misspecification, %
the test statistic $T_{AS}(\theta)$ and its bootstrap critical value $c_{AS}^{(b)}(\theta)$ all diverge to $+\infty$. In contrast to the case of mean parameters, it is substantially more challenging to deal with these issues for general parameters. \citet{andrews2024} develop a general method for constructing valid confidence intervals under overidentification or misspecification, but their approach applies to a finite number of moment restrictions and therefore is not readily applicable to the countably infinite set considered here. Extending their approach to countably many moment restrictions is beyond the scope of this paper and is not pursued here. Instead, in Online Appendix \ref{sec.inference.general.overid}, I discuss a heuristic modification of the procedure of \citet{andrews2017}, %
which closely align with the spirit of \citet{andrews2024}. %
I check the performance of this heuristic procedure via simulation, also in Online Appendix \ref{sec.inference.general.overid}.

\section{Application to lifecycle earnings dynamics}

\label{sec.application}

\subsection{Overview}

Lifecycle earnings dynamics serve as a key input in various macroeconomic models. For example, in models of consumption and savings dynamics \citep{hall1982,blundell2008,blundell2016,arellano2017}, households facing a higher earnings risk accumulate more precautionary savings to smooth consumption over time. %
As \citet{guvenen2009} points out, specifying an earnings process that highlights features of real data is important for properly calibrating and drawing conclusions from these models.

When used as an input, it is common to specify earnings dynamics using a parsimonious linear model. \citet{guvenen2007,guvenen2009} studied two leading views on parsimonious specification of the earnings dynamics. Consider two earnings processes\footnote{As \citet{guvenen2007} points out, this is a stylized version of what is used in the literature, but it still captures features important for the discussion.}:
\begin{equation}
    \begin{array}{llll}
        Y_{it} &= \alpha_i + z_{it}, \qquad &z_{it} = \rho z_{i,t-1} + \eta_{it}, & \qquad\text{(RIP)} \\
        Y_{it} &= \alpha_i + \beta_i h_{it} + z_{it}, \qquad &z_{it} = \rho z_{i,t-1} + \eta_{it}, & \qquad\text{(HIP)}
    \end{array}
    \label{eq.application.incomeprocess.intro}
\end{equation}
where $h = \text{age} - \max\{\text{years of schooling}, 12\} - 6$ is potential years of experience, $Y_{it}$ is the residual log-earnings obtained by regressing log-earnings on time indicators and their interactions with a cubic polynomial in $h$, and $(\alpha_i, \beta_i)$ are heterogeneous coefficients. In addition, $\{z_{it}\}$ is an AR(1) process with a mean zero shock $\eta_{it}$\footnote{In the literature, it is standard to add a transitory income process to (\ref{eq.application.incomeprocess.intro}). I present estimation results that account for a transitory income process in Online Appendix \ref{sec.appendix.deconvolution}. The estimation results yield similar qualitative conclusions outlined in this subsection.}. These two models are known as the Restricted Income Profiles (RIP) process and the Heterogeneous Income Profiles (HIP) process, respectively. In both models, $\rho$ captures the earnings persistence that households face. As \citet{guvenen2009} summarizes, the literature reports $0.5 < \rho < 0.7$ and $\var(\beta_i) > 0$ for the HIP process (e.g., \citealp{lillard1979,baker1997}), meaning that households experience modest persistence and heterogeneous trends. By contrast, \citet{macurdy1982} tested the hypothesis that $\var(\beta_i) = 0$ and did not reject it. The literature reports $\rho \approx 1$ for the RIP process (e.g., \citealp{abowd1989,topel1992}), meaning households experience extreme persistence and homogeneous trends. \citet{guvenen2007} demonstrated that the HIP process better aligns with features of consumption data, and \citet{guvenen2009} showed that misspecifying the HIP process as a RIP process leads to an upward biased estimate of $\rho$, often obtaining $\rho \approx 1$.

While there is vast literature on unobserved heterogeneity in $\beta_i$ and its influence on $\rho$, relatively few studies examines heterogeneity in $\rho$ itself. Notable recent studies include \citet{browning2010}, \citet{alan2018}, and \citet{pesaran2024}; the first two assume a factor structure for $\rho_i$, and the latter imposes stationarity of (\ref{eq.application.incomeprocess.intro}) and assumes $\eta_{it}$ are i.i.d. over $i$ and $t$. In this section, I estimate a generalization of (\ref{eq.application.incomeprocess.intro}) where $\rho$ varies across individuals, writing $\rho=\rho_i$, where the distribution of $\rho_i$ and its correlation with $(\alpha_i, \beta_i, Y_{i0})$ are unrestricted. Differently from \citet{pesaran2024}, who also extend \citet{guvenen2009}, the distribution of $\eta_{it}$ also remains unrestricted and may depend on $(\alpha_i, \beta_i, \rho_i)$, allowing for heteroskedasticity.

In the remainder of this section, I find that, when $\rho$ is allowed to vary across individuals, both RIP and HIP specifications deliver similar estimates of $\E(\rho_i)$ that are significantly less than $1$. At the 95\% confidence level, the upper bounds of the confidence intervals for $\E(\rho_i)$ under both processes are between $0.5$ and $0.6$, and the two intervals have substantial overlap. This result suggests that, when $\rho$ is allowed to be heterogeneous, choosing RIP over HIP or vice versa may not lead to serious misspecification of $\rho_i$. Moreover, the 90\% confidence intervals for $\var(\rho_i)$ and $\pp(\rho_i \leq r)$ for $r \in (0,1)$ in the RIP model suggest the presence of heterogeneity in $\rho_i$. In particular, the lower confidence limit for $\var(\rho_i)$ is $0.009$, implying a standard deviation of $0.097$, and the confidence intervals for the CDF of $\rho_i$ suggest that at least 41\% of individuals have $\rho_i \leq 0.8$.

\subsection{Data and models}

I analyze data on U.S. households from the Panel Study of Income Dynamics (PSID) dataset. I use the dataset of \citet{guvenen2009}, who analyzed the PSID dataset of male heads of households collected annually. The dataset consists of male head of households who are not in the poverty (SEO) subsample and who consecutively reported positive hours (between 520 and 5110 hours a year) and earnings (between a preset minimum and maximum wage). From the dataset of \citet{guvenen2009}, I select individuals observed consecutively from 1976 to 1991, yielding $N=800$ and $T=15$, where the first wave serves as the initial value of earnings. I estimate two dynamic random coefficient models:
\begin{equation}
    \begin{array}{llll}
        Y_{it} &= \alpha_i + \rho_i Y_{i,t-1} + \eta_{it}, & \E(\eta_{it}|\alpha_i, \rho_i, Y_i^{t-1}) = 0, & \qquad\text{(RIP-RC)} \\
        Y_{it} &= \alpha_i + \beta_i h_{it} + \rho_i Y_{i,t-1} + \eta_{it}, & \E(\eta_{it}|\alpha_i, \beta_i, \rho_i, Y_i^{t-1}, h_i) = 0. & \qquad\text{(HIP-RC)}
    \end{array}
    \label{eq.application.incomeprocess}
\end{equation}
These models generalize (\ref{eq.application.incomeprocess.intro}), and they can be derived by quasi-differencing $Y_{it}$ in (\ref{eq.application.incomeprocess.intro}) and assuming $h_{it}\approx h_{i,t-1}+1$. Specifically, quasi-differencing the RIP process gives
\begin{displaymath}
    Y_{it} = \alpha_i(1-\rho_i) + \rho_i Y_{i,t-1} + \eta_{it} \equiv \tilde\alpha_i + \rho_i Y_{i,t-1} + \eta_{it}.
\end{displaymath}
Likewise, quasi-differencing the HIP process gives
\begin{displaymath}
    Y_{it} = \alpha_i(1-\rho_i) + \beta_i\rho_i + \beta_i(1-\rho_i)h_{it} + \rho_i Y_{i,t-1} + \eta_{it} \equiv \tilde\alpha_i + \tilde\beta_i h_{it} + \rho_i \tilde Y_{i,t-1} + \eta_{it}.
\end{displaymath}
Note that \citet{guvenen2009} defines $Y_{it}$ in (\ref{eq.application.incomeprocess.intro}) as the residual from the regression on time indicators and their interactions with a cubic polynomial in $h_{it}$, i.e., the regression
\begin{equation}
    \begin{aligned}
        Y_{it} &= \sum_{s=1976}^{1991} \left(\textbf{1}(t=s)\delta_{0,s} + \textbf{1}(t=s) h_{it} \delta_{1,s} + \textbf{1}(t=s) h_{it}^2 \delta_{2,s} + \textbf{1}(t=s) h_{it}^3 \delta_{3,s}\right) + \nu_{it} \\
        &\equiv X_{it}'\delta + \nu_{it},
    \end{aligned}
    \label{eq.application.firststage}
\end{equation}
where $Y_{it}$ is now the raw log-earnings data, and $X_{it}$ and $\delta$ denote the regressors and the coefficients in (\ref{eq.application.firststage}), i.e., $X_{it} = \text{vec}\left\{(\textbf{1}(t=s), \textbf{1}(t=s) h_{it}, \textbf{1}(t=s) h_{it}^2, \textbf{1}(t=s) h_{it}^3)_{s=1976}^{1991}\right\}$ and $\delta = \text{vec}\left\{(\delta_{0,s}, \delta_{1,s}, \delta_{2,s}, \delta_{3,s})_{s=1976}^{1991}\right\}$. \citet{guvenen2009} estimate the RIP and HIP models in (\ref{eq.application.incomeprocess.intro}) using the two-step procedure that is standard in the literature, where one first obtains the residuals from the regression in (\ref{eq.application.firststage}), and then one treats these residuals as $Y_{it}$ and estimate the RIP and HIP models in (\ref{eq.application.incomeprocess.intro}). The motivation of this approach is to first ``partial out'' the control variables $X_{it}$ and then consider earnings dynamics that are free of $X_{it}$. However, this approach may understate the standard errors of the RIP and HIP estimates, since it fails to account for the sampling variability introduced by the first-stage regression. Moreover, any estimation error in the first stage may appear as heterogeneity in $(\alpha_i, \beta_i)$ in the RIP-RC and HIP-RC specifications. To address these issues, I also consider a joint model of the control variables term in (\ref{eq.application.firststage}) and the RIP-RC and HIP-RC models in (\ref{eq.application.incomeprocess}). Specifically, I estimate
\begin{equation}
    \begin{array}{llll}
        Y_{it} &= X_{it}'\delta + \alpha_i + \rho_i Y_{i,t-1} + \eta_{it}, & \E(\eta_{it}|\alpha_i, \rho_i, Y_i^{t-1}, X_i) = 0, & \text{(RIP-RC-J)} \\
        Y_{it} &= X_{it}'\delta + \alpha_i + \beta_i h_{it} + \rho_i Y_{i,t-1} + \eta_{it}, & \E(\eta_{it}|\alpha_i, \beta_i, \rho_i, Y_i^{t-1}, h_i, X_i) = 0. & \text{(HIP-RC-J)}
    \end{array}
    \label{eq.application.incomeprocess.homo}
\end{equation}
where the homogeneous coefficients $\delta$ and the heterogeneous coefficients $(\alpha_i, \beta_i, \rho_i)$ are jointly considered. I refer to these specifications in (\ref{eq.application.incomeprocess.homo}) as RIP-RC-J and HIP-RC-J. %
I estimate the mean parameters of (\ref{eq.application.incomeprocess.homo}) using the bounds in \Cref{prop.mean.closedform.refined.constant}. 

Note that, for estimation of the RIP-RC-J model, the regressor $\textbf{1}(t=1976)$ must be removed from the model because it is multicollinear with the individual-specific intercept $\alpha_i$. Likewise, for estimation of the HIP-RC-J model, both $\mathbf{1}(t=1976)$ and $\mathbf{1}(t=1976)h_{it}$ must be dropped from $X_{it}$ to avoid multicollinearity with the $\alpha_i$ and the $h_{it}$ terms. These exclusions ensure that the no-multicollinearity condition of \Cref{ass.mean.nomulticollinearity.constant} holds. After these removals, $X_{it}$ has 59 regressors in RIP-RC-J and 58 in HIP-RC-J models. The estimation result below will show that the bounds in \Cref{prop.mean.closedform.refined.constant} produces informative confidence intervals under this setup, demonstrating its practical applicability with a large number of regressors with homogeneous coefficients.

In what follows, I construct confidence intervals for $\E(\rho_i)$ under the RIP-RC, HIP-RC, RIP-RC-J, and HIP-RC-J specifications, using the bounds presented in \Cref{prop.mean.closedform.refined,prop.mean.closedform.refined.constant}. For the RIP-RC and HIP-RC models, I employ the two-step procedure of \citet{guvenen2009} which first obtains residuals from (\ref{eq.application.firststage}) and then uses these residuals as $Y_{it}$ in the RIP-RC and HIP-RC models. In contrast, for the RIP-RC-J and HIP-RC-J models, I directly estimate $\E(\rho_i)$ from (\ref{eq.application.incomeprocess.homo}), jointly considering the control variables term. In addition, I construct the confidence intervals for $\var(\rho_i)$ and $\pp(\rho_i \leq r)$ over the grid $r \in \{0.1, \ldots, 0.9\}$ under the RIP-RC model, using the bounds presented in \Cref{sec.general.varcdf} and assuming $\mc{B} = [-3,3] \times [0,1]$ as the support of $(\alpha_i, \rho_i)$.

For calculation of the mean bounds, I choose $S_{it} = (1, Y_{i,\max\{0, ~t-5\}}, \ldots, Y_{i,t-1})'$ for RIP-type models, and $S_{it} = (1, Y_{i,\max\{0, ~t-5\}}, \ldots, Y_{i,t-1}, h_{i,\max\{1, ~t-5\}}, \ldots, h_{i,\min\{T, ~t+5\}})$ for HIP-type models. For calculation of the variance and the CDF bounds, I choose $S_{it} = (1, Y_{i,\max\{0, ~t-4\}}, \ldots, Y_{i,t-1})'$ for the RIP-RC model. For these choices of $S_{it}$, the overidentification issue arises in the estimated bounds. For inference on $\E(\rho_i)$, I apply the procedure of \citet{stoye2020} discussed in \Cref{sec.inference.mean}. For inference on $\var(\rho_i)$ and $\pp(\rho_i \leq r)$, I adopt the heuristic modification of \citet{andrews2017} described in Online Appendix \ref{sec.inference.general.overid}. Guided by simulation results, I evaluate the supremum with $100$ grid points in the neighborhoods. All critical values are calculated with $1000$ bootstrap replications, using the PA type for $\var(\rho_i)$ and $\pp(\rho_i \leq r)$. The interval for $\var(\rho_i)$ is constructed with the Bonferroni correction.

\subsection{Results}

\label{sec.application.empirical}

The 95\% confidence intervals for $\E(\rho_i)$ are reported in \Cref{table.application.mean}. Both models estimate $\E(\rho_i)$ to be significantly less than 1, and the confidence intervals in RIP-RC and HIP-RC demonstrate substantial overlap, having similar upper confidence limits. This suggests that specifying homogeneous or heterogeneous $\beta_i$ does not lead to serious misspecification when $\rho_i$ is allowed to be heterogeneous. The confidence intervals for RIP-RC-J and HIP-RC-J are similar to those for RIP-RC and HIP-RC, supporting the same argument. Note that these intervals are computed with the procedure described in \Cref{sec.inference.mean}, which is robust to overidentification and model misspecification. These findings are qualitatively similar when also considering the transitory income process, as reported in Online Appendix \ref{sec.appendix.deconvolution}.

\begin{table}[!tbp]
    \centering %
    \begin{tabular}{c c c c c} %
    \hline\hline %
    Parameter & RIP-RC & HIP-RC & RIP-RC-J & HIP-RC-J \\ %
    \hline\hline %
    $\E(\rho_i)$           & [0.425, 0.596] & [0.253, 0.566] & [0.455, 0.588] & [0.260, 0.543] \\
    \hline %
    \end{tabular}
    \caption{Confidence intervals of $\E(\rho_i)$ for the RIP type and the HIP type processes with heterogeneous coefficients. The nominal coverage probability is $0.95$. These confidence intervals are robust to overidentification and model misspecification.}
    \label{table.application.mean}
\end{table}

The 90\% confidence intervals for $\var(\rho_i)$ and $\pp(\rho_i \leq r)$ over the grid $r \in \{0.1, \ldots, 0.9\}$ for the RIP-RC model are reported in \Cref{table.application.varcdf}. The lower confidence limit of $\var(\rho_i)$ is $0.009$, implying a standard deviation of $0.097$, suggesting heterogeneity in $\rho_i$. Similar evidence is observed from confidence intervals for the CDF of $\rho_i$. They indicate that at least 41\% of households have $\rho_i \leq 0.8$ and at least 31\% have $\rho_i \leq 0.5$. These findings suggest unobserved heterogeneity in the earnings risk that households face, highlighting the importance of allowing for heterogeneity in $\rho_i$ in modeling income processes that reflect features of real data. %

\begin{table}[!tbp]
    \centering %
    \begin{tabular}{c c c} %
    \hline\hline %
    Parameter & RIP-RC \\ %
    \hline\hline %
    $\var(\rho_i)$         & [0.009, 0.233] \\
    \hline
    $\pp(\rho_i \leq 0.1)$ & [0.087, 0.996] \\
    $\pp(\rho_i \leq 0.2)$ & [0.124, 0.996] \\
    $\pp(\rho_i \leq 0.3)$ & [0.159, 1.000] \\
    $\pp(\rho_i \leq 0.4)$ & [0.196, 1.000] \\
    $\pp(\rho_i \leq 0.5)$ & [0.310, 1.000] \\
    $\pp(\rho_i \leq 0.6)$ & [0.365, 1.000] \\
    $\pp(\rho_i \leq 0.7)$ & [0.396, 1.000] \\
    $\pp(\rho_i \leq 0.8)$ & [0.419, 1.000] \\
    $\pp(\rho_i \leq 0.9)$ & [0.413, 1.000] \\
    \hline %
    \end{tabular}
    \caption{Confidence intervals of $\var(\rho_i)$ and $\pp(\rho_i \leq r)$ for the RIP and the HIP processes with heterogeneous coefficients. The nominal coverage probability is $0.90$.}
    \label{table.application.varcdf}
\end{table}

\section{Conclusion}

\label{sec.conclusion}

This paper studies the identification and estimation of dynamic random coefficient models in a short panel context. The model extends the standard dynamic panel linear model with fixed effects \citep{arellano1991,blundell1998}, allowing coefficients to be individual-specific. I show that the model is not point-identified but rather partially identified, and I characterize the identified sets of the mean, variance and CDF of the random coefficients using the dual representation of an infinite-dimensional linear program. I propose a computationally tractable estimation and inference procedure by adopting the approach of \citet{stoye2020} for the mean parameters and \citet{andrews2017} for the variance and CDF parameters. The procedure of \citet{stoye2020} is robust to overidentification and model misspecification.

I use my method to estimate unobserved heterogeneity in earnings persistence across U.S. households using the PSID dataset. I find that the average earnings persistence is significantly less than 1 when it is allowed to be heterogeneous. Moreover, its confidence interval under the RIP and HIP specifications show substantial overlap, suggesting that choosing RIP over HIP or vice versa does not lead to serious misspecification about the earnings process when persistence is heterogeneous. Lastly, confidence intervals for the variance and CDF of the earnings persistence parameter suggest the presence of unobserved heterogeneity, which is a key source of heterogeneity in consumption and savings behaviors.

\section{Acknowledgements}

This is based on my PhD dissertation at the University of Chicago. I am deeply indebted to St\'ephane Bonhomme, Alexander Torgovitsky, and Guillaume Pouliot for their invaluable guidance and support. I also thank Francesca Molinari and three anonymous reviewers for their insightful comments and suggestions. I thank Manuel Arellano, Timothy Armstrong, Antonio Galvao, Greg Kaplan, Roger Koenker, Zhipeng Liao, Jack Light, Hashem Pesaran, Azeem Shaikh, Shuyang Sheng, Panagiotis Toulis, and Ying Zhu for helpful discussions and comments. I also thank participants of the Econometrics Working Group at the University of Chicago.

\singlespacing
\addcontentsline{toc}{chapter}{\numberline{}Bibliography}
\bibliographystyle{jpe}
\bibliography{refCRC,refLC,refEXTRA}
\onehalfspacing

\newpage

\begin{appendices}

\label{sec.appendix}

\section{Online Appendix: Proofs}

\label{sec.proof}

\textbf{Proof of Proposition \ref{prop.mean.failure}}.
To prove \Cref{prop.mean.failure}, I start from a data generating process that satisfies (\ref{eq.ar1}) and then modify it to construct a new process that is observationally equivalent.

Let $(\gamma_i, \beta_i, Y_{i0}, Y_{i1}, Y_{i2})$ be the random variables that satisfy (\ref{eq.ar1}). These variables must satisfy two conditions. First, they generate the observed data:
\begin{equation}
    \pp(Y_{i0} \leq y_0, Y_{i1} \leq y_1, Y_{i2} \leq y_2) = \mathbb{F}(y_0,y_1,y_2),
    \label{eq.proof.failure.data}
\end{equation}
where $\mathbb{F}$ is the observed cumulative distribution function of $(Y_{i0}, Y_{i1}, Y_{i2})$. Second, they satisfy the model constraints:
\begin{equation}
    \begin{aligned}
        \E(Y_{i1} - \gamma_i - \beta_i Y_{i0}|\gamma_i, \beta_i, Y_{i0}) &= 0 \qquad \text{ for all } (\gamma_i,\beta_i,Y_{i0}), \\
        \E(Y_{i2} - \gamma_i - \beta_i Y_{i1}|\gamma_i, \beta_i, Y_{i0}, Y_{i1}) &= 0 \qquad \text{ for all } (\gamma_i,\beta_i,Y_{i0},Y_{i1}). \\
    \end{aligned}
    \label{eq.proof.failure.epsilon}
\end{equation}

Now, I construct a new data generating process that also satisfies (\ref{eq.proof.failure.data}) and (\ref{eq.proof.failure.epsilon}). Given the joint distribution of $(\gamma_i, \beta_i, Y_{i0}, Y_{i1})$, let $k$ be a deterministic function of $(\gamma_i, \beta_i, Y_{i0}, Y_{i1})$ such that
\begin{equation}
    \E((Y_{i1}-Y_{i0})k(\gamma_i, \beta_i, Y_{i0}, Y_{i1})|\gamma_i, \beta_i, Y_{i0}) = 0,
    \label{eq.proof.failure.k}
\end{equation}
so that $k$ is orthogonal to $Y_{i1}-Y_{i0}$ conditional on $(\gamma_i, \beta_i, Y_{i0})$. I will specify the explicit expression for $k$ later in the proof. Then, given the function $k$, I define new random coefficients $(\tilde\gamma_i, \tilde\beta_i)$ as the following deterministic functions of $(\gamma_i, \beta_i, Y_{i0}, Y_{i1})$:
\begin{equation}
    \left(\left.\begin{array}{c}
        \tilde\gamma_i \\
        \tilde\beta_i
    \end{array}\right)\right|(Y_{i0}=y_0, Y_{i1}=y_1, \gamma_i=r, \beta_i=b)
    =
    \left(\begin{array}{cc}
        r - y_1 k(r, b, y_0, y_1) \\
        b + k(r, b, y_0, y_1)
    \end{array}\right).
    \label{eq.proof.failure.coef}
\end{equation}
I then define $\tilde Y_{i2}$ conditional on $(Y_{i0}, Y_{i1}, \gamma_i, \beta_i, \tilde\gamma_i, \tilde\beta_i)$ as
\begin{equation}
    \begin{aligned}
        &\tilde Y_{i2}|(Y_{i0}=y_0, ~Y_{i1}=y_1, ~\gamma_i=r, ~\beta_i=b, ~\tilde\gamma_i=\tilde r, ~\tilde\beta_i=\tilde b) \\
        &\overset{d}{=} Y_{i2}|(Y_{i0}=y_0, ~Y_{i1}=y_1, ~\gamma_i=r, ~\beta_i=b).
    \end{aligned}
    \label{eq.proof.failure.y2}
\end{equation}
Equations (\ref{eq.proof.failure.coef}) and (\ref{eq.proof.failure.y2}) together yield a joint distribution of $(\gamma_i, \beta_i, \tilde\gamma_i, \tilde\beta_i, Y_{i0}, Y_{i1}, \tilde Y_{i2})$, which in turn implies a joint distribution of $(\tilde\gamma_i, \tilde\beta_i, Y_{i0}, Y_{i1}, \tilde Y_{i2})$. I now show that this new data generating process, given by $(\tilde\gamma_i, \tilde\beta_i, Y_{i0}, Y_{i1}, \tilde Y_{i2})$, satisfies both (\ref{eq.proof.failure.data}) and (\ref{eq.proof.failure.epsilon}).

I first show that the new data generating process satisfies (\ref{eq.proof.failure.data}). Since the joint distribution of $(Y_{i0}, Y_{i1})$ remains unchanged, it suffices to verify that, conditional on $(Y_{i0}=y_0, Y_{i1}=y_1)$, the distribution of $\tilde Y_{i2}$ coincides with that of $Y_{i2}$. By the definition of $\tilde Y_{i2}$ in (\ref{eq.proof.failure.y2}) and the law of iterated expectations, I obtain
\begin{displaymath}
    \begin{aligned}
        &\pp(\tilde Y_{i2} \leq y_2 | Y_{i0}, Y_{i1}) \\
        &= \E(\pp(\tilde Y_{i2} \leq y_2 | Y_{i0}, Y_{i1}, \gamma_i, \beta_i, \tilde\gamma_i, \tilde\beta_i)|Y_{i0}, Y_{i1}) \\
        &= \E(\pp(Y_{i2} \leq y_2 | Y_{i0}, Y_{i1}, \gamma_i, \beta_i)|Y_{i0}, Y_{i1}) \\
        &= \pp(Y_{i2} \leq y_2 | Y_{i0}, Y_{i1}), \\
    \end{aligned}
\end{displaymath}
which shows that (\ref{eq.proof.failure.data}) holds for the new process. Next, to show that the new data generating process also satisfies (\ref{eq.proof.failure.epsilon}), note that in conditional expectations, conditioning on $(\gamma_i, \beta_i, \tilde\gamma_i, \tilde\beta_i, Y_{i0}, Y_{i1})$ is equivalent to conditioning on $(\gamma_i, \beta_i, Y_{i0}, Y_{i1})$ because $(\tilde\gamma_i, \tilde\beta_i)$ is a deterministic function of $(\gamma_i, \beta_i, Y_{i0}, Y_{i1})$. Then, by the law of iterated expectations, the following shows that the first line of (\ref{eq.proof.failure.epsilon}) holds for the new data generating process:
\begin{displaymath}
    \begin{aligned}
        &\E(Y_{i1} - \tilde\gamma_i - \tilde\beta_iY_{i0}|\tilde\gamma_i, \tilde\beta_i, Y_{i0}) \\
        &=\E(\E(Y_{i1} - \tilde\gamma_i - \tilde\beta_iY_{i0}|\gamma_i, \beta_i, \tilde\gamma_i, \tilde\beta_i, Y_{i0}, Y_{i1})|\tilde\gamma_i, \tilde\beta_i, Y_{i0}) \\
        &=\E(\E(Y_{i1} - \tilde\gamma_i - \tilde\beta_iY_{i0}|\gamma_i, \beta_i, Y_{i0}, Y_{i1})|\tilde\gamma_i, \tilde\beta_i, Y_{i0}) \\
        &=\E(\E(Y_{i1} - \tilde\gamma_i - \tilde\beta_iY_{i0}|\gamma_i, \beta_i, Y_{i0})|\tilde\gamma_i, \tilde\beta_i, Y_{i0}) \\
        &=\E(\E(Y_{i1} - \gamma_i - \beta_iY_{i0} + (Y_{i1} - Y_{i0})k(\gamma_i,\beta_i,Y_{i0},Y_{i1})|\gamma_i, \beta_i, Y_{i0})|\tilde\gamma_i, \tilde\beta_i, Y_{i0}) \\
        &=\E(0 + 0|\tilde\gamma_i, \tilde\beta_i, Y_{i0}) = 0, \\
    \end{aligned}
\end{displaymath}
where the last line is obtained by the first line of (\ref{eq.proof.failure.epsilon}) and by (\ref{eq.proof.failure.k}). Similarly, the following derivation shows that the second line of (\ref{eq.proof.failure.epsilon}) holds for the new data generating process:
\begin{displaymath}
    \begin{aligned}
        &\E(\tilde Y_{i2} - \tilde\gamma_i - \tilde\beta_iY_{i1}|\tilde\gamma_i, \tilde\beta_i, Y_{i0}, Y_{i1}) \\
        &=\E(\E(\tilde Y_{i2} - \tilde\gamma_i - \tilde\beta_iY_{i1}|\gamma_i, \beta_i, \tilde\gamma_i, \tilde\beta_i, Y_{i0}, Y_{i1})|\tilde\gamma_i, \tilde\beta_i, Y_{i0}, Y_{i1}) \\
        &=\E(\E(\tilde Y_{i2} - \tilde\gamma_i - \tilde\beta_iY_{i1}|\gamma_i, \beta_i, Y_{i0}, Y_{i1})|\tilde\gamma_i, \tilde\beta_i, Y_{i0}, Y_{i1}) \\
        &=\E(\E(Y_{i2} - \tilde\gamma_i - \tilde\beta_iY_{i1}|\gamma_i, \beta_i, Y_{i0}, Y_{i1})|\tilde\gamma_i, \tilde\beta_i, Y_{i0}, Y_{i1}) \\
        &=\E(\E(Y_{i2} - \gamma_i - \beta_iY_{i1} \\
        &\qquad\qquad + Y_{i1}k(\gamma_i,\beta_i,Y_{i0},Y_{i1}) - Y_{i1}k(\gamma_i,\beta_i,Y_{i0},Y_{i1})|\gamma_i, \beta_i, Y_{i0}, Y_{i1})|\tilde\gamma_i, \tilde\beta_i, Y_{i0}, Y_{i1}) \\
        &=\E(\E(Y_{i2} - \gamma_i - \beta_iY_{i1} |\gamma_i, \beta_i, Y_{i0}, Y_{i1})|\tilde\gamma_i, \tilde\beta_i, Y_{i0}, Y_{i1}) =\E(0|\tilde\gamma_i, \tilde\beta_i, Y_{i0}, Y_{i1}) = 0,\\
    \end{aligned}
\end{displaymath}
where the second-last equality follows from the second line of (\ref{eq.proof.failure.epsilon}).

In summary, I have shown that, given the function $k$, $(\tilde\gamma_i, \tilde\beta_i, Y_{i0}, Y_{i1}, \tilde Y_{i2})$ is a new data generating process that satisfies (\ref{eq.proof.failure.data}) and (\ref{eq.proof.failure.epsilon}). It now remains to specify the expression for $k$. Among many possible choices for $k$ that satisfy (\ref{eq.proof.failure.k}), I consider the following expression:
\begin{displaymath}
    k(r,b,y_0,y_1) = 1 - \frac{\E(Y_{i1}-Y_{i0}|\gamma_i=r,\beta_i=b,Y_{i0}=y_0)}{\E((Y_{i1}-Y_{i0})^2|\gamma_i=r,\beta_i=b,Y_{i0}=y_0)}(y_1-y_0).
\end{displaymath}
The following calculation shows that this choice of $k$ satisfies (\ref{eq.proof.failure.k}):
\begin{displaymath}
    \begin{aligned}
        &\E((Y_{i1}-Y_{i0})k(\gamma_i,\beta_i,Y_{i0},Y_{i1})|\gamma_i,\beta_i,Y_{i0}) \\
        &= \E\left(\left.(Y_{i1}-Y_{i0}) - \frac{\E(Y_{i1}-Y_{i0}|\gamma_i,\beta_i,Y_{i0})}{\E((Y_{i1}-Y_{i0})^2|\gamma_i,\beta_i,Y_{i0})}(Y_{i1}-Y_{i0})^2\right|\gamma_i,\beta_i,Y_{i0}\right) \\
        &= \E(Y_{i1}-Y_{i0}|\gamma_i,\beta_i,Y_{i0}) - \E(Y_{i1}-Y_{i0}|\gamma_i,\beta_i,Y_{i0}) = 0.
    \end{aligned}
\end{displaymath}
Then, under this choice of $k$, the expectation of $\tilde\beta_i$ is calculated as
\begin{displaymath}
    \begin{aligned}
        \E(\tilde\beta_i) &= \E(\beta_i + k(\gamma_i,\beta_i,Y_{i0},Y_{i1})) \\
        &= \E(\beta_i) + 1 - \E\left(\frac{\E(Y_{i1}-Y_{i0}|\gamma_i,\beta_i,Y_{i0})}{\E((Y_{i1}-Y_{i0})^2|\gamma_i,\beta_i,Y_{i0})}(Y_{i1}-Y_{i0})\right) \\
        &= \E(\beta_i) + 1 - \E\left(\E\left(\left.\frac{\E(Y_{i1}-Y_{i0}|\gamma_i,\beta_i,Y_{i0})}{\E((Y_{i1}-Y_{i0})^2|\gamma_i,\beta_i,Y_{i0})}(Y_{i1}-Y_{i0})\right|\gamma_i,\beta_i,Y_{i0}\right)\right) \\
        &= \E(\beta_i) + 1 - \E\left(\frac{\E(Y_{i1}-Y_{i0}|\gamma_i,\beta_i,Y_{i0})^2}{\E((Y_{i1}-Y_{i0})^2|\gamma_i,\beta_i,Y_{i0})}\right).
    \end{aligned}
\end{displaymath}
Note that, in the last term, the denominator is larger than the numerator because
\begin{displaymath}
    \E((Y_{i1}-Y_{i0})^2|\gamma_i,\beta_i,Y_{i0}) - \E(Y_{i1}-Y_{i0}|\gamma_i,\beta_i,Y_{i0})^2 = \var(Y_{i1}-Y_{i0}|\gamma_i,\beta_i,Y_{i0}) \geq 0,
\end{displaymath}
where equality holds if $Y_{i1}-Y_{i0}$ has zero variance conditional on $(\gamma_i,\beta_i,Y_{i0})$, which is ruled out by the assumptions. Then, it follows that
\begin{displaymath}
    \E(\tilde\beta_i) = \E(\beta_i) + 1 - \E\left(\frac{\E(Y_{i1}-Y_{i0}|\gamma_i,\beta_i,Y_{i0})^2}{\E((Y_{i1}-Y_{i0})^2|\gamma_i,\beta_i,Y_{i0})}\right)
    > \E(\beta_i) + 1 - 1 = \E(\beta_i).
\end{displaymath}
This implies that $\E(\tilde\beta_i)$ is strictly larger than $\E(\beta_i)$, which completes the proof. $~\square$ \newline

\noindent\textbf{Proof of Lemma \ref{lemma.mean.failure}}.
\Cref{lemma.appendix.gmm} in Online Appendix \ref{sec.appendix.gmm} implies that, if $\E(\beta_i)$ is point-identified, then
\begin{equation}
    f^*(Y_{i0}, Y_{i1}, Y_{i2}) + g_1^*(\gamma_i, \beta_i, Y_{i0})\veps_{i1} + g_2^*(\gamma_i, \beta_i, Y_{i0}, Y_{i1})\veps_{i2} = \beta_i
    \label{eq.proof.pointid.lemma}
\end{equation}
almost surely for $(\gamma_i, \beta_i, Y_{i0}, Y_{i1}, Y_{i2}) \in \mc{C}$, where $f^*$, $g_1^*$ and $g_2^*$ are linear functionals on the spaces of finite and countably additive signed Borel measures that are absolutely continuous with respect to the Lebesgue measure. Then (\ref{eq.proof.pointid.lemma}) implies that $S^*(Y_{i0}, Y_{i1}, Y_{i2}) = f^*(Y_{i0}, Y_{i1}, Y_{i2})$ since
\begin{displaymath}
    \E(f^*(Y_{i0}, Y_{i1}, Y_{i2})|\beta_i) = \E\left( \beta_i - g_1^*(\gamma_i, \beta_i, Y_{i0})\veps_{i1} - g_2^*(\gamma_i, \beta_i, Y_{i0}, Y_{i1})\veps_{i2}| \beta_i\right) = \beta_i.
\end{displaymath}

Conversely, if there exists $S^*(Y_{i0}, Y_{i1}, Y_{i2})$ satisfying $\E(S^*(Y_{i0}, Y_{i1}, Y_{i2})|\beta_i) = \beta_i$, then $\E(S^*(Y_{i0}, Y_{i1}, Y_{i2})) = \E(\E(S^*(Y_{i0}, Y_{i1}, Y_{i2})|\beta_i)) = \E(\beta_i)$, which completes the proof. $~\square$ \newline

\noindent\textbf{Proof of Theorem \ref{prop.mean.closedform}}.
In essence, the proof follows the logic used to show the general result of \Cref{prop.gmm}. However, because \Cref{prop.gmm} has not yet been introduced and \Cref{prop.mean.closedform} does not require the regularity assumption of \Cref{prop.gmm}, I present here a standalone proof of \Cref{prop.mean.closedform} under \Cref{ass.crc,ass.mean.nomulticollinearity}.

Let $\lambda \in \mb{R}$ and let $\mu$ be a real vector that has the same dimension as $R_{it}$. Then, define the function
\begin{displaymath}
    \mc{L}(\lambda,\mu,W_i,B_i) \equiv e'B_i + \mu' \sum_{t=1}^T R_{it}(Y_{it} - R_{it}'B_i) + \lambda \sum_{t=1}^T (R_{it}'B_i)(Y_{it} - R_{it}'B_i).
\end{displaymath}
Note that $\E(\mc{L}) = \E(e'B_i)$ since
\begin{displaymath}
    \E(\mc{L}) = \E(e'B_i) + \mu' \sum_{t=1}^T \E(R_{it}(Y_{it} - R_{it}'B_i)) + \lambda \sum_{t=1}^T \E((R_{it}'B_i)(Y_{it} - R_{it}'B_i))
\end{displaymath}
and that the moment condition in (\ref{eq.meanindep}) implies that
\begin{displaymath}
    \E(R_{it}(Y_{it} - R_{it}'B_i)) = 0 \qand \E((R_{it}'B_i)(Y_{it} - R_{it}'B_i)) = 0.
\end{displaymath}
Now, note that
\begin{displaymath}
    \E(e'B_i) = \E(\mc{L}(\lambda,\mu,W_i,B_i)) \leq \E(\max_b\mc{L}(\lambda,\mu,W_i,b)),
\end{displaymath}
since $\mc{L}(\lambda,\mu,W_i,B_i) \leq \max_b\mc{L}(\lambda,\mu,W_i,b)$ almost surely. Then, since this inequality holds for all $(\lambda,\mu)$, it follows that
\begin{displaymath}
    \E(e'B_i) \leq \min_{\lambda,\mu}\E(\max_b\mc{L}(\lambda,\mu,W_i,b)),
\end{displaymath}
which implies that the right-hand side is an upper bound of $\E(e'B_i)$. Similarly, it follows that
\begin{displaymath}
    \E(e'B_i) = \E(\mc{L}(\lambda,\mu,W_i,B_i)) \geq \E(\min_b\mc{L}(\lambda,\mu,W_i,b)),
\end{displaymath}
which then implies that
\begin{displaymath}
    \E(e'B_i) \geq \max_{\lambda,\mu}\E(\min_b\mc{L}(\lambda,\mu,W_i,b)),
\end{displaymath}
which implies that the right-hand side is a lower bound of $\E(e'B_i)$. Therefore, in summary, I obtain $L^* \leq \E(e'B_i) \leq U^*$ where
\begin{displaymath}
    L^* \equiv \max_{\lambda, ~\mu} \E\left( \min_{b}\left[ e'b + \mu' \sum_{t=1}^T R_{it}(Y_{it} - R_{it}'b) + \lambda \sum_{t=1}^T (R_{it}'b)(Y_{it} - R_{it}'b) \right] \right),
\end{displaymath}
and
\begin{displaymath}
    U^* \equiv \min_{\lambda, ~\mu} \E\left( \max_{b}\left[ e'b + \mu' \sum_{t=1}^T R_{it}(Y_{it} - R_{it}'b) + \lambda \sum_{t=1}^T (R_{it}'b)(Y_{it} - R_{it}'b) \right] \right).
\end{displaymath}
In what follows, I show that the closed-form expression for $L^*$ and $U^*$ coincide with those in \Cref{prop.mean.closedform}. In the remainder of the proof, I focus on showing that the closed-form expression for $U^*$ coincides with that in \Cref{prop.mean.closedform}. The expression for $L^*$ can be verified by a similar argument.

Now I derive the closed-form expression for $U^*$. Using the matrix notations $R_i$ and $Y_i$, I can write $U^*$ concisely as
\begin{displaymath}
    U^* = \min_{\mu, \lambda} \E\left( \max_{b}\left[ e'b + \mu' R_i'Y_i - \mu'R_i'R_ib + \lambda Y_i'R_ib - b'(\lambda R_i'R_i)b \right] \right).
\end{displaymath}
Here, the inner maximization problem optimizes a quadratic polynomial in $b$, where $-\lambda R_i'R_i$ is the leading coefficient matrix. Note that if $\lambda >0$, then by \Cref{ass.mean.nomulticollinearity}, the matrix $-\lambda R_i'R_i$ is negative definite, in which case this quadratic polynomial attains a finite closed-form maximum at the solution to the first-order condition with respect to $b$. On the other hand, if $\lambda \leq 0$, the polynomial's maximum diverges to $+\infty$ unless $R_i'R_i$ has a zero eigenvalue, which is ruled out by \Cref{ass.mean.nomulticollinearity}. Consequently, a finite upper bound is obtained only for $\lambda > 0$, and I can write $U^*$ as:
\begin{displaymath}
    U^* = \min_{\mu, \lambda > 0} \E\left( \max_{b}\left[ e'b + \mu' R_i'Y_i - \mu'R_i'R_ib + \lambda Y_i'R_ib - b'(\lambda R_i'R_i)b \right] \right).
\end{displaymath}
Now, for $\lambda >0$, I can solve for the closed-form maximum of the quadratic polynomial in $b$. The first-order condition with respect to $b$ yields
\begin{displaymath}
    e - R_i'R_i\mu + \lambda R_i'Y_i - 2(\lambda R_i'R_i)b = 0 \ra b^* = \frac{1}{2}(\lambda R_i'R_i)^{-1}(e + \lambda R_i'Y_i - R_i'R_i\mu).
\end{displaymath}
Plugging this solution back into the expression for $U^*$ yields:
\begin{equation}
    \min_{\lambda > 0, ~\mu} \E\left( \mu' R_i'Y_i + \frac{1}{4\lambda}\left[ e + \lambda R_i'Y_i - R_i'R_i\mu \right]'(R_i'R_i)^{-1} \left[ e + \lambda R_i'Y_i - R_i'R_i\mu \right] \right).
    \label{eq.proof.closedform}
\end{equation}
I solve this problem with respect to $\mu$ for a fixed $\lambda$. The first-order condition with respect to $\mu$ given $\lambda$ is:
\begin{displaymath}
    \E(R_i'Y_i) + \frac{1}{2\lambda}\E(R_i'R_i)\mu - \frac{1}{2\lambda}e - \frac{1}{2}\E(R_i'Y_i) = 0.
\end{displaymath}
The optimal $\mu$ that solves this first-order condition is $\mu^* = \E(R_i'R_i)^{-1}[e - \lambda\E(R_i'Y_i)]$. I substitute this into (\ref{eq.proof.closedform}) and then solve the resulting expression with respect to $\lambda$. The first-order condition with respect to $\lambda$ is:
\begin{displaymath}
    \frac{1}{\lambda^2}\left[e'\E((R_i'R_i)^{-1})e - e'\E(R_i'R_i)^{-1}e\right] = \E(R_i'Y_i'(R_i'R_i)^{-1}R_i'Y_i) - \E(R_i'Y_i)'\E(R_i'R_i)^{-1}\E(R_i'Y_i).
\end{displaymath}
Since $\lambda > 0$, the optimal $\lambda$ that satisfies this first-order condition is:
\begin{equation}
    \lambda^* = \sqrt{\frac{e'\E((R_i'R_i)^{-1})e - e'\E(R_i'R_i)^{-1}e}{\E(R_i'Y_i'(R_i'R_i)^{-1}R_i'Y_i) - \E(R_i'Y_i)'\E(R_i'R_i)^{-1}\E(R_i'Y_i)}}.
    \label{eq.proof.closedform.optlambda}
\end{equation}
Substituting (\ref{eq.proof.closedform.optlambda}) back into (\ref{eq.proof.closedform}) yields the expression for $U$ in \Cref{prop.mean.closedform}.

The numerator and denominator in (\ref{eq.proof.closedform.optlambda}) are both weakly positive, and each is equal to zero if and only if $(R_i'R_i)^{-1}e$ and $(R_i'R_i)^{-1}R_i'Y_i$ are degenerate across individuals, respectively. To show this, one can apply the following proposition to the functions $E(R_i'R_i) = e'(R_i'R_i)^{-1}e$ and $D(R_i'Y_i, R_i'R_i) = (R_i'Y_i)'(R_i'R_i)^{-1}R_i'Y_i$. $~\square$

\begin{proposition}[\citealp{kiefer1959}, Lemma 3.2]
    \label{lemma.psd}
    For an integer $l > 0$, let $A_1, \ldots, A_l$ be $n \times m$ matrices and  $B_1, \ldots, B_l$ be nonsingular positive definite and symmetric $n \times n$ matrices. Let $a_1, \ldots, a_l$ be positive real numbers such that $\sum_k a_k = 1$. Then
    \begin{displaymath}
        \sum_{k=1}^l a_k A_k'B_k^{-1}A_k - \left[\sum_{k=1}^l a_k A_k\right]'\left[\sum_{k=1}^l a_k B_k\right]^{-1}\left[\sum_{k=1}^l a_k A_k\right] \geq 0
    \end{displaymath}
    where `$\geq$' is the partial ordering defined in terms of positive semidefinite matrices. In addition, the equality holds if and only if $B_1^{-1}A_1 = \ldots = B_l^{-1}A_l$.
\end{proposition}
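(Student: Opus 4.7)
My plan is to encode the inequality as a single Schur complement statement about a convex combination of positive semidefinite block matrices; both the inequality and the equality condition then follow directly.

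For each $k$, I would observe that the $(n+m) \times (n+m)$ block matrix
\begin{displaymath}
M_k = \begin{pmatrix} B_k & A_k \\ A_k' & A_k' B_k^{-1} A_k \end{pmatrix} = \begin{pmatrix} B_k^{1/2} \\ A_k' B_k^{-1/2} \end{pmatrix} \begin{pmatrix} B_k^{1/2} & B_k^{-1/2} A_k \end{pmatrix}
\end{displaymath}
is positive semidefinite, as it is the Gram matrix of the displayed factorization. Since positive semidefiniteness is preserved under nonnegative linear combinations, $M \equiv \sum_k a_k M_k$ is also PSD, with $(1,1)$ block $\sum_k a_k B_k$ (positive definite and hence invertible), off-diagonal block $\sum_k a_k A_k$, and $(2,2)$ block $\sum_k a_k A_k' B_k^{-1} A_k$. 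The Schur complement of the $(1,1)$ block in $M$ is therefore PSD, and this Schur complement is exactly the expression appearing in the proposition, which yields the inequality.

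For the equality direction I would use the quadratic-form identity
\begin{displaymath}
\begin{pmatrix} u' & v' \end{pmatrix} M \begin{pmatrix} u \\ v \end{pmatrix} = \sum_k a_k \left\| B_k^{1/2} u + B_k^{-1/2} A_k v \right\|^2
\end{displaymath}
obtained from the factorization above. For fixed $v \in \mathbb{R}^m$, minimizing the left-hand side over $u$ yields $u^* = -\left(\sum_k a_k B_k\right)^{-1}\left(\sum_k a_k A_k\right) v$, and the resulting minimum equals $v'$ times the Schur complement times $v$. Hence the Schur complement vanishes on $v$ iff every summand $\|B_k^{1/2} u^* + B_k^{-1/2} A_k v\|^2$ is zero, i.e., iff $B_k^{-1} A_k v = -u^*$ is independent of $k$. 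Requiring this for every $v$ pins down $B_1^{-1} A_1 = \ldots = B_l^{-1} A_l$. The converse is an immediate substitution: if $A_k = B_k C$ for a common $C$, both sides of the inequality collapse to $C'(\sum_k a_k B_k) C$.

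The main obstacle I anticipate is not the inequality (which is essentially automatic once $M$ is written down) but the equality direction, specifically passing from vanishing of the Schur complement as a matrix to equality of the individual $B_k^{-1} A_k$. The quadratic-form reduction resolves this cleanly by exhibiting the Schur complement as the pointwise optimum of a sum of squared norms, so that its being zero forces every summand, and hence each $B_k^{1/2} u^* + B_k^{-1/2} A_k v$, to vanish.
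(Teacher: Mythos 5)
Your proof is correct. Note that the paper does not actually prove this statement: it is imported verbatim as Lemma 3.2 of \citet{kiefer1959}, so there is no in-paper argument to compare against. What you have supplied is the standard modern self-contained proof of the joint convexity of $(A,B)\mapsto A'B^{-1}A$: the Gram factorization of each block matrix $M_k$ is verified by direct multiplication, positive semidefiniteness passes to the convex combination $M=\sum_k a_k M_k$, and the Schur complement of the (positive definite, hence invertible) $(1,1)$ block of $M$ is exactly the displayed difference. Your treatment of the equality case is also sound and is the genuinely delicate part: writing $v'Sv=\min_u \sum_k a_k \lVert B_k^{1/2}u+B_k^{-1/2}A_kv\rVert^2$ forces, when $S=0$, each summand to vanish at the minimizer $u^*$ (since every $a_k>0$), giving $B_k^{-1}A_kv=-u^*$ for all $k$ and every $v$, hence $B_1^{-1}A_1=\cdots=B_l^{-1}A_l$; the converse substitution $A_k=B_kC$ collapses both sides to $C'(\sum_k a_kB_k)C$. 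This gives the paper a short standalone proof of the cited lemma, covering both the inequality and the equality characterization that the paper invokes to determine when $\mathcal{E}_R$ and $\mathcal{D}_R$ vanish.
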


\noindent\textbf{Proof of Proposition \ref{prop.mean.closedform.refined}}.
I first show that $\mc{V}_S$ is invertible, for which I show that $\mc{V}_S$ is positive definite, i.e.,
\begin{displaymath}
    x'S_iR_i(R_i'R_i)^{-1}R_i'S_i'x > 0
\end{displaymath}
with positive probability for every nonzero $x$. 

Note first that \Cref{ass.mean.nomulticollinearity.s} implies $R_i'S_i'x \neq 0$ with positive probability for every nonzero $x$. Note also that \Cref{ass.mean.nomulticollinearity} implies that $(R_i'R_i)^{-1}$ is positive definite, meaning that $y'(R_i'R_i)^{-1}y > 0$ for any nonzero vector $y$. Now, for every nonzero $x$, define $y_i = R_i'S_i'x$. Then $y_i \neq 0$ with positive probability, and it follows that
\begin{displaymath}
    x'S_iR_i(R_i'R_i)^{-1}R_i'S_i'x = y_i'(R_i'R_i)^{-1}y_i > 0
\end{displaymath}
with positive probability. This proves that $\mc{V}_S$ is positive definite, and thus invertible.

Now I prove the remainder of \Cref{prop.mean.closedform.refined}. Following the proof of \Cref{prop.mean.closedform}, let $\lambda \in \mb{R}$ and let $\mu_t$ be a real vector that has the same dimension as $S_{it}$, and define the function
\begin{displaymath}
    \begin{aligned}
        \mc{L} &\equiv e'B_i + \sum_{t=1}^T \mu_t'S_{it}(Y_{it} - R_{it}'B_i) + \lambda \sum_{t=1}^T (R_{it}'B_i)(Y_{it} - R_{it}'B_i) \\
        &= e'B_i + \mu'S_i(Y_i-R_iB_i) + \lambda B_i'R_i'(Y_i-R_iB_i),
    \end{aligned}
\end{displaymath}
where $\mu = (\mu_1', \mu_2', \ldots, \mu_T')$ is a real vector that attaches all $\mu_t$ vectors into a single vector. Then, following the proof of \Cref{prop.mean.closedform}, $L_S \leq \E(e'B_i) \leq U_S$ where
\begin{displaymath}
    L_S \equiv \max_{\mu, \lambda} \E\left( \min_{b}\left[ e'b + \mu'S_i(Y_i-R_ib) + \lambda b'R_i'(Y_i-R_ib) \right] \right),
\end{displaymath}
and
\begin{displaymath}
    U_S \equiv \min_{\mu, \lambda} \E\left( \max_{b}\left[ e'b + \mu'S_i(Y_i-R_ib) + \lambda b'R_i'(Y_i-R_ib) \right] \right).
\end{displaymath}
In what follows, I show that the closed-form expression for $L_S$ and $U_S$ coincide with those in \Cref{prop.mean.closedform.refined}. In the remainder of the proof, I focus on showing that the closed-form expression for $U_S$ coincides with that in \Cref{prop.mean.closedform.refined}. The expression for $L_S$ can be verified by a similar argument.

Now I derive the closed-form expression for $U_S$. I first rewrite $U_S$ as
\begin{displaymath}
    U_S = \min_{\mu, \lambda} \E\left( \max_{b}\left[ \mu'S_iY_i + (e + \lambda R_i'Y_i - R_i'S_i'\mu)'b - b'\left(\lambda R_i'R_i\right)b \right] \right).
\end{displaymath}
The inner maximization problem of $U_S$ optimizes a quadratic polynomial in $b$, where $-\lambda R_i'R_i$ is the leading coefficient matrix. Note that if $\lambda >0$, then by \Cref{ass.mean.nomulticollinearity}, this matrix is negative definite, in which case the quadratic polynomial attains a finite closed-form maximum. On the other hand, if $\lambda \leq 0$, the polynomial's maximum diverges to $+\infty$ unless $R_i'R_i$ has a zero eigenvalue, which is ruled out by \Cref{ass.mean.nomulticollinearity}. Consequently, a finite upper bound is obtained only for $\lambda > 0$, and I can write $U_S$ as:
\begin{displaymath}
    U_S = \min_{\mu, \lambda > 0} \E\left( \max_{b}\left[ \mu'S_iY_i + (e + \lambda R_i'Y_i - R_i'S_i'\mu)'b - b'\left(\lambda R_i'R_i\right)b \right] \right).
\end{displaymath}
Now, for $\lambda >0$, I can solve for the closed-form maximum of the quadratic polynomial in $b$. The first order condition with respect to $b$ yields
\begin{displaymath}
    e + \lambda R_i'Y_i - R_i'S_i'\mu - 2(\lambda R_i'R_i)b = 0 \ra b^* = \frac{1}{2}(\lambda R_i'R_i)^{-1}\left(e + \lambda R_i'Y_i - R_i'S_i'\mu\right).
\end{displaymath}
Plugging this solution back into the expression for $U_S$ yields:
\begin{equation}
    \min_{\lambda > 0, ~\mu} \E\left( \mu'S_iY_i + \frac{1}{4\lambda}\left[ e + \lambda R_i'Y_i - R_i'S_i'\mu \right]'(R_i'R_i)^{-1} \left[ e + \lambda R_i'Y_i - R_i'S_i'\mu \right] \right).
    \label{eq.proof.closedform.corollary}
\end{equation}
I solve this problem with respect to $\mu$ for a fixed $\lambda$. The first-order condition with respect to $\mu$ given $\lambda$ is:
\begin{displaymath}
    \E(S_iY_i) - \frac{1}{2\lambda}\E\left(S_iR_i(R_i'R_i)^{-1} \left[ e + \lambda R_i'Y_i - R_i'S_i'\mu \right]\right) = 0.
\end{displaymath}
The optimal $\mu$ that solves this first-order condition is
\begin{displaymath}
    \mu^* = \E(S_iR_i(R_i'R_i)^{-1}R_i'S_i')^{-1}\left[ \E(S_iR_i(R_i'R_i)^{-1})e + \lambda\left( \E(S_iR_i(R_i'R_i)^{-1}R_i'Y_i) - 2\E(S_iY_i) \right) \right]
\end{displaymath}
which I can write concisely as, using the notation from the main text:
\begin{displaymath}
    \mu^* = \mc{V}_S^{-1}\left[ \mc{P}_Se + \lambda\left( \mc{Y}_S - 2Y_S \right) \right].
\end{displaymath}
I now substitute this into (\ref{eq.proof.closedform.corollary}). First, expand (\ref{eq.proof.closedform.corollary}) and obtain
\begin{displaymath}
    \begin{aligned}
        \min_{\lambda > 0, ~\mu} \Big\{\mu'Y_S + \frac{1}{4\lambda}\Big[ &e'\E((R_i'R_i)^{-1})e + \lambda^2 m_0 + \mu'\mc{V}_S\mu \\
        &+ 2\lambda e'\E(\widehat{B}_i) -2e'\mc{P}_S'\mu - 2\lambda \mc{Y}_S'\mu \Big] \Big\},
    \end{aligned}
\end{displaymath}
where I used the notation from the main text to write it concisely. I then substitute the expression for $\mu^*$ into the above, obtaining:
\begin{displaymath}
    \begin{aligned}
        \min_{\lambda > 0} \Big\{Y_S'&\mc{V}_S^{-1}\left[ \mc{P}_Se + \lambda\left( \mc{Y}_S - 2Y_S \right) \right] \\
        + \frac{1}{4\lambda}\Big[ &e'\E((R_i'R_i)^{-1})e + \lambda^2 m_0 + \left[ \mc{P}_Se + \lambda\left( \mc{Y}_S - 2Y_S \right) \right]'\mc{V}_S^{-1}\left[ \mc{P}_Se + \lambda\left( \mc{Y}_S - 2Y_S \right) \right] \\
        + &2\lambda e'\E(\widehat{B}_i) -2e'\mc{P}_S'\mc{V}_S^{-1}\left[ \mc{P}_Se + \lambda\left( \mc{Y}_S - 2Y_S \right) \right] - 2\lambda \mc{Y}_S'\mc{V}_S^{-1}\left[ \mc{P}_Se + \lambda\left( \mc{Y}_S - 2Y_S \right) \right] \Big] \Big\}.
    \end{aligned}
\end{displaymath}
Expanding this expression and collecting terms with respect to $\lambda$ yields the expression:
\begin{displaymath}
    \begin{aligned}
        \min_{\lambda > 0} \Big\{ &\frac{1}{4\lambda}[e'\E((R_i'R_i)^{-1})e-e'\mc{P}_S'\mc{V}_S^{-1}\mc{P}_Se] 
        + \frac{\lambda}{4}[m_0-(\mc{Y}_S-2Y_S)'\mc{V}_S^{-1}(\mc{Y}_S-2Y_S)] \\
        &+ \frac{1}{2}[2e'\mc{P}_S'\mc{V}_S^{-1}Y_S + e'\E(\widehat{B}_i)-e'\mc{P}_S'\mc{V}_S^{-1}\mc{Y}_S]\Big\}.
    \end{aligned}
\end{displaymath}
The first order condition with respect to $\lambda$ then yields
\begin{displaymath}
    \lambda^* = \sqrt{\frac{e'\E((R_i'R_i)^{-1})e-e'\mc{P}_S'\mc{V}_S^{-1}\mc{P}_Se}{m_0-(\mc{Y}_S-2Y_S)'\mc{V}_S^{-1}(\mc{Y}_S-2Y_S)}}.
\end{displaymath}
Substituting this expression yields the expression for $U_S$ in \Cref{prop.mean.closedform.refined}. $~\square$ \newline

\noindent\textbf{Proof of Proposition \ref{prop.mean.closedform.refined.constant}}.
I first show that $\mc{V}_M - M_0$ is invertible. Note that
\begin{displaymath}
    \mc{V}_M - M_0 = \E(M_i'R_i(R_i'R_i)^{-1}R_i'M_i) - \E(M_i'M_i) = \E(M_i'(R_i(R_i'R_i)^{-1}R_i' - I)M_i).
\end{displaymath}
Since $R_i(R_i'R_i)^{-1}R_i'$ is a projection matrix, $R_i(R_i'R_i)^{-1}R_i' - I$ is negative semidefinite. For every nonzero $x$, define $y_i = M_ix$. Then, by \Cref{ass.mean.nomulticollinearity.constant}, $y_i$ is not in the column space of $R_i$ with positive probability. This implies that
\begin{displaymath}
    x'M_i'(R_i(R_i'R_i)^{-1}R_i' - I)M_ix = y_i'(R_i(R_i'R_i)^{-1}R_i' - I)y_i < 0
\end{displaymath}
with positive probability. Therefore, its expectation $\mc{V}_M - M_0$ is negative definite and thus invertible.

Next, I show that $\mc{V}$ is invertible. First, as shown in the proof of \Cref{prop.mean.closedform.refined}, $\mc{V}_S$ is positive definite under \Cref{ass.mean.nomulticollinearity.constant,ass.mean.nomulticollinearity.s.constant}. I have also shown that $\mc{V}_M - M_0$ is negative definite, which implies that $-(\mc{V}_M - M_0)$ is positive definite. Therefore, the quantity $\mc{V}_S - (C - \mc{C})(\mc{V}_M - M_0)(C - \mc{C})'$ is positive definite and thus invertible.

Now I prove the remainder of \Cref{prop.mean.closedform.refined.constant}. Following the proof of \Cref{prop.mean.closedform.refined}, let $\lambda \in \mb{R}$ and let $\mu_t$ be a real vector that has the same dimension as $S_{it}$. For a fixed value of $\delta$, define the function
\begin{displaymath}
    \begin{aligned}
        \mc{L}(\delta) &\equiv e'B_i + \sum_{t=1}^T \mu_t'S_{it}(Y_{it} - R_{it}'B_i - M_{it}'\delta) + \lambda \sum_{t=1}^T (R_{it}'B_i+M_{it}'\delta)(Y_{it} - R_{it}'B_i-M_{it}'\delta) \\
        &= e'B_i + \mu'S_i(Y_i-R_iB_i - M_i\delta) + \lambda (B_i'R_i'+\delta'M_i')(Y_i-R_iB_i-M_i\delta),
    \end{aligned}
\end{displaymath}
where $\mu = (\mu_1', \mu_2', \ldots, \mu_T')$ is a real vector that attaches all $\mu_t$ vectors into a single vector. Then, following the proof of \Cref{prop.mean.closedform}, $L_M(\delta) \leq \E(e'B_i) \leq U_M(\delta)$ where
\begin{displaymath}
    L_M(\delta) \equiv \max_{\mu, \lambda} \E\left( \min_{b}\left[ e'b + \mu'S_i(Y_i-R_ib-M_i\delta) + \lambda (b'R_i'+\delta'M_i')(Y_i-R_ib-M_i\delta) \right] \right),
\end{displaymath}
and
\begin{displaymath}
    U_M(\delta) \equiv \min_{\mu, \lambda} \E\left( \max_{b}\left[ e'b + \mu'S_i(Y_i-R_ib-M_i\delta) + \lambda (b'R_i'+\delta'M_i')(Y_i-R_ib-M_i\delta) \right] \right).
\end{displaymath}
Then, an outer bound of $\E(e'B_i)$ is obtained by taking the union of $[L_M(\delta), U_M(\delta)]$ over all $\delta \in \mb{R}^{q_m+p_m}$ (rather than over all admissible $\delta$ only). In other words, it follows that $L_M \leq \E(e'B_i) \leq U_M$ where
\begin{displaymath}
    L_M = \min_{\delta \in \mb{R}^{q_m+p_m}} L_M(\delta), \qand U_M = \max_{\delta \in \mb{R}^{q_m+p_m}}U_M(\delta).
\end{displaymath}
In what follows, I show that the closed-form expression for $L_M$ and $U_M$ coincide with those in \Cref{prop.mean.closedform.refined.constant}. In the remainder of the proof, I focus on showing that the closed-form expression for $U_M$ coincides with that in \Cref{prop.mean.closedform.refined.constant}. The expression for $L_S$ can be verified by a similar argument.

Now I derive the closed-form expression for $U_M$, for which I first derive the closed-form expression for $U_M(\delta)$. For a fixed $\delta$, 
the inner maximization problem of $U_M(\delta)$ optimizes a quadratic polynomial in $b$, where $-\lambda R_i'R_i$ is the leading coefficient matrix. Note that if $\lambda >0$, then by \Cref{ass.mean.nomulticollinearity.constant}, this matrix is negative definite, in which case the quadratic polynomial attains a finite closed-form maximum. On the other hand, if $\lambda \leq 0$, the polynomial's maximum diverges to $+\infty$ unless $R_i'R_i$ has a zero eigenvalue, which is ruled out by \Cref{ass.mean.nomulticollinearity.constant}. Consequently, a finite upper bound is obtained only for $\lambda > 0$.
Then, for $\lambda >0$, I solve for the first order condition with respect to $b$ and substitute the solution back into $U_M(\delta)$, which yields
\begin{equation}
    \begin{aligned}
        &U_M(\delta) = \min_{\lambda > 0, ~\mu} \E\Big(\mu'S_iY_i - \mu'S_iM_i\delta + \lambda\delta'M_i'Y_i - \lambda \delta'M_i'M_i\delta \\
        &\qquad + \frac{1}{4\lambda}\left[ e - R_i'S_i'\mu + \lambda R_i'Y_i - 2\lambda R_i'M_i\delta \right]'(R_i'R_i)^{-1} \left[ e - R_i'S_i'\mu + \lambda R_i'Y_i - 2\lambda R_i'M_i\delta \right] \Big).
    \end{aligned}
    \label{eq.proof.closedform.corollary.constant}
\end{equation}
I then expand the terms in (\ref{eq.proof.closedform.corollary.constant}), which yields the expression
\begin{displaymath}
    \begin{aligned}
        U_M(\delta) = \min_{\lambda > 0, ~\mu} &\Big\{\mu'Y_S - \mu'C\delta + \lambda\delta'Y_M - \lambda\delta' M_0\delta \\
        &+\frac{1}{4\lambda}e'\E((R_i'R_i)^{-1})e + \frac{\lambda}{4}m_0 + \frac{1}{4\lambda}\mu'\mc{V}_S\mu + \lambda \delta'\mc{V}_M\delta \\
        &+ \frac{1}{2} e'\E(\widehat{B}_i) -\frac{1}{2\lambda}e'\mc{P}_S'\mu - \frac{1}{2} \mc{Y}_S'\mu - \delta'\mc{P}_Me - \lambda\delta'\mc{Y}_M + \mu'\mc{C}\delta \Big\}.
    \end{aligned}
\end{displaymath}
Let $L(\lambda,\mu,\delta)$ be the objective function of the above, so that $U_M(\delta) = \min_{\lambda > 0, ~\mu}L(\lambda,\mu,\delta)$. Then I can write
\begin{displaymath}
    U_M = \max_{\delta} U_M(\delta) = \max_{\delta}\min_{\lambda > 0, ~\mu} L(\lambda,\mu,\delta).
\end{displaymath}
I note two properties about $L(\lambda,\mu,\delta)$. First, it can be shown that $L(\lambda,\mu,\delta)$ is convex in $(\lambda,\mu)$, which I state later as \Cref{prop.convex}. Second, $L(\lambda,\mu,\delta)$ is a quadratic polynomial in $\delta$, where the leading coefficient matrix is $\lambda (\mc{V}_M - M_0)$. Since $\lambda > 0$ and $\mc{V}_M - M_0$ is negative definite, this implies that $L(\lambda,\mu,\delta)$ is concave in $\delta$. These two properties then imply:
\begin{displaymath}
    U_M = \max_{\delta}\min_{\lambda > 0, ~\mu} L(\lambda,\mu,\delta) = \min_{\lambda > 0, ~\mu} \max_{\delta} L(\lambda,\mu,\delta).
\end{displaymath}
Now I solve for the latter expression. I first solve $L(\lambda,\mu,\delta)$ with respect to $\delta$ for a fixed $(\lambda, \mu)$. The optimal $\delta$ that solves the first order condition is
\begin{displaymath}
    \delta^* = -\frac{1}{2\lambda}(\mc{V}_M - M_0)^{-1}\left( \lambda (Y_M - \mc{Y}_M) - C'\mu - \mc{P}_Me + \mc{C}'\mu \right).
\end{displaymath}
I then substitute this into $L(\lambda,\mu,\delta)$, which yields the following expression for $U_M$:
\begin{displaymath}
    \begin{aligned}
        &\min_{\lambda > 0, ~\mu} \Big\{\mu'Y_S +\frac{1}{4\lambda}e'\E((R_i'R_i)^{-1})e + \frac{\lambda}{4}m_0 + \frac{1}{4\lambda}\mu'\mc{V}_S\mu + \frac{1}{2} e'\E(\widehat{B}_i) -\frac{1}{2\lambda}e'\mc{P}_S'\mu - \frac{1}{2} \mc{Y}_S'\mu - \\
        &~~\frac{1}{4\lambda} \left( \lambda (Y_M - \mc{Y}_M) - C'\mu - \mc{P}_Me + \mc{C}'\mu \right)'(\mc{V}_M - M_0)^{-1}\left( \lambda (Y_M - \mc{Y}_M) - C'\mu - \mc{P}_Me + \mc{C}'\mu \right) \Big\}.
    \end{aligned}
\end{displaymath}
Next, I solve this expression with respect to $\mu$ for a fixed $\lambda > 0$. The optimal $\mu$ that solves the first order condition is
\begin{displaymath}
    \mu^* = \mc{V}^{-1}\left[ \mc{P}_Se + \lambda\mc{Y}_S - 2\lambda Y_S + (C-\mc{C})(\mc{V}_M - M_0)^{-1}(\mc{P}_Me + \lambda \mc{Y}_M - \lambda Y_M ) \right].
\end{displaymath}
I substitute this expression back into the above expression for $U_M$, obtaining:
\begin{displaymath}
    U_M = \min_{\lambda > 0} \left( \frac{\lambda}{4}\mc{D}_M + \mc{B}_M + \frac{1}{4\lambda}\mc{E}_M \right),
\end{displaymath}
where $\mc{D}_M$, $\mc{B}_M$, and $\mc{E}_M$ are as defined in \Cref{prop.mean.closedform.refined.constant}. The first order condition with respect to $\lambda$ then yields
\begin{displaymath}
    \lambda^* = \sqrt{\frac{\mc{E}_M}{\mc{D}_M}}.
\end{displaymath}
Substituting this expression yields the expression for $U_M$ in \Cref{prop.mean.closedform.refined.constant}. $~\square$ \newline

\noindent\textbf{Proof of Theorem \ref{prop.gmm}}.
In what follows, I show that (\ref{eq.lb}) is the dual representation of (\ref{eq.primal}). The proof is a direct application of the duality theorem for linear programming over topological vector spaces \citep{anderson1983}. The same argument applies to (\ref{eq.ub}).

To apply the theorem, I first rewrite (\ref{eq.primal}) into a standard form of linear programming, for which I introduce additional notation. Recall that $\mc{M}_{W \times B}$ is a linear space of finite and countably additive signed Borel measures on $\mc{W} \times \mc{B}$. Let $\overline{\mc{F}}_{W \times B}$ be the dual space of $\mc{M}_{W \times B}$, and let $\mc{F}_{W \times B}$ be the space of all bounded Borel measurable functions on $\mc{W} \times \mc{B}$. Note that $\mc{F}_{W \times B}$ is a linear subspace of $\overline{\mc{F}}_{W \times B}$.

For $P \in \mc{M}_{W \times B}$ and $f \in \overline{\mc{F}}_{W \times B}$, define the \emph{dual pairing}
\begin{displaymath}
    \langle P, f \rangle = \int fdP.
\end{displaymath}
Similarly, let $\mc{M}_{W}$ be the linear space of finite and countably additive signed Borel measures on $\mc{W}$. Let $\overline{\mc{F}}_{W}$ be the dual space of $\mc{M}_{W}$, and let $\mc{F}_{W}$ be the space of all bounded Borel measurable functions on $\mc{W}$. Note that $\mc{F}_{W}$ is a linear subspace of $\overline{\mc{F}}_{W}$. In addition, define $\mc{G} = \mb{R}^K \times \mc{M}_{W}$ and $\mc{H} = \mb{R}^K \times \overline{\mc{F}}_{W}$, and let $g = (g_1, \ldots, g_K, P_g)$ and $h=(\lambda_1, \ldots, \lambda_K, f_h)$ be their generic elements. Note that $\mc{H}$ is the dual space of $\mc{G}$. Define the dual pairing
\begin{displaymath}
    \langle g, h \rangle = \sum_{k=1}^K \lambda_k g_k + \int f_h dP_g.
\end{displaymath}

Next, define a linear map $A: \mc{M}_{W \times B} \mapsto \mc{G}$ by
\begin{displaymath}
    A(P) = \left( \int \phi_1 dP, \ldots, \int \phi_K dP, P(\cdot, \mc{B}) \right).
\end{displaymath}
Since each $\phi_k$ is bounded, $A$ is a bounded (thus continuous) linear operator. Moreover, note that
\begin{displaymath}
    \langle A(P), h \rangle = \sum_{k=1}^K \lambda_k \int \phi_k dP  + \int_{\mc{W}} f_h(w) P(dw, \mc{B}).
\end{displaymath}
It is straightforward to show that
\begin{displaymath}
    \int_{\mc{W}} f_h(w) P(dw, \mc{B}) = \int_{\mc{W} \times \mc{B}} f_h(w) dP(w,v).
\end{displaymath}
Then I obtain
\begin{equation}
    \langle A(P), h \rangle = \sum_{k=1}^K \lambda_k \int \phi_k dP  + \int f_h dP
    = \int \left[ \sum_{k=1}^K \lambda_k \phi_k + f_h \right] dP
    \equiv \langle P, A^*(h) \rangle,
    \label{eq.proof.adjoint}
\end{equation}
where $A^*(h): \mc{H} \mapsto \overline{\mc{F}}_{W \times B}$ is defined as
\begin{displaymath}
    A^*(h) = \sum_{k=1}^K \lambda_k \phi_k + f_h.
\end{displaymath}
Equation (\ref{eq.proof.adjoint}) shows that $A^*$ is the adjoint of $A$. With these notations, I rewrite (\ref{eq.primal}) as a standard form of linear programming:
\begin{equation}
    \min_{P \in \mc{M}_{W \times V}} \langle P, m \rangle \st A(P) = c, \qquad P \geq 0,
    \label{eq.proof.primal}
\end{equation}
where $c = (0, \ldots, 0, P_W)$. I now apply the duality theorem to (\ref{eq.proof.primal}), where I verify the conditions for the duality theorem later in the proof. The dual problem of (\ref{eq.proof.primal}) is given by:
\begin{displaymath}
    \max_{h \in \mc{H}} \langle c, h \rangle \st m - A^*(h) \geq 0,
\end{displaymath}
which I can write more concretely as:
\begin{equation}
    \max_{\lambda_1, \ldots, \lambda_K \in \mb{R}, ~ f_h \in \overline{\mc{F}}_W} \int f_h dP_W \st
    \sum_{k=1}^K \lambda_k \phi_k + f_h \leq m.
    \label{eq.proof.dual}
\end{equation}
I now simplify (\ref{eq.proof.dual}) further. I rearrange the constraint of (\ref{eq.proof.dual}):
\begin{displaymath}
    f_h(w) \leq m(w,b) - \sum_{k=1}^K \lambda_k \phi_k(w,b).
\end{displaymath}
The left-hand side does not involve $v$. Therefore:
\begin{displaymath}
    f_h(w) \leq \min_{b \in \mc{B}} \left[ m(w,b) - \sum_{k=1}^K \lambda_k \phi_k(w,b) \right] \quad \text{for all} \quad w \in \mc{W}.
\end{displaymath}
Since (\ref{eq.proof.dual}) maximizes the expectation of $f_h$, the optimal solution $f_h^*$ for a fixed $(\lambda_1, \ldots, \lambda_K)$ is given by:
\begin{equation}
    f_h^*(w) = \min_{b \in \mc{B}} \left[ m(w,b) - \sum_{k=1}^K \lambda_k \phi_k(w,b) \right]
    \label{eq.proof.dual.solution}
\end{equation}
almost surely in $P_W$. If not, i.e., if $f_h^*(w)$ is strictly less than the right-hand side of (\ref{eq.proof.dual.solution}) with positive probability in $P_W$, then one can increase the value of the objective by increasing $f_h^*$ on a set of positive probability. I then substitute (\ref{eq.proof.dual.solution}) into (\ref{eq.proof.dual}), which yields:
\begin{displaymath}
    \max_{\lambda_1, \ldots, \lambda_K \in \mb{R}} \int \min_{b \in \mc{B}} \left[ m(w,b) - \sum_{k=1}^K \lambda_k \phi_k(w,b) \right] dP_W(w).
\end{displaymath}
The above display remains equivalent even if the signs of $(\lambda_1, \ldots, \lambda_K)$ are switched, because they are choice variables supported on $\mb{R}^K$. Switching the signs of $(\lambda_1, \ldots, \lambda_K)$ then gives:
\begin{equation}
    \max_{\lambda_1, \ldots, \lambda_K \in \mb{R}} \int \min_{b \in \mc{B}} \left[ m(w,b) + \sum_{k=1}^K \lambda_k \phi_k(w,b) \right] dP_W(w)
    \label{eq.proof.dual.simple}
\end{equation}
which is exactly the expression in (\ref{eq.lb}).

It remains to verify that the conditions for the duality theorem holds, so that the optimal value of the primal problem in (\ref{eq.proof.primal}) equals to that of the dual problem in (\ref{eq.proof.dual.simple}). %
Note that, by the assumptions, the dual problem in (\ref{eq.proof.dual.simple}) attains a finite maximum value, which I denote by $L$. Then, since the set $\{P \in \mc{M}_{W \times V} ~|~ P \geq 0\}$ is closed in $\mc{M}_{W \times V}$, the sufficient conditions for Theorem 4 in \citet{anderson1983} are satisfied. This theorem then implies that the primal problem in (\ref{eq.proof.primal}) attains its minimum value at $L$, and thus strong duality holds. Note that Theorem 4 in \citet{anderson1983} requires that the dual problem has a sequence of choice variables $(\lambda_1, \ldots, \lambda_K)$ whose objective function values converge to the optimal value $L$, which trivially holds if the optimal value $L$ is attained. $~\square$ \newline

\noindent\textbf{Proof of Proposition \ref{prop.convex}}.
This proof shows that $G_L$ is concave in $\lambda$. The proof for the convexity of $G_U$ is similar. Let $\lambda_1 = (\lambda_{11}, \ldots, \lambda_{1K})$ and $\lambda_2 = (\lambda_{21}, \ldots, \lambda_{2K})$ be two arbitrary points in $\mb{R}^K$. Then, for any $t \in [0,1]$ and $w \in \mc{W}$:
\begin{displaymath}
    \begin{aligned}
        & G_L(t\lambda_1+(1-t)\lambda_2, w) \\
        &= \min_{b \in \mc{B}} \left\{ t\left[m(w, b) + \sum_{k=1}^{K}\lambda_{1k} \phi_k(w, b)\right] + (1-t)\left[m(w, b) + \sum_{k=1}^{K}\lambda_{2k} \phi_k(w, b)\right] \right\} \\
        &\geq t \min_{b \in \mc{B}} \left\{m(w, b) + \sum_{k=1}^{K}\lambda_{1k} \phi_k(w, b)\right\} + (1-t) \min_{b \in \mc{B}} \left\{ m(w, b) + \sum_{k=1}^{K}\lambda_{2k} \phi_k(w, b) \right\} \\
        &= tG_L(\lambda_1, w) + (1-t)G_L(\lambda_2, w),
    \end{aligned}
\end{displaymath}
which is the definition of concavity. $~\square$ \newline

\noindent\textbf{Proof of Lemma \ref{lemma.gmm}}. 
As stated in (\ref{eq.proof.dual}) in the proof of \Cref{prop.gmm}, the sharp lower bound of $\theta$ is given by
\begin{equation}
    \max_{\lambda_1, \ldots, \lambda_K \in \mb{R}, ~ f_h \in \overline{\mc{F}}_W} \int f_h dP_W \st
    \sum_{k=1}^K \lambda_k \phi_k + f_h \leq m
    \label{eq.proof.lemma.lb}
\end{equation}
where all notations follow the proof of \Cref{prop.gmm}. Similarly, the sharp upper bound of $\theta$ is given by
\begin{equation}
    \min_{\lambda_1, \ldots, \lambda_K \in \mb{R}, ~ f_h \in \overline{\mc{F}}_W} \int f_h dP_W \st
    \sum_{k=1}^K \lambda_k \phi_k + f_h \geq m.
    \label{eq.proof.lemma.ub}
\end{equation}

Suppose, by the way of contradiction, that $\theta$ is point-identified but there do not exist $S^* \in \overline{\mc{F}}_W$ and $\lambda_1^*, \ldots, \lambda_K^* \in \mb{R}$ such that, almost surely:
\begin{displaymath}
    \sum_{k=1}^K \lambda_k^* \phi_k + S^* = m.
\end{displaymath}
Then the optimal solution $(\lambda_1^l, \ldots, \lambda_K^l, S^l)$ to (\ref{eq.proof.lemma.lb}) must satisfy its constraint $\sum_{k=1}^K \lambda_k^l \phi_k + S^l \leq m$ with strict inequality on a set of positive Lebesgue measure on $\mc{W} \times \mc{B}$. Similarly, the optimal solution $(\lambda_1^u, \ldots, \lambda_K^u, S^u)$ to (\ref{eq.proof.lemma.ub}) must satisfy its constraint $\sum_{k=1}^K \lambda_k^u \phi_k + S^u \geq m$ with strict inequality on a set of positive Lebesgue measure on $\mc{W} \times \mc{B}$. Then it follows that:
\begin{displaymath}
    \E(S^l) = \E\left( \sum_{k=1}^K \lambda_k^l \phi_k + S^l \right)  < \E(m) < \E\left( \sum_{k=1}^K \lambda_k^u \phi_k + S^u \right)  = \E(S^u)
\end{displaymath}
where strict inequalities follow because the density of $(W_i, B_i)$ is strictly positive. This implies that the sharp lower bound $\E(S^l)$ is strictly less than the sharp upper bound $\E(S^u)$, which is a contradiction since $\theta$ is assumed to be point-identified.

Conversely, suppose there exists $(S^*, \lambda_1^*, \ldots, \lambda_K^*)$ such that $\sum_{k=1}^K \lambda_k^* \phi_k + S^* = m$. Then:
\begin{displaymath}
    \E(S^*) = \E\left(\sum_{k=1}^K \lambda_k \phi_k + S^*\right) = \E(m) = \theta,
\end{displaymath}
which shows that $\theta$ is point-identified by $\E(S^*)$. $~\square$ \newline

\section{Online Appendix: Extensions and Discussions}

\subsection{Extension to multivariate random coefficient models}

\label{sec.appendix.multivar}

Results from this paper extend to a multivariate version of (\ref{eq.crc}), a system of random coefficient models:
\begin{displaymath}
    \mf{Y}_{it} = \mf{Z}_{it}'\gamma_i + \mf{X}_{it}'\beta_i + \mf{e}_{it},
\end{displaymath}
where $\mf{Y}_{it}$ is a $D \times 1$ vector of dependent variables, $\mf{Z}_{it}$ is a $D \times q$ matrix of strictly exogenous regressors, $\mf{X}_{it}$ is a $D \times p$ matrix of sequentially exogenous regressors, and $\mf{e}_{it}$ is a $D \times 1$ vector of idiosyncratic error terms. Assume that
\begin{displaymath}
    \E(\mf{e}_{it}|\gamma_i, \beta_i, \mf{Z}_{i}, \mf{X}_{i}^t) = 0,
\end{displaymath}
which is a multivariate extension of (\ref{eq.meanindep}). The following is an example of a multivariate random coefficient model.

\begin{example}[Joint model of household earnings and consumption behavior]

    \label{sec.ex.joint}

    One can combine (\ref{eq.income}) and (\ref{eq.consumption}) to construct a joint lifecycle model of earnings and consumption behavior. In particular, when the time $t$ consumption equation is combined with the time $t+1$ earnings equation, a system of random coefficient models is obtained:
    \begin{displaymath}
        \begin{aligned}
            C_{it} &= \gamma_{i1} + \gamma_{i2}Y_{it} + \beta_{i1}A_{it} + \nu_{it}, \\
            Y_{i,t+1} &= \gamma_{i3} + \beta_{i2}Y_{it} + \veps_{it},
        \end{aligned}
    \end{displaymath}
    which can be written in the following matrix form:
    \begin{displaymath}
        \left(\begin{array}{c}
            C_{it} \\
            Y_{i,t+1}
        \end{array}\right)
        = \left(\begin{array}{ccc}
            1 & Y_{it} & 0 \\
            0 & 0 & 1
        \end{array}\right)\left(\begin{array}{c}
            \gamma_{i1} \\ \gamma_{i2} \\ \gamma_{i3}
        \end{array}\right) + 
        \left(\begin{array}{cc}
            A_{it} & 0 \\
            0 & Y_{it}
        \end{array}\right)\left(\begin{array}{c}
            \beta_{i1} \\ \beta_{i2}
        \end{array}\right)
        + \left(\begin{array}{c}
            \nu_{it} \\ \veps_{it}
        \end{array}\right).
    \end{displaymath}
    In this model, the $\gamma_i$s and $\beta_i$s can freely correlate among themselves and with $(Y_{i0}, A_{i1})$, allowing for correlation between earnings and consumption processes.

\end{example}

\subsection{Alternative proof of Proposition \ref{prop.mean.failure}}

\label{sec.appendix.proof}

This proof is an application of the general result in Online Appendix \ref{sec.appendix.gmm}. Suppose that the regularity conditions stated as \Cref{ass.appendix.regularity} in Online Appendix \ref{sec.appendix.gmm} hold. Also, for notational simplicity, suppose that $\mc{C} = \mc{C}_0^5$, where $\mc{C}_0$ is a compact subset of $\mb{R}$. The proof can be easily modified for a generic compact set $\mc{C}$.

Assume that $\E(\beta_i)$ is point-identified, from which I derive a contradiction. \Cref{lemma.appendix.gmm} in Online Appendix \ref{sec.appendix.gmm} implies that, if $\E(\beta_i)$ is point-identified, then:
\begin{equation}
    f^*(Y_{i0}, Y_{i1}, Y_{i2}) + g_1^*(\gamma_i, \beta_i, Y_{i0})\veps_{i1} + g_2^*(\gamma_i, \beta_i, Y_{i0}, Y_{i1})\veps_{i2} = \beta_i
    \label{eq.proof.pointid}
\end{equation}
almost surely in $(\gamma_i, \beta_i, Y_{i0}, Y_{i1}, Y_{i2})$, where $f^* : \mc{C}_0^3 \mapsto \mb{R}$, $g_1^* : \mc{C}_0^3 \mapsto \mb{R}$ and $g_2^* : \mc{C}_0^4 \mapsto \mb{R}$ are linear functionals on the spaces of finite and countably additive signed Borel measures that are absolutely continuous with respect to the Lebesgue measure. Substituting $\veps_{it} = Y_{it} - \gamma_i - \beta_iY_{i,t-1}$ in (\ref{eq.proof.pointid}) yields, almost surely in $(\gamma_i, \beta_i, Y_{i0}, Y_{i1}, Y_{i2})$,
\begin{equation}
    f^*(Y_{i0}, Y_{i1}, Y_{i2}) + g_1^*(\gamma_i, \beta_i, Y_{i0})(Y_{i1}-\gamma_i-\beta_i Y_{i0}) + g_2^*(\gamma_i, \beta_i, Y_{i0}, Y_{i1})(Y_{i2}-\gamma_i-\beta_iY_{i1}) = \beta_i.
    \label{eq.proof.pointid.substitute}
\end{equation}

Now, consider any $\gamma, \tilde\gamma, \beta, y_0, y_1, y_2 \in \mc{C}_0$ such that $\gamma \neq \tilde\gamma$. I then evaluate (\ref{eq.proof.pointid.substitute}) at $(\gamma_i, \beta_i, Y_{i0}, Y_{i1}, Y_{i2}) = (\gamma, \beta, y_0, y_1, y_2)$ and at $(\tilde\gamma, \beta, y_0, y_1, y_2)$ and take the difference between the two. This yields:
\begin{equation}
    \begin{aligned}
        & (y_1-\tilde\gamma-\beta y_0)\triangle_{\tilde\gamma,\gamma} g_1^* - (\tilde\gamma-\gamma)g_1^*(\gamma, \beta, y_0) \\
        &+ (y_2-\tilde\gamma-\beta y_1)\triangle_{\tilde\gamma,\gamma} g_2^* - (\tilde\gamma-\gamma)g_2^*(\gamma, \beta, y_0, y_1) = 0
    \end{aligned}
    \label{eq.proof.diff1}
\end{equation}
where $\triangle_{\tilde\gamma,\gamma} g_1^* \equiv g_1^*(\tilde \gamma, \beta, y_0) - g_1^*(\gamma, \beta, y_0)$ and $\triangle_{\tilde\gamma,\gamma} g_2^* \equiv g_2^*(\tilde \gamma, \beta, y_0, y_1) - g_2^*(\gamma, \beta, y_0, y_1)$.

In (\ref{eq.proof.diff1}), note that the variable $y_2$ appears only in the third term. In addition, (\ref{eq.proof.diff1}) must hold almost surely for all $\gamma, \tilde\gamma, \beta, y_0, y_1, y_2 \in \mc{C}_0$ such that $\gamma \neq \tilde\gamma$, and in particular for any choice of $y_2 \in \mc{C}_0$. This implies that the third term must not depend on $y_2$, which implies that, almost surely:
\begin{displaymath}
    \triangle_{\tilde\gamma,\gamma} g_2^* = 0,
\end{displaymath}
which means that $g_2^*$ is almost surely a constant function on $\gamma$:
\begin{equation}
    g_2^*(\gamma, \beta, y_0, y_1) = g_2^*(\beta, y_0, y_1).
    \label{eq.proof.diff1.sol.conclusion}
\end{equation}
If not, i.e., if $\triangle_{\tilde\gamma,\gamma} g_2^* \neq 0$ on a subset of $\mc{C}_0^6$ with positive Lebesgue measure, one can change the value of $y_2$ without altering $(\gamma, \tilde\gamma, \beta, y_0, y_1)$ within that subset to violate (\ref{eq.proof.diff1}) on a set of a positive measure.

Next, consider any $\gamma, \beta, \tilde\beta, y_0, y_1, y_2 \in \mc{C}_0$ such that $\beta \neq \tilde\beta$. I then evaluate (\ref{eq.proof.pointid.substitute}) at $(\gamma, \beta, y_0, y_1, y_2)$ and $(\gamma, \tilde\beta, y_0, y_1, y_2)$ and take the difference between the two. This yields:
\begin{equation}
    \begin{aligned}
        & (y_1-\gamma-\tilde\beta y_0)\triangle_{\tilde\beta,\beta} g_1^* - (\tilde\beta-\beta)y_0g_1^*(\gamma, \beta, y_0) \\
        &+ (y_2-\gamma-\tilde\beta y_1)\triangle_{\tilde\beta,\beta} g_2^* - (\tilde\beta-\beta)y_1g_2^*(\beta, y_0, y_1) = \tilde\beta - \beta
    \end{aligned}
    \label{eq.proof.diff2}
\end{equation}
where $\triangle_{\tilde\beta,\beta} g_1^* \equiv g_1^*(\gamma, \tilde\beta, y_0) - g_1^*(\gamma, \beta, y_0)$ and $\triangle_{\tilde\beta,\beta} g_2^* = g_2^*(\tilde\beta, y_0, y_1) - g_2^*(\beta, y_0, y_1)$. In (\ref{eq.proof.diff2}), note that $y_2$ appears only in the third term. This implies $g_2^*(\beta, y_0, y_1) = g_2^*(y_0, y_1)$ almost surely, similarly to the argument leading to (\ref{eq.proof.diff1.sol.conclusion}). Then (\ref{eq.proof.diff1}) simplifies to:
\begin{equation}
    (y_1-\tilde\gamma-\beta y_0)\triangle_{\tilde\gamma,\gamma} g_1^* - (\tilde\gamma-\gamma)g_1^*(\gamma, \beta, y_0) - (\tilde\gamma-\gamma)g_2^*(y_0, y_1) = 0.
    \label{eq.proof.diff1.simple}
\end{equation}
Let $\gamma, \tilde\gamma, \hat\gamma \in \mc{C}_0$ be such that $\hat\gamma - \tilde\gamma = \tilde\gamma - \gamma$. I then evaluate (\ref{eq.proof.diff1.simple}) at $(\gamma, \tilde\gamma, \beta, y_0, y_1)$ and $(\tilde\gamma, \hat\gamma, \beta, y_0, y_1)$ and take the difference between the two. This yields:
\begin{equation}
    (y_1-\hat\gamma-\beta y_0)\left(\triangle_{\hat\gamma,\tilde\gamma} g_1^* - \triangle_{\tilde\gamma,\gamma} g_1^*\right) - (\hat\gamma - \tilde\gamma)\triangle_{\tilde\gamma,\gamma} g_1^* - (\tilde\gamma-\gamma)\triangle_{\tilde\gamma,\gamma} g_1^* = 0.
    \label{eq.proof.diff11}
\end{equation}
In (\ref{eq.proof.diff11}), note that $y_1$ appears only in the first term, which implies $\triangle_{\hat\gamma,\tilde\gamma} g_1^* - \triangle_{\tilde\gamma,\gamma} g_1^* = 0$ almost surely, similarly to the argument leading to (\ref{eq.proof.diff1.sol.conclusion}). Then (\ref{eq.proof.diff11}) simplifies to:
\begin{equation}
    (\hat\gamma - \tilde\gamma)\triangle_{\tilde\gamma,\gamma} g_1^* + (\tilde\gamma-\gamma)\triangle_{\tilde\gamma,\gamma} g_1^* = 0,
    \label{eq.proof.diff11.conclusion}
\end{equation}
which implies $\triangle_{\tilde\gamma,\gamma} g_1^* = 0$ since $\hat\gamma - \tilde\gamma = \tilde\gamma - \gamma \neq 0$. This implies that $g_1^*$ is almost surely a constant function over $\gamma$, i.e., $g_1^*(\gamma, \beta, y_0) = g_1^*(\beta, y_0)$. Then (\ref{eq.proof.diff2}) simplifies to:
\begin{equation}
    \begin{aligned}
        & (y_1-\gamma-\tilde\beta y_0)\triangle_{\tilde\beta,\beta} g_1^* - (\tilde\beta-\beta)y_0g_1^*(\beta, y_0) \\
        &+ (y_2-\gamma-\tilde\beta y_1)\triangle_{\tilde\beta,\beta} g_2^* - (\tilde\beta-\beta)y_1g_2^*(y_0, y_1) = \tilde\beta - \beta.
    \end{aligned}
    \label{eq.proof.diff2.simple}
\end{equation}
Let $\beta, \tilde\beta, \hat\beta \in \mc{C}_0$ be such that $\hat\beta - \tilde\beta = \tilde\beta - \beta$. Evaluating (\ref{eq.proof.diff2.simple}) at $(\gamma, \hat\beta, \tilde\beta, y_0, y_1, y_2)$ and at $(\gamma, \tilde\beta, \beta, y_0, y_1, y_2)$ and taking the difference yields $g_1^*(\beta, y_0) = g_1^*(y_0)$, similarly to the argument leading to $g_1^*(\gamma, \beta, y_0) = g_1^*(\beta, y_0)$ from (\ref{eq.proof.diff11}). Then (\ref{eq.proof.pointid.substitute}) simplifies to:
\begin{displaymath}
    f^*(y_0, y_1, y_2) + g_1^*(y_0)(y_1-\gamma-\beta y_0) + g_2^*(y_0, y_1)(y_2-\gamma-\beta y_1) = \beta
\end{displaymath}
almost surely for all $(\gamma,\beta,y_0,y_1,y_2)$. This is a linear identity in $(\gamma, \beta)$, so the coefficients of $\gamma$ and $\beta$ must match on both sides of the equation. In particular, equating the coefficients on $\gamma$ implies that $-g_1^* - g_2^* = 0$, and equating the coefficients on $\beta$ implies $-y_0 g_1^* - y_1 g_2^* = 1$. Solving these two equations for $(g_1^*, g_2^*)$ yields, almost surely:
\begin{displaymath}
    g_1^* = \frac{1}{y_1-y_0}, \quad g_2^* = \frac{-1}{y_1-y_0}.
\end{displaymath}
However, $g_1^*$ cannot be a function of $y_1$, which is a contradiction. $~\square$

\subsection{Identification under conditional moment restrictions}

\label{sec.appendix.gmm}

This subsection studies moment equality models that incorporate both conditional and unconditional moment restrictions. Consider the following extension of \Cref{ass.gmm}.

\begin{assumption}
    \label{ass.appendix.gmm}
    The random vectors $(W_i, B_i)$ satisfy:
    \begin{displaymath}
        \begin{aligned}
            \E(\phi_k(W_i, B_i)) &= 0, \quad k=1, \ldots, K_U, \\
            \E(\psi_k(W_i, B_i) | A_{ik}) &= 0, \quad k=1, \ldots, K_C,
        \end{aligned}
    \end{displaymath}
    where $\phi_k, \psi_k : \mc{W} \times \mc{B} \mapsto \mb{R}$ are moment functions, $A_{i1}, \ldots, A_{iK_C}$ are subvectors of $(W_i, B_i)$, and $K_U,K_C \in \mb{N}$ are the number of unconditional and conditional moment restrictions, respectively.
\end{assumption}

Under \Cref{ass.appendix.gmm}, I characterize the identified set of
\begin{displaymath}
    \theta = \E(m(W_i, B_i))
\end{displaymath}
for some known function $m:\mc{W} \times \mc{B} \mapsto \mb{R}$. For brevity of notation, let $A_{ik}'$ be the subvector of $(W_i, B_i)$ collecting the variables not included in $A_{ik}$ so that $(A_{ik}, A_{ik}')$ is a rearrangement of $(W_i, B_i)$. Accordingly, any function $f(w,b)$ on $\mc{W} \times \mc{B}$ can be equivalently written as $f(a_k, a_k')$ on $\mc{A}_k \times \mc{A}_k'$, where $\mc{A}_k \times \mc{A}_k'$ is the rearrangement of $\mc{W} \times \mc{B}$ according to $(A_{ik}, A_{ik}')$. I assume the following regularity conditions.

\begin{assumption}
    \label{ass.appendix.regularity}
    The following conditions hold.
    \begin{itemize}
        \item[(i)] $\mc{W} \times \mc{B}$ is a compact set in a Euclidean space. 
        \item[(ii)] $(m, \phi_1, \ldots, \phi_{K_U}, \psi_1, \ldots, \psi_{K_C})$ are $L^\infty$ with respect to the Lebesgue measure.
        \item[(iii)] The distribution of $(W_i, B_i)$ is absolutely continuous with respect to the Lebesgue measure, and its density $p$ is $L^\infty$ with respect to the Lebesgue measure.
        \item[(iv)] There exists a joint density $p_0$ of $(W_i, B_i)$ such that it satisfies \Cref{ass.appendix.gmm}, its marginal density with respect to $W_i$ equals to the observed density of $W_i$, and it is strictly positive on $\mc{W} \times \mc{B}$.
    \end{itemize}
\end{assumption}

\Cref{ass.appendix.regularity} (i) and (ii) are similar to \Cref{ass.proof.regularity}. \Cref{ass.appendix.regularity} (iii) and (iv) are restrictive, but they are useful enough for \Cref{lemma.mean.failure} and an alternative proof of \Cref{prop.mean.failure} in Online Appendix \ref{sec.appendix.proof}. %

Under these assumptions, I obtain the following theorem and lemma, which are counterparts of \Cref{prop.gmm} and \Cref{lemma.gmm}, respectively, by characterizing the identified set $I$ of $\theta$ and providing a necessary and sufficient condition for point identification of $\theta$.

\begin{theorem}
    \label{prop.appendix.gmm}
    Suppose that either of these two conditions hold: 
    \begin{enumerate}
        \item[(A)] \Cref{ass.appendix.gmm,ass.appendix.regularity} (i)-(iii) hold, and the optimization problems
        \begin{equation}
            L = \max_{\{\lambda_k\}_{k=1}^{K_U}, \{\mu_k\}_{k=1}^{K_C}}
            \E\left[ \min_{b \in \mc{B}} \left\{ m(W_i, b) + \sum_{k=1}^{K_U}\lambda_k \phi_k(W_i, b) + \sum_{k=1}^{K_C}\mu_k(A_k(W_i, b))\psi_k(W_i, b)\right\} \right]
            \label{eq.appendix.lb}
        \end{equation}
        and
        \begin{equation}
            U = \min_{\{\lambda_k\}_{k=1}^{K_U}, \{\mu_k\}_{k=1}^{K_C}}
            \E\left[ \max_{b \in \mc{B}} \left\{ m(W_i, b) + \sum_{k=1}^{K_U}\lambda_k \phi_k(W_i, b) + \sum_{k=1}^{K_C}\mu_k(A_k(W_i, b))\psi_k(W_i, b)\right\} \right]
            \label{eq.appendix.ub}
        \end{equation}
        possess finite solutions, where $\lambda_k \in \mb{R}$ for $k=1, \ldots, K_U$ and $\mu_k : \mc{A}_k \mapsto \mb{R}$ for $k=1, \ldots, K_C$, and $A_k(w,b)$ denotes the value of $A_{ik}$ given $W_i =w$ and $B_i = b$.
        \item[(B)] \Cref{ass.appendix.gmm,ass.appendix.regularity} (i)-(iv) hold.
    \end{enumerate}
    If either (A) or (B) holds, then $I = [L,U]$.
\end{theorem}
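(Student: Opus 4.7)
My plan is to extend the duality argument used in the proof of \Cref{prop.gmm} by treating each conditional moment restriction $\E(\psi_k(W_i, B_i) \mid A_{ik}) = 0$ as an infinite family of unconditional restrictions of the form $\int \mu_k(a_k)\psi_k(w,b)\,dP(w,b) = 0$ indexed by test functions $\mu_k$ ranging over a suitable space of bounded measurable functions on $\mc{A}_k$. Under \Cref{ass.appendix.regularity}(ii)--(iii), this family is equivalent to the original conditional restriction, because the conditional distribution of $A_{ik}'$ given $A_{ik}$ is well defined and $\psi_k$ is bounded.

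First, I would rewrite the primal problem in (\ref{eq.primal}) with this enlarged constraint set and cast it in the standard form $\min_{P \geq 0}\langle P, m \rangle$ subject to $A(P) = c$, where the codomain of $A$ now contains a coordinate for every $\mu_k$ in addition to the coordinates for the $\phi_k$ and the marginal constraint. Following the computation in the proof of \Cref{prop.gmm}, the adjoint $A^*$ sends the dual variable $(\{\lambda_k\}, \{\mu_k\}, f_h)$ to $\sum_k \lambda_k\phi_k + \sum_k \mu_k(A_{ik})\psi_k + f_h$, so the dual constraint becomes the pointwise inequality $\sum_k \lambda_k\phi_k + \sum_k \mu_k(A_{ik})\psi_k + f_h \leq m$ for the lower-bound problem. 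Optimizing out $f_h$ by taking $f_h^*(w) = \min_{b}\{m(w,b) - \sum_k \lambda_k\phi_k(w,b) - \sum_k \mu_k(A_k(w,b))\psi_k(w,b)\}$ and then flipping signs of $\lambda_k$ and $\mu_k$ yields exactly (\ref{eq.appendix.lb}); the symmetric derivation gives (\ref{eq.appendix.ub}).

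Second, I verify strong duality. In case (A), the assumed finiteness of the dual solutions in (\ref{eq.appendix.lb})--(\ref{eq.appendix.ub}), together with compactness and the $L^\infty$ bounds from \Cref{ass.appendix.regularity}(i)--(ii), plays exactly the role of the corresponding hypothesis in \Cref{prop.gmm} and permits invoking Theorem 4 of \citet{anderson1983} to conclude that there is no duality gap, hence $I = [L, U]$. In case (B), I would instead use the strictly positive feasible density $p_0$ from \Cref{ass.appendix.regularity}(iv) as a Slater-type interior point for the primal: on the compact $\mc{W} \times \mc{B}$, strict positivity of $p_0$ combined with boundedness of $(m, \phi_k, \psi_k, p)$ yields the constraint qualification needed to guarantee no duality gap without assuming a priori that the dual is finitely attained.

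The main obstacle will be choosing the function spaces for the $\mu_k$ and the accompanying topology so that the pairing between the primal space $\mc{M}_{W \times B}$ and the enlarged dual space is well defined, the adjoint identity holds, and the Anderson hypotheses can be checked term by term; a natural choice is bounded measurable $\mu_k$ paired against the marginal of $P$ on $\mc{A}_k$. The delicate step in case (B) is verifying that perturbations of $p_0$ by small signed measures remain feasible for the marginal constraint $\int P(w, db) = P_W(w)$ and for the equality constraints from $\{\phi_k\}$ and $\{\psi_k\}$ simultaneously, which is precisely where strict positivity of $p_0$ on the compact set $\mc{W} \times \mc{B}$ is used to absorb the perturbations.
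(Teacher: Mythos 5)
Your proposal is correct and follows essentially the same route as the paper: the paper likewise encodes each conditional restriction as a function-valued (equivalently, test-function-indexed) constraint, computes the adjoint to obtain the dual constraint $\sum_k \lambda_k\phi_k + \sum_k \mu_k\psi_k + f_h \leq m$, optimizes out $f_h$ pointwise, and closes the duality gap via Theorem 4 of \citet{anderson1983} in case (A) and the interior-point condition of their Theorem 9 (using the strictly positive feasible density $p_0$) in case (B). The only cosmetic difference is your choice of bounded measurable $\mu_k$ where the paper works with $L^2(\mc{A}_k)$.
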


\begin{proof}

    The proof focuses on showing (\ref{eq.appendix.lb}). The same argument applies to (\ref{eq.appendix.ub}). 
    
    Let $\mc{M}_{W \times B}$ be the space of finite and countably additive signed Borel measures that are absolutely continuous with respect to the Lebesgue measure. Using absolute continuity, I identify each element of $\mc{M}_{W \times B}$ by its density $p : \mc{W} \times \mc{B} \mapsto \mb{R}$. Let $p_W$ be the density of the observed data distribution $P_W$. Then, the identified set $I$ is defined by
    \begin{displaymath}
        \begin{aligned}
            I \equiv \left\{ \int m(w, b) p(w,b) \right.& d(w,b) ~\Big|~ p \in \mc{M}_{W \times B}, \quad p \geq 0, \\
            & \int \phi_k(w, b) p(w,b)d(w,b) = 0, \quad k=1, \ldots, K_U, \\
            & \int \psi_k(a_k, a_k') p(a_k, a_k') da_k' = 0 ~\text{for all } a_k \in \mc{A}_k, \quad k=1, \ldots, K_C,\\
            & \left. \int p(w, b)db = p_W(w) ~\text{ for all } w \in \mc{W} \right\},
        \end{aligned}
    \end{displaymath}
    where $a_k$ is an element of $\mc{A}_k$ and $a_k'$ is an element of $\mc{A}_k'$. The second line above represents the unconditional moment restrictions, while the third line represents the conditional moment restrictions.
    
    The lower bound of $I$ is then given by the infinite-dimensional linear program
    \begin{equation}
        \begin{aligned}
            \min_{p \in \mc{M}_{W \times B}, ~p \geq 0} \int & m(w, b) p(w,b) d(w,b) \st \\
            & \int \phi_k(w, b) p(w,b) d(w,b) = 0, \quad k=1, \ldots, K_U, \\
            & \int \psi_k(a_k, a_k') p(a_k, a_k') da_k' = 0, ~\text{for all } a_k \in \mc{A}_k, \quad k=1, \ldots, K_C, \\
            & \int p(w, b)db = p_W(w) ~\text{ for all } w \in \mc{W}.
        \end{aligned}
        \label{eq.appendix.primal}
    \end{equation}
    
    Now I show that (\ref{eq.appendix.lb}) is the dual representation of (\ref{eq.appendix.primal}), by directly applying the duality theorem of linear programming for topological vector spaces \citep{anderson1983}, for which I introduce additional notation. Let $L^2(\mc{W} \times \mc{B})$ be the space of all $L^2$ functions on $\mc{W} \times \mc{B}$, and let $L^2(\mc{W})$ be the space of all $L^2$ functions on $\mc{W}$. I also let $L^2(\mc{A}_k)$ be the space of all $L^2$ functions on $\mc{A}_k$.
    
    Define $\mc{G}$ and $\mc{H}$ as $\mc{G} = \mc{H} = \mb{R}^K \times L^2(\mc{A}_1) \times \ldots \times L^2(\mc{A}_{K_C}) \times L^2(\mc{W})$. I denote their generic elements as $g = (g_1, \ldots, g_{K_U}, \overline g_1, \ldots, \overline g_{K_C}, f_g)$ and $h=(\lambda_1, \ldots, \lambda_{K_U}, \mu_1, \ldots, \mu_{K_C}, f_h)$, respectively. Note that $\mc{H}$ is a dual space of $\mc{G}$.
    
    Define a linear map $A: \mc{M}_{W \times B} \mapsto \mc{G}$ by
    \begin{displaymath}
        A(p) = \left( \int \phi_1 p~d(w,b), \ldots, \int \phi_K p~d(w,b), \int \psi_k p~da_1', \ldots, \int \psi_k p~da_{K_C}', \int p~db \right).
    \end{displaymath}
    The map $A$ is a bounded (thus continuous) linear operator because the functions $\phi_k$ and $\psi_k$ are assumed to be bounded. Next, define the dual pairing as
    \begin{displaymath}
        \langle A(P), h \rangle = \sum_{k=1}^{K_U} \lambda_k \int \phi_k p~d(w,vb)  + \sum_{k=1}^{K_C} \iint \psi_k p ~da_k'~ \mu_k da_k + \int f_h \int p db dw.
    \end{displaymath}
    It is straightforward to show that
    \begin{displaymath}
        \iint \psi_k p ~da_k'~ \mu_k da_k = \int \psi_k \mu_k p ~ d(w,b)
    \end{displaymath}
    and
    \begin{displaymath}
        \int f_h \int p db dw = \int f_h p ~ d(w,b).
    \end{displaymath}
    Then, I obtain
    \begin{equation}
        \langle A(P), h \rangle = \int \left[\sum_{k=1}^{K_U} \lambda_k \phi_k  + \sum_{k=1}^{K_C} \mu_k \psi_k + f_h \right] p(w,b) d(w,b)
        \equiv \langle p, A^*(h) \rangle,
        \label{eq.appendix.adjoint}
    \end{equation}
    where $A^*(h): \mc{H} \mapsto L^2(\mc{W} \times \mc{B})$ is defined as
    \begin{displaymath}
        A^*(h) = \sum_{k=1}^{K_U} \lambda_k \phi_k  + \sum_{k=1}^{K_C} \mu_k \psi_k + f_h.
    \end{displaymath}
    Equation (\ref{eq.appendix.adjoint}) shows that $A^*$ is the adjoint of $A$.
    
    Then, similar to the proof of \Cref{prop.gmm}, I apply the strong duality theorem of linear programming for topological vector spaces to (\ref{eq.appendix.primal}). Under the conditions of (A), I apply Theorem 4 of \citet{anderson1983}, similarly to the proof of \Cref{prop.gmm}. Under the conditions of (B), I use Theorem 9 of \citet{anderson1983}. The sufficient conditions for this Theorem 9 are satisfied as follows. First, \Cref{ass.appendix.regularity} (iv) verifies the condition that ``there is $x_0$ in the interior of $P$ with $Ax_0 = b$'' in Theorem 9 of \citet{anderson1983}. Second, \Cref{ass.appendix.regularity} (i)-(iii) ensures that the primal problem in (\ref{eq.appendix.primal}) possesses a finite solution, which verifies the condition that ``EP has finite value'' in Theorem 9 of \citet{anderson1983}. In addition, its implicit condition that $A$ is continuous is also satisfied. 
    
    Consequently, under either conditions of (A) or (B), the strong duality holds. Therefore, the optimal solution to (\ref{eq.appendix.primal}) is equal to the solution to
    \begin{equation}
        \max_{\lambda_1, \ldots, \lambda_{K_U}, \mu_1, \ldots, \mu_{K_C}, f_h} \int f_h(w) p_w(w) dw \st \sum_{k=1}^{K_U} \lambda_k \phi_k  + \sum_{k=1}^{K_C} \mu_k \psi_k + f_h \leq m.
        \label{eq.appendix.dual}
    \end{equation}
    Then, simplifying (\ref{eq.appendix.dual}) as in the proof of \Cref{prop.gmm} yields the expression in (\ref{eq.appendix.lb}).
    
\end{proof}

\begin{lemma}
    \label{lemma.appendix.gmm}
    Suppose that the conditions (B) of \Cref{prop.appendix.gmm} hold. %
    Then $\theta$ is point-identified if and only if there exists a function $S^*$ which is a linear functional on $\mc{M}_{W}$ (which is the projection of $\mc{M}_{W \times B}$ onto $\mc{W}$), real numbers $\lambda_1^*, \ldots, \lambda_{K}^* \in \mb{R}$, and functions $\mu_1^*, \ldots, \mu_K^*$ which are $L^2(\mc{A}_1), \ldots, L^2(\mc{A}_{K_C})$ functions, respectively, such that:
    \begin{displaymath}
        m(W_i, b) + \sum_{k=1}^{K_U}\lambda_k \phi_k(W_i, b) + \sum_{k=1}^{K_C}\mu_k(A_k(W_i, b))\psi_k(W_i, b) = S^*(W_i)
    \end{displaymath}
    almost surely on $\mc{W} \times \mc{B}$. When such $S^*$ exists, $\theta$ is identified by $\theta = \E(S^*(W_i))$.
\end{lemma}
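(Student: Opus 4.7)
The plan is to mirror the strategy used for \Cref{lemma.gmm}, adapting it to accommodate the conditional moment restrictions. The sufficiency direction follows directly by taking expectations, while the necessity direction exploits the strong duality established in \Cref{prop.appendix.gmm}(B) together with the strict positivity condition \Cref{ass.appendix.regularity}(iv). Throughout, I identify the ``linear functional on $\mc{M}_W$'' $S^*$ with its Riesz representer, a bounded Borel measurable function on $\mc{W}$.

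For sufficiency, suppose such $S^*$, $\lambda_k^*$, and $\mu_k^*$ exist. Taking expectations of both sides of the displayed identity, the unconditional Lagrangian terms $\sum_k \lambda_k^* \E(\phi_k(W_i, B_i))$ vanish by \Cref{ass.appendix.gmm}, and each conditional term vanishes by the law of iterated expectations:
\begin{displaymath}
    \E\bigl(\mu_k^*(A_{ik})\psi_k(W_i,B_i)\bigr) = \E\bigl(\mu_k^*(A_{ik})\E(\psi_k(W_i,B_i)\mid A_{ik})\bigr) = 0.
\end{displaymath}
Hence $\theta = \E(m(W_i,B_i)) = \E(S^*(W_i))$, which expresses $\theta$ as a functional of the observed marginal distribution of $W_i$ alone, proving point identification.

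For necessity, suppose $\theta$ is point-identified, so that $L = U = \theta$ by \Cref{prop.appendix.gmm}(B). Let $(\lambda_1^*, \ldots, \lambda_{K_U}^*, \mu_1^*, \ldots, \mu_{K_C}^*)$ attain the optimum in (\ref{eq.appendix.lb}), and define
\begin{displaymath}
    h(w, b) = m(w, b) + \sum_{k=1}^{K_U} \lambda_k^* \phi_k(w, b) + \sum_{k=1}^{K_C} \mu_k^*(A_k(w, b))\, \psi_k(w, b), \qquad S^*(w) = \min_{b \in \mc{B}} h(w, b).
\end{displaymath}
By construction $h(w, b) \geq S^*(w)$ pointwise on $\mc{W} \times \mc{B}$, and $\E(S^*(W_i)) = L = \theta$. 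On the other hand, letting $p_0$ be the strictly positive joint density from \Cref{ass.appendix.regularity}(iv), the same calculation as in the sufficiency direction yields $\E_{p_0}(h(W_i, B_i)) = \theta$. Therefore
\begin{displaymath}
    \E_{p_0}\bigl(h(W_i, B_i) - S^*(W_i)\bigr) = 0,
\end{displaymath}
and since the integrand is nonnegative and $p_0$ is strictly positive on $\mc{W} \times \mc{B}$, we conclude $h(w, b) = S^*(w)$ almost everywhere on $\mc{W} \times \mc{B}$, which is the desired identity.

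The main obstacle is securing attainment of the dual optimum in (\ref{eq.appendix.lb}), which is needed to obtain genuine functions $\mu_k^* \in L^2(\mc{A}_k)$ rather than merely an approximating sequence. This is where condition (B) of \Cref{prop.appendix.gmm} is essential: \Cref{ass.appendix.regularity}(iv) plays the role of a Slater-type interior-point condition in the strong duality argument via \citet[Theorem 9]{anderson1983}, delivering dual attainment in addition to zero duality gap. A secondary technical point is the near-equality $h = S^*$ passing from ``$p_0$-almost everywhere'' to ``Lebesgue almost everywhere,'' which is immediate because $p_0 > 0$ on the entire compact set $\mc{W} \times \mc{B}$.
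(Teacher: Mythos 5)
Your proof is correct and takes essentially the same route as the paper: both directions rest on the dual representation established in \Cref{prop.appendix.gmm} under condition (B), with the strictly positive feasible density of \Cref{ass.appendix.regularity}(iv) used to upgrade ``expectation of a nonnegative slack equals zero'' to equality Lebesgue-almost everywhere. The only cosmetic difference is that you argue necessity directly from $L=\theta$ via complementary slackness at the lower-bound dual optimizer, whereas the paper phrases the same step contrapositively using both bounds; the reliance on dual attainment (via Theorem 9 of \citet{anderson1983}) that you flag is equally implicit in the paper's argument.
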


\begin{proof}

    As in (\ref{eq.appendix.dual}) in the proof of \Cref{prop.appendix.gmm}, the sharp lower bound of $\theta$ is given by
    \begin{displaymath}
        \max_{\lambda_1, \ldots, \lambda_{K_U}, \mu_1, \ldots, \mu_{K_C}, f_h} \int f_h(w) p_w(w) dw \st \sum_{k=1}^{K_U} \lambda_k \phi_k  + \sum_{k=1}^{K_C} \mu_k \psi_k + f_h \leq m,
    \end{displaymath}
    where all notation follows the proof of \Cref{prop.appendix.gmm}. Similarly, the sharp upper bound of $\theta$ is given by
    \begin{displaymath}
        \min_{\lambda_1, \ldots, \lambda_{K_U}, \mu_1, \ldots, \mu_{K_C}, f_h} \int f_h(w) p_w(w) dw \st \sum_{k=1}^{K_U} \lambda_k \phi_k  + \sum_{k=1}^{K_C} \mu_k \psi_k + f_h \geq m.
    \end{displaymath}
    \Cref{lemma.appendix.gmm} then follows by replicating the argument in the proof of \Cref{lemma.gmm}.
    
\end{proof}

\subsection{Identified set for a general variance parameter}

\label{sec.appendix.var}

In this subsection, I consider identification of a general variance parameter. Recall the dynamic random coefficient model defined in (\ref{eq.crc}) and (\ref{eq.meanindep}):
\begin{displaymath}
    Y_{it} = R_{it}'B_i + \veps_{it}, \qquad \E(\veps_{it}|B_i, Z_i, X_i^t) = 0, \qquad t=1, \ldots, T,
\end{displaymath}
where $R_{it} = (Z_{it}', X_{it}')'$. I consider the second moments of the random coefficients:
\begin{displaymath}
    V_e = \E(e_1'B_iB_i'e_2)
\end{displaymath}
where $e_1$ and $e_2$ are real-valued vectors that the researcher chooses. For example, if $e_1 = e_2 = (1,0,\ldots,0)'$, then $V_e$ is the second moment of the first entry of $B_i$.

The key idea for identifying the mean in \Cref{sec.mean} is to consider a moment condition that is quadratic in $B_i$ so that it can ``dominate'' the linear term $e'B_i$. By the same idea, to ``dominate'' the term $e_1'B_iB_i'e_2$, I consider a moment condition that is fourth order in $B_i$. Specifically, I consider the moment restrictions
\begin{equation}
    \E\left(\sum_{t=1}^T (R_{it}'B_i)^3\veps_{it}\right) = 0, \qand \E\left(S_{it}\veps_{it}\right) = 0 \quad\text{for}\quad t=1, \ldots, T,
    \label{eq.orthogonal.variance}
\end{equation}
where I replaced the first moment restriction in (\ref{eq.orthogonal.closedform.refined}) with a fourth order restriction in $B_i$. A direct application of \Cref{prop.gmm} then yields the following bounds for $V_e$, which I state without proof.

\begin{proposition}
    \label{prop.variance}
    Suppose that \Cref{ass.crc,ass.mean.nomulticollinearity,ass.mean.nomulticollinearity.s,ass.proof.regularity} hold, and that Equation (\ref{eq.orthogonal.variance}) holds. Then $L_V \leq V_e \leq U_V$ where
    \begin{displaymath}
        L_V = \max_{\lambda > 0, ~\mu \in \mb{R}^L} 
        \E\left[ \min_{b \in \mc{B}} \left\{ e_1'bb'e_2 + \lambda \sum_{t=1}^T (R_{it}'b)^3(Y_{it} - R_{it}'b) + \mu'S_i(Y_i - R_ib)  \right\} \right],
    \end{displaymath}
    and
    \begin{displaymath}
        U_V = \min_{\lambda < 0, ~\mu \in \mb{R}^L}
        \E\left[ \max_{b \in \mc{B}} \left\{ e_1'bb'e_2 + \lambda \sum_{t=1}^T (R_{it}'b)^3(Y_{it} - R_{it}'b) + \mu'S_i(Y_i - R_ib) \right\} \right].
    \end{displaymath}
\end{proposition}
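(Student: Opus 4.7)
The proof is a direct application of Theorem \ref{prop.gmm}. The plan is to translate the target parameter and the moment restrictions in (\ref{eq.orthogonal.variance}) into the notation of that theorem, substitute into the bound formulas (\ref{eq.lb}) and (\ref{eq.ub}), and then restrict the outer optimization over the Lagrange multiplier $\lambda$ to the half-line stated in the proposition by inspecting the behaviour of the inner Lagrangian.

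In the notation of Theorem \ref{prop.gmm}, set $m(W_i, b) = e_1' b b' e_2$, so that $V_e = \E(m(W_i, B_i))$. The moment restrictions in (\ref{eq.orthogonal.variance}) comprise one scalar equation with moment function $\phi_1(W_i, b) = \sum_{t=1}^T (R_{it}'b)^3(Y_{it} - R_{it}'b)$ and scalar multiplier $\lambda$, together with $L = \sum_{t=1}^T \dim(S_{it})$ scalar equations collected into $S_i(Y_i - R_i b)$ with multiplier vector $\mu \in \mb{R}^L$. At $b = B_i$ each moment function vanishes in expectation by (\ref{eq.meanindep}), so \Cref{ass.gmm} holds, and \Cref{ass.proof.regularity} supplies the compactness and boundedness required to invoke Theorem \ref{prop.gmm}. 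Substituting $m$, $\phi_1$, and the entries of $S_i(Y_i - R_i b)$ directly into (\ref{eq.lb}) and (\ref{eq.ub}) reproduces the displayed expressions for $L_V$ and $U_V$, except that the outer optimization initially ranges over all $\lambda \in \mb{R}$.

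It remains to justify the half-line restriction on $\lambda$. Expanding $\phi_1$, the inner Lagrangian is a quartic polynomial in $b$ whose leading component is $-\lambda \sum_{t=1}^T (R_{it}'b)^4$, and \Cref{ass.mean.nomulticollinearity} guarantees that the quartic form $\sum_{t=1}^T (R_{it}'b)^4$ is strictly positive for $b \neq 0$ almost surely. Consequently the inner optimization over $b$ is controlled by an interior stationary point of the Lagrangian only when $\lambda$ has the sign that matches the leading coefficient to the direction of the inner optimization; the opposite sign forces the inner optimum to the boundary of $\mc{B}$ and drives its value to $\mp\infty$ as $\mc{B}$ is enlarged. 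Since the outer operations are a maximization in $L_V$ and a minimization in $U_V$, contributions from $\lambda$ of the wrong sign are dominated by those attained on the stated half-line, so the outer optimization can be restricted there without loss, and the inequalities $L_V \leq V_e \leq U_V$ are inherited from Theorem \ref{prop.gmm}. The main care point, and the most likely source of sign errors, is this final bookkeeping step that matches the orientation of the leading quartic coefficient to the direction of the inner minimization or maximization.
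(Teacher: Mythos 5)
Your proof takes the same route the paper intends: \Cref{prop.variance} is stated in the paper without proof as ``a direct application of \Cref{prop.gmm},'' and your translation of $m(W_i,b)=e_1'bb'e_2$ and the moment functions in (\ref{eq.orthogonal.variance}) into the theorem's notation is correct, with the inequalities $L_V \leq V_e \leq U_V$ following because restricting the outer optimization over $\lambda$ to a half-line can only shrink the feasible set of multipliers, so $L_V \leq L$ and $U_V \geq U$. The one step I would not let stand is your justification of the half-line: the leading term of the inner objective is $-\lambda\sum_{t=1}^T(R_{it}'b)^4$, so the sign for which the inner \emph{minimum} is governed by an interior stationary point (rather than driven to the boundary of $\mc{B}$) is $\lambda<0$, and for the inner \emph{maximum} it is $\lambda>0$ --- the opposite of the half-lines displayed in the statement. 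Your assertion that the displayed half-lines are where the outer optimum lies is therefore unsubstantiated and, by your own leading-coefficient reasoning, appears backwards; fortunately it is also unnecessary, because the proposition claims only validity of $[L_V,U_V]$, not sharpness, and validity holds for any restriction of the multiplier domain under the compactness in \Cref{ass.proof.regularity}. If you want to argue the restriction is without loss, you must actually locate the optimizer, and that argument would select the complementary half-lines.
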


In \Cref{prop.variance}, the optimization over $b$ requires global optimization of a fourth-order polynomial, for which no closed-form solution exists. Instead, this problem can be solved numerically using the semidefinite relaxation approach \citep{lasserre2010,lasserre2015}, which transforms the polynomial optimization problem into a convex optimization problem over a variable that is a semidefinite matrix.

\subsection{Overidentification in the general inference procedure}

\label{sec.inference.general.overid}

In this subsection, I discuss a simple heuristic modification of the general inference procedure in \Cref{sec.inference.general} that practically deals with the overidentification, assuming that the model is correctly specified. Note that this heuristic method may yield spuriously narrow confidence intervals for $\theta$ even if the model is partially identified. This occurs because the procedure does not formally account for overidentification or misspecification as in \citet{stoye2020} and \citet{andrews2024}. %
Nonetheless, I check the performance of the heuristic procedure via simulation at the end of this subsection, and I find that it achieves reasonable coverage rates.

I first note that the inference procedure of \citet{andrews2017} is still implementable under overidentification, since $T_{AS}(\theta)$ and $c_{AS,GMS}^{(b)}(\theta)$ can still be computed for each fixed $\theta$. The practical difficulty, however, is that the %
finite-sample optimizers $\hat \lambda_N^L$ and $\hat \lambda_N^U$ diverge, making it hard to search for the supremum in the neighborhoods of $\hat \lambda_N^L$ and $\hat \lambda_N^U$. To resolve this issue, I propose to compute the finite-sample optimizers with $L^1$ penalties:
\begin{equation}
    \begin{aligned}
        \tilde \lambda_N^L(\zeta) &= \max_{\lambda \in \mb{R}^K} \left[
    \frac{1}{N}\sum_{i=1}^N G_L(\lambda, W_i) - \zeta\sum_{k \in K_0} |\lambda_k| \right], \\
        \tilde \lambda_N^U(\zeta) &= \min_{\lambda \in \mb{R}^K} \left[
    \frac{1}{N}\sum_{i=1}^N G_U(\lambda, W_i) + \zeta\sum_{k \in K_0} |\lambda_k| \right],
    \end{aligned}
    \label{eq.inf.pen.bounds}
\end{equation}
where $\zeta > 0$ is the penalty parameter and $K_0 \subseteq \{1, \ldots, K\}$ is the index set of the penalized moment restrictions. In practice, one may simply take $K_0 = \{1, \ldots, K\}$. 

This modification has the following econometric interpretation. For $\zeta \geq 0$, consider the following relaxation of the moment conditions:
\begin{equation}
    \E(\phi_k(W_i, B_i)) = 0 \text{ for all } k \notin K_0 \qand
    |\E(\phi_k(W_i, B_i))| \leq \zeta \text{ for all } k \in K_0,
    \label{eq.relaxation}
\end{equation}
which reduces to \Cref{ass.gmm} if $\zeta = 0$. This is similar in spirit to the minimally relaxed identified set in \citet{andrews2024}. The following proposition then characterizes the smallest $\zeta \geq 0$ for which the finite sample satisfies the relaxed moment restrictions in (\ref{eq.relaxation}), therefore resolving the overidentification issue.

\begin{proposition}
    \label{prop.minimumDistance}
    Given the sample $(W_1, \ldots, W_N)$, consider the linear program
    \begin{equation}
        \begin{aligned}
            \min_{P \in \mc{M}_{W \times B}, ~P \geq 0, ~\zeta \geq 0} \zeta \st
            & \left|\int \phi_k(W_i, B_i) dP\right| \leq \zeta, \quad k \in K_0, \\
            & \int \phi_k(W_i, B_i) dP = 0, \quad k \notin K_0, \\
            & \int P(w, dB_i) = \hat P_W(w) ~\text{ for all } w \in \mc{W},
        \end{aligned}
        \label{eq.minimumDistance.primal}
    \end{equation}
    where $\hat P_W$ is the empirical distribution of $W_i$ constructed from $(W_1, \ldots, W_N)$. Then, under \Cref{ass.proof.regularity}, its solution equals to the solution to:
    \begin{equation}
        \max_{\lambda \in \mb{R}^K}
        \frac{1}{N}\sum_{i=1}^N \min_{b \in \mc{B}} \left\{ \sum_{k=1}^{K}\lambda_k \phi_k(W_i, b) \right\}
        \st \sum_{k \in K_0} |\lambda_k| \leq 1.
        \label{eq.minimumDistance}
    \end{equation}
\end{proposition}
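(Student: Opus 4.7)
The plan is to obtain \Cref{prop.minimumDistance} by the same infinite-dimensional linear programming duality that underlies \Cref{prop.gmm}, adapted to the specific structure of (\ref{eq.minimumDistance.primal}). First, I would rewrite the primal in standard form by splitting each absolute-value constraint $|\int \phi_k dP|\leq \zeta$ (for $k\in K_0$) into the pair $\int \phi_k dP-\zeta\leq 0$ and $-\int \phi_k dP-\zeta\leq 0$, leaving $\int \phi_k dP=0$ as an equality for $k\notin K_0$, and retaining the marginal-matching constraint $\int P(w,db)=\hat P_W(w)$ on $\mc{W}$ together with $P\geq 0$ and $\zeta\geq 0$. The decision variable is the pair $(P,\zeta)\in\mc{M}_{W\times B}\times\mb{R}$, and the objective is the linear functional $\langle(P,\zeta),(0,1)\rangle=\zeta$.

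Next, I would assemble the Lagrangian with multipliers $\alpha_k^+,\alpha_k^-\geq 0$ for $k\in K_0$, a real multiplier $\beta_k$ for $k\notin K_0$, and a function $f_h\in\overline{\mc{F}}_W$ for the marginal constraint. Taking the infimum over $\zeta\geq 0$ produces the dual constraint $\sum_{k\in K_0}(\alpha_k^+ + \alpha_k^-)\leq 1$ (otherwise the Lagrangian is unbounded below in $\zeta$), and taking the infimum over $P\geq 0$ produces the pointwise constraint $\sum_k\lambda_k\phi_k(w,b)+f_h(w)\geq 0$ on $\mc{W}\times\mc{B}$, where I set $\lambda_k=\alpha_k^+-\alpha_k^-$ for $k\in K_0$ and $\lambda_k=\beta_k$ otherwise. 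Since the minimal value of $\alpha_k^+ + \alpha_k^-$ for a given $\lambda_k$ is $|\lambda_k|$, the first constraint collapses to $\sum_{k\in K_0}|\lambda_k|\leq 1$. The resulting dual is
\begin{displaymath}
    \max_{\lambda,\,f_h}\ -\!\int f_h\,d\hat P_W \st \sum_{k=1}^K\lambda_k\phi_k(w,b)+f_h(w)\geq 0,\quad \sum_{k\in K_0}|\lambda_k|\leq 1,
\end{displaymath}
and pointwise optimization over $f_h$ (flipping signs via $\tilde f_h=-f_h$ exactly as in the proof of \Cref{prop.gmm}) yields $\tilde f_h(w)=\min_{b\in\mc{B}}\sum_k\lambda_k\phi_k(w,b)$, producing precisely (\ref{eq.minimumDistance}) after noting that integration against $\hat P_W$ is the sample average $\frac{1}{N}\sum_{i=1}^N$.

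The remaining step is to verify that strong duality holds, which is the main technical point. I would invoke Theorem~4 of \citet{anderson1983} in the same manner as the proof of \Cref{prop.gmm}: \Cref{ass.proof.regularity} supplies compactness of $\mc{W}\times\mc{B}$ and boundedness of the $\phi_k$, so the operator mapping $(P,\zeta)$ to the constraint tuple is bounded and hence continuous, and the cone $\{(P,\zeta):P\geq 0,\zeta\geq 0\}$ is closed. Feasibility of the primal is immediate by taking $P=\hat P_W\otimes\delta_{b_0}$ for any $b_0\in\mc{B}$ and $\zeta$ sufficiently large, so the primal value is finite, and the dual value is trivially finite because its feasible set is bounded (by the $\ell^1$ ball on the penalized coordinates) and the inner $\min_b$ is finite by \Cref{ass.proof.regularity}. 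These hypotheses suffice for Anderson and Nash's no-gap theorem, giving equality of the primal and dual optima. The main obstacle, as in \Cref{prop.gmm}, is this verification of the duality conditions for the infinite-dimensional program, but the argument is essentially a direct adaptation and I expect no new difficulties beyond carefully carrying the extra scalar variable $\zeta$ through the Lagrangian calculus.
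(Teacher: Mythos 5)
Your proposal is correct and follows essentially the same route as the paper: split each constraint $|\int \phi_k\,dP|\leq\zeta$ into a pair of one-sided inequalities and invoke the Anderson--Nash duality theorem with inequality constraints, with the $\ell^1$ ball emerging from minimizing $\alpha_k^++\alpha_k^-$ at fixed $\lambda_k=\alpha_k^+-\alpha_k^-$; the paper states this in two sentences and you supply the Lagrangian details it elides. The only imprecision is your claim that the dual feasible set is bounded --- the coordinates $\lambda_k$ for $k\notin K_0$ are unconstrained --- but finiteness of the dual value still follows from weak duality together with primal feasibility, so nothing is lost.
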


\begin{proof}
    I can rewrite (\ref{eq.minimumDistance.primal}) as:
    \begin{displaymath}
        \begin{aligned}
            \min_{P \in \mc{M}_{W \times B}, ~P \geq 0, ~\zeta \geq 0} \zeta \st
            & \int \phi_k(W_i, B_i) dP \leq \zeta, \quad k \in K_0, \\
            & \int \phi_k(W_i, B_i) dP \geq -\zeta, \quad k \in K_0, \\
            & \int \phi_k(W_i, B_i) dP = 0, \quad k \notin K_0, \\
            & \int P(w, dB_i) = \hat P_W(w) ~\text{ for all } w \in \mc{W},
        \end{aligned}
    \end{displaymath}
    Then, similarly to the proof of \Cref{prop.gmm}, I can invoke the dualty theorem with inequality constraints \citep{anderson1983} to obtain (\ref{eq.minimumDistance}) as the simplified dual representation of (\ref{eq.minimumDistance.primal}).
\end{proof}

Let $\zeta^*$ be the solution to (\ref{eq.minimumDistance}). Then, for any $\zeta \geq \zeta^*$, it follows that the estimated bounds for $\theta$ under the $\zeta$-relaxed moment restrictions in (\ref{eq.relaxation}) are given by the $L^1$-penalized optimizations. The following proposition shows this result for the lower bound. The result for the upper bound holds similarly.

\begin{proposition}
    \label{prop.penalty}
    Given the sample $(W_1, \ldots, W_N)$ and given $\zeta \in \mb{R}$, consider the linear program that finds the smallest value of $\theta = \E(m(W_i, B_i))$ that satisfies (\ref{eq.relaxation}):
    \begin{equation}
        \begin{aligned}
            \min_{P \in \mc{M}_{W \times B}, ~P \geq 0} \int m(W_i, B_i) dP \st
            & \left|\int \phi_k(W_i, B_i) dP\right| \leq \zeta, \quad k \in K_0, \\
            & \int \phi_k(W_i, B_i) dP = 0, \quad k \notin K_0, \\
            & \int P(w, db) = \hat P_W(w) ~\text{ for all } w \in \mc{W},
        \end{aligned}
        \label{eq.lb.est.pen.primal}
    \end{equation}
    where $\hat P_W$ is the empirical distribution of $W_i$ constructed from $(W_1, \ldots, W_N)$. Then, under \Cref{ass.proof.regularity}, its solution equals to 
    \begin{equation}
        \tilde L = \max_{\lambda \in \mb{R}^K} \left[
        \frac{1}{N}\sum_{i=1}^N G_L(\lambda, W_i) - \zeta\sum_{k \in K_0} |\lambda_k| \right].
        \label{eq.lb.est.pen}
    \end{equation}
\end{proposition}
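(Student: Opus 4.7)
\textbf{Proof proposal for Proposition \ref{prop.penalty}.} The plan is to follow the duality argument used in the proof of Theorem \ref{prop.gmm}, modified to accommodate the two-sided inequality constraints associated with the penalized moments. First, I would rewrite each constraint $|\int \phi_k \, dP| \leq \zeta$ for $k \in K_0$ as two one-sided linear inequalities, $\int \phi_k \, dP \leq \zeta$ and $-\int \phi_k \, dP \leq \zeta$, so that (\ref{eq.lb.est.pen.primal}) becomes a standard linear program over $\mathcal{M}_{W \times B}$ with a mixture of equality and non-negativity/inequality constraints and the finite-sample marginal constraint determined by $\hat P_W$.

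Next, I would set up the corresponding adjoint operator exactly as in the proof of Theorem \ref{prop.gmm}, with two dual variables $\lambda_k^+, \lambda_k^- \geq 0$ attached to the two one-sided inequalities for each $k \in K_0$, an unconstrained real dual $\lambda_k$ attached to the equality constraints for $k \notin K_0$, and a function $f_h \in \overline{\mathcal F}_W$ dual to the marginal constraint. Invoking the strong duality theorem for linear programs over topological vector spaces \citep{anderson1983} with inequality constraints, under \Cref{ass.proof.regularity} and the feasibility of (\ref{eq.lb.est.pen.primal}) (which uses that $\hat P_W$ is a valid empirical distribution), gives the dual
\begin{displaymath}
    \max_{\substack{\lambda_k^\pm \geq 0,~ k \in K_0 \\ \lambda_k \in \mathbb{R},~ k \notin K_0 \\ f_h \in \overline{\mathcal F}_W}} \left[\int f_h \, d\hat P_W - \zeta \sum_{k \in K_0}(\lambda_k^+ + \lambda_k^-)\right]
    \quad\text{s.t.}\quad
    \sum_{k \in K_0}(\lambda_k^+ - \lambda_k^-)\phi_k + \sum_{k \notin K_0}\lambda_k \phi_k + f_h \leq m.
\end{displaymath}
As in the proof of Theorem \ref{prop.gmm}, the optimal $f_h$ saturates the constraint pointwise, which together with $\int f_h \, d\hat P_W = \tfrac{1}{N}\sum_{i=1}^N f_h(W_i)$ produces the inner minimum over $b \in \mathcal B$ inside $G_L$ evaluated at the effective multiplier.

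The final step is the reparameterization $\lambda_k = \lambda_k^+ - \lambda_k^-$ for $k \in K_0$. For any real $\lambda_k$, the pair $(\lambda_k^+, \lambda_k^-) = (\max\{\lambda_k, 0\}, \max\{-\lambda_k, 0\})$ is non-negative and achieves $\lambda_k^+ + \lambda_k^- = |\lambda_k|$, which is optimal for the objective because larger values of $\lambda_k^+ + \lambda_k^-$ only worsen the penalty without relaxing the constraint (which depends only on $\lambda_k^+ - \lambda_k^-$). This substitution reduces the penalty term to $-\zeta \sum_{k \in K_0}|\lambda_k|$ and leaves the inner problem unchanged, yielding exactly (\ref{eq.lb.est.pen}) after recognizing the empirical expectation of $G_L(\lambda, W_i)$.

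The main obstacle, as in the proof of Theorem \ref{prop.gmm}, is to verify the hypotheses of the Anderson--Nash duality theorem in this inequality-constrained setting, in particular the closedness of the dual feasible set and the existence of a finite optimal value so that no duality gap arises. Since the primal problem differs from (\ref{eq.primal}) only in that a finite number of equality constraints are relaxed to inequalities and the target marginal is the empirical $\hat P_W$ rather than the population $P_W$, the same verification used in the proof of Theorem \ref{prop.gmm} goes through with essentially no modification; feasibility of the primal for $\zeta \geq \zeta^*$ (with $\zeta^*$ from \Cref{prop.minimumDistance}) ensures that the dual value is finite.
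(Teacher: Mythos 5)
Your proposal is correct and follows essentially the same route as the paper's proof: rewrite each absolute-value constraint as two one-sided inequalities and invoke the Anderson--Nash duality theorem with inequality constraints, exactly as in the proof of Theorem \ref{prop.gmm}. The only difference is that you spell out the reparameterization $\lambda_k = \lambda_k^{+} - \lambda_k^{-}$ with $\lambda_k^{+} + \lambda_k^{-} = |\lambda_k|$ at the optimum, a step the paper leaves implicit; this is a correct and useful elaboration, not a different argument.
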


\begin{proof}
    I can rewrite (\ref{eq.lb.est.pen.primal}) as:
    \begin{displaymath}
        \begin{aligned}
            \min_{P \in \mc{M}_{W \times B}, ~P \geq 0} \int m(W_i, B_i) dP \st
            & \int \phi_k(W_i, B_i) dP \leq \zeta, \quad k \in K_0, \\
            & \int \phi_k(W_i, B_i) dP \geq -\zeta, \quad k \in K_0, \\
            & \int \phi_k(W_i, B_i) dP = 0, \quad k \notin K_0, \\
            & \int P(w, db) = \hat P_W(w) ~\text{ for all } w \in \mc{W}.
        \end{aligned}
    \end{displaymath}
    Then, similarly to the proof of \Cref{prop.gmm}, I can invoke the dualty theorem with inequality constraints \citep{anderson1983} to obtain (\ref{eq.lb.est.pen}) as the simplified dual representation of (\ref{eq.lb.est.pen.primal}).
\end{proof}

\Cref{prop.penalty} implies that the $L^1$-penalized finite sample optimizers $\tilde \lambda_N^L(\zeta)$ and $\tilde \lambda_N^U(\zeta)$ defined in (\ref{eq.inf.pen.bounds}) are precisely the maximizer in (\ref{eq.lb.est.pen}) for the lower bound and and the corresponding minimizer for the upper bound. I then implement the procedure of \citet{andrews2017} with a modification that I restrict the supremum to be calculated over the neighborhoods of $\tilde \lambda_N^L(\zeta)$ and $\tilde \lambda_N^U(\zeta)$ rather than the entire $Q^K$ space of $\lambda$. Note that, if the supremum were taken over all of $Q^K$, both $T_{AS}(\theta)$ and $c_{AS,GMS}^{(b)}(\theta)$ would all diverge. As overidentification is resolved as $N \rightarrow \infty$, the supremum over $Q^K$ becomes finite and the original procedure of \citet{andrews2017} can be applied.

In what follows, I check the performance of this heuristic method by simulation. Consider a data generating process
\begin{displaymath}
    Y_{it} = \alpha_i + \beta_i Y_{i,t-1} + \veps_{it}, \quad t=1, \ldots, T,
\end{displaymath}
with $T=10$, where $\alpha_i \sim Unif[-3,3]$, $\beta_i \sim Unif[0,1]$, and $\veps_{it} \sim N(0,1)$ are independent random variables. I generate the initial value by $Y_{i0} = \beta_i + z_i$, where $z_i \sim N(0,1)$ is another independent random variable. I construct the confidence interval for $\E(\beta_i)$ using the moment conditions in (\ref{eq.orthogonal.closedform.refined}), where I set $S_{it} = \{1, Y_{i,\max\{0, t-5\}}, \ldots, Y_{i,t-1}\}$ and relax the first moment restriction $\E\left(\sum_{t=1}^T (R_{it}'B_i)\veps_{it}\right) = 0$, which means that I take $K_0 = \{1\}$. To circumvent the computational difficulty in calcuating the population identified set of this model, I generate $100,000$ observations from this model and treat them as a finite population, for which the population identified set can be calculated using the bounds in \Cref{prop.mean.closedform.refined}.

Under this setup, I implement the heuristic modification of \citet{andrews2017} where I calculate the supremum over a finite grid of $L$ points in the neighborhoods of $\tilde \lambda_N^L(\zeta)$ and $\tilde \lambda_N^U(\zeta)$. I obtain this grid by adding Gaussian perturbations to $\tilde \lambda_N^L(\zeta)$ and $\tilde \lambda_N^U(\zeta)$, while including these optimizers themselves. I then obtain the critical values via $100$ bootstrap replications, for a nominal coverage rate of $0.9$. The simulated coverage rate is obtained by $1000$ Monte Carlo replications.

\Cref{table.simulation} reports the simulated coverage rates for various sample sizes $N$ and the grid sizes $L$. The results show that, as long as $L$ is sufficiently large, the heuristic confidence interval achieves reasonable coverage rates.

\begin{table}[!htbp]
    \centering %
    \begin{tabular}{c | c c c} %
    \hline\hline %
     & $L=50$ & $L=100$ & $L=200$ \\ %
    \hline\hline %
    $N=500$ & 0.878 & 0.890 & 0.932 \\
    $N=1000$ & 0.905 & 0.918 & 0.944 \\
    \hline %
    \end{tabular}
    \caption{Simulated coverage rates for the heuristic inference procedure, where the supremum is evaluated with a total of $L$ finite number of points in the neighborhoods of $\tilde \lambda_N^L(\zeta)$ and $\tilde \lambda_N^U(\zeta)$. The nominal coverage probability is $0.9$.}
    \label{table.simulation}
\end{table}

\subsection{Estimations with transitory income processes}

\label{sec.appendix.deconvolution}

In this subsection, I examine robustness of the results in \Cref{sec.application} to existence of a transitory income process. In particular, I examine if the confidence intervals for $\E(\rho_i)$ change if there is a transitory income process added to (\ref{eq.application.incomeprocess}). 

To estimate the confidence intervals in the presence of a transitory income process, I consider the model $Y_{it} = \tilde Y_{it} + \veps_{it}$, where $Y_{it}$ is the raw log-earnings data, $\tilde Y_{it}$ is the log-earnings without the transitory income process, and $\veps_{it}$ is an i.i.d. Gaussian transitory income process. Note that the models in the main text do not involve transitory income processes, which corresponds to $\veps_{it} = 0$.

I conduct estimation and inference for $\E(\rho_i)$ in the presence of $\veps_{it}$ in a two-step procedure. First, I use the approach \citet{arellano2021} to numerically recover pseudo-observations of $\tilde Y_{it}$ when $Y_{it}$ is observed and the distribution of $\veps_{it}$ is known. Second, I apply estimation procedures for the RIP-RC, HIP-RC, RIP-RC-J, and HIP-RC-J models to the numerically recovered pseudo-observations of $\tilde Y_{it}$. The estimation results from this procedure are presented in \Cref{table.application.mean.appendix.denoising}. The estimation results are qualitatively similar to those in \Cref{table.application.mean} in the main text. In particular, the upper confidence limits of $\E(\rho_i)$ are significantly less than 1, and the confidence intervals for the RIP and the HIP processes show substantial overlap.

\begin{table}[!htbp]
    \centering %
    \begin{tabular}{c c c c c} %
    \hline\hline %
    Parameter & RIP-RC & HIP-RC & RIP-RC-J & HIP-RC-J \\ %
    \hline\hline %
    $\E(\rho_i)$           & [0.424, 0.597] & [0.267, 0.572] & [0.450, 0.583] & [0.253, 0.529] \\
    \hline %
    \end{tabular}
    \caption{Confidence intervals of $\E(\rho_i)$ for the RIP type and the HIP type processes with heterogeneous coefficients, after obtaining pseudo-observations of $\tilde Y_{it}$ without the transitory income process using the method of \citet{arellano2021}. The nominal coverage probability is $0.95$. These confidence intervals are robust to overidentification and model misspecification.}
    \label{table.application.mean.appendix.denoising}
\end{table}

In what follows, I describe in detail the procedure that I used to numerically recover $\tilde Y_{it}$ from the model $Y_{it} = \tilde Y_{it} + \veps_{it}$. I first describe the method proposed in \citet{arellano2021}. They considered a model
\begin{equation}
    Z = X + \veps
    \label{eq.appendix.abmodel}
\end{equation}
where all variables are scalar\footnote{They also consider a more general case of multivariate factor models.} and $X$ is independent of $\veps$. In this model, $Z$ is observed, but $X$ and $\veps$ are not observed. Instead, the distribution of $\veps$ is known. The objective of \citet{arellano2021} is to obtain pseudo-observations of $X$, given the observations of $Z$ and the knowledge on the distribution of $\veps$.

Let $\pp_Z$, $\pp_X$ and $\pp_\veps$ be the probability distributions of $Z$, $X$ and $\veps$, respectively. Let $\pp_{X+\veps}$ be the distribution of $X+\veps$, which is equal to the convolution of $\pp_X$ and $\pp_\veps$. The second-order Wasserstein distance between $Z$ and $X + \veps$, denoted by $W_2(\pp_Z, \pp_{X+\veps})$, is defined by:
\begin{equation}
    W_2(\pp_Z, \pp_{X+\veps}) = \left( \min_{\pi \in \Pi(\pp_Z, \pp_{X+\veps})} \int || z - \hat z ||^2 d\pi(z, \hat z) \right)^{1/2},
    \label{eq.appendix.wasserstein}
\end{equation}
where $\Pi(\pp_Z, \pp_{X+\veps})$ is the set of couplings of $\pp_Z$ and $\pp_{X+\veps}$, i.e., joint distributions of $Z$ and $X + \veps$ whose marginal distributions are $\pp_Z$ and $\pp_{X+\veps}$. It is known that (\ref{eq.appendix.wasserstein}) is a metric for convergence in distribution among distributions with finite second moments, which means that it satisfies the axioms of distance and that $W_2(\nu_k, \mu) \rightarrow 0$ if and only if $\nu_k \overset{d}{\rightarrow} \mu$. Then, (\ref{eq.appendix.abmodel}) implies that
\begin{displaymath}
    W_2(\pp_Z, \pp_{X+\veps}) = 0.
\end{displaymath}
Based on this result, \citet{arellano2021} obtain pseudo-observations of $X$ by minimizing the sample version of (\ref{eq.appendix.wasserstein}).

I apply their approach to the panel data setting to obtain pseudo-observations of $\tilde Y_{it}$. I assume that $\veps_{it}$ follows an i.i.d. zero-mean Gaussian distribution such that $\var(\veps_{it}) = 0.047$, which is the variance estimate of the transitory income process in \citet{guvenen2009}. I then simulate $K = 200$ i.i.d. draws of $(\veps_{i1}, \ldots, \veps_{iT})$:
\begin{displaymath}
    \veps_{k} = (\veps_{k1}, \ldots, \veps_{kT}) \in \mb{R}^T, \quad k=1, \ldots, K.
\end{displaymath}
Then, given the initial values of $\tilde Y_{it}$, denoted by
\begin{displaymath}
    \tilde Y_i = (\tilde Y_{i1}, \ldots, \tilde Y_{iT}) \in \mb{R}^T, \quad i=1, \ldots, N,
\end{displaymath}
I obtain the \emph{synthetic} data of $Y_{it}$ by calculating: 
\begin{equation}
    \hat Y_{ik} = \tilde Y_i + \veps_k \in \mb{R}^T, \quad i=1, \ldots, N, \quad k=1, \ldots, K,
    \label{eq.appendix.convolution}
\end{equation}
where the synthetic data $\hat Y_{it}$ has size $NK$. Note that (\ref{eq.appendix.convolution}) computes a convolution of $\tilde Y_i$ and $\veps_{it}$, because the distribution of $\hat Y_{ik}$ is equal to the convolution of the empirical distribution of $\tilde Y_i$ and the empirical distribution of $\veps_k$.

I then compare the distribution of $\hat Y_{ik}$ with the observed distribution of $Y_{it}$. Let $\hat P_{\hat Y}$ and $\hat P_{Y}$ be the empirical distributions of $\hat Y_{ik}$ and $Y_i$, respectively. Then the (squared) second-order Wasserstein distance between the synthetic and the observed data is given by:
\begin{displaymath}
    \begin{aligned}
        W_2^2(\hat P_{Y}, \hat P_{\hat Y}) &= \min_{0 \leq p_{ijk} \leq 1} \sum_{i=1}^N \sum_{j=1}^N \sum_{k=1}^K p_{ijk} || Y_i - \hat Y_{jk} ||^2 \\
        &\st \sum_{i=1}^N p_{ijk} = 1, \quad \sum_{j=1}^N\sum_{k=1}^K p_{ijk} = 1,
    \end{aligned}
\end{displaymath}
which is the sample analogue of (\ref{eq.appendix.wasserstein}) and takes the form of a linear program. I then obtain pseudo-observations of $\tilde Y_{it}$, denoted by $\tilde Y_i = (\tilde Y_{i1}, \ldots, \tilde Y_{iT})$ for $i=1, \ldots, N$, by:
\begin{displaymath}
    \{\tilde Y_i\}_{i=1}^N = \argmin_{\tilde Y_1, \ldots, \tilde Y_N} W_2^2(\hat P_{Y}, \hat P_{\hat Y}),
\end{displaymath}
which can be shown to be a convex optimization problem. The minimizer of this problem is then the pseudo-observations $\tilde Y_i$.

\end{appendices}

\end{document}